\newcolumntype{P}[1]{>{\centering\arraybackslash}p{#1}}
\newcommand{\tikznode}[2]{\relax
	\ifmmode%
	\tikz[remember picture,baseline=(#1.base),inner sep=0pt] \node (#1) {$#2$};
	\else
	\tikz[remember picture,baseline=(#1.base),inner sep=0pt] \node (#1) {#2};%
	\fi}
\newcommand{\addbar@}[3]{%
	\makebox[0pt][l]{%
		\raisebox{#1}[0pt][0pt]{%
			\kern#2
			\scalebox{#3}[0.8]{$\m@th\mathchar"84$}%
		}%
	}%
}
\DeclareRobustCommand{\lambdabar}{\text{\addbar@{0.1ex}{0.18em}{1}}\lambda}
\newlist{enum-hypothesis}{enumerate}{1}
\setlist[enum-hypothesis]{label=(\arabic*),itemsep=0pt, parsep=0pt}
\setlist[enumerate,1]{label=\arabic*., ref=\arabic*, topsep=1pt, itemsep=2pt, parsep=0pt, leftmargin=1.5em, itemindent=0em, labelsep=0.2em, labelwidth=1.3em}
\setlist[enumerate,2]{label=\alph*., ref=\theenumi.\alph*, topsep=1pt, itemsep=2pt, parsep=0pt, leftmargin=0.5em, itemindent=0em, labelsep=0.2em, labelwidth=1.5em}
\setlist[enumerate,3]{label=\roman*., ref=\theenumii.\roman*, topsep=1pt, itemsep=2pt, parsep=0pt, leftmargin=0.5em, itemindent=0em, labelsep=0.2em, labelwidth=1.2em}
\newtheorem{theorem}{Theorem}[section]
\newtheorem{proposition}[theorem]{Proposition}
\newtheorem{definition}[theorem]{Definition}
\theoremstyle{plain}
\newtheorem{remark}[theorem]{Remark}
\theoremstyle{break}
\theoremstyle{nonumberplain}
\newtheorem{proof}{Proof}
\newcommand{\Man}{\mathcal{M}}
\newcommand{\CM}{C^\infty(\Man)}
\newcommand{\SpinBun}{S} 
\newcommand{\bbbone}{{\text{\usefont{U}{bbold}{m}{n}\char49}}} 
\newcommand{\fbb}{\bbbone_F}
\newcommand{\mbb}{\bbbone_{2^{m}}}
\newcommand{\calA}{\mathcal{A}}
\newcommand{\calB}{\mathcal{B}}
\newcommand{\calC}{\mathcal{C}}
\newcommand{\calG}{\mathcal{G}}
\newcommand{\calH}{\mathcal{H}}
\newcommand{\calI}{\mathcal{I}}
\newcommand{\calK}{\mathcal{K}}
\newcommand{\calL}{\mathcal{L}}
\newcommand{\calM}{\mathcal{M}}
\newcommand{\calO}{\mathcal{O}}
\newcommand{\calS}{\mathcal{S}}
\newcommand{\calU}{\mathcal{U}}
\newcommand{\algA}{\calA}
\newcommand{\defeq}{\vcentcolon=} 
\DeclareMathOperator{\ad}{ad}
\DeclareMathOperator{\Aut}{Aut}
\DeclareMathOperator{\card}{card}
\DeclareMathOperator{\Diff}{Diff} 
\DeclareMathOperator{\Inn}{Inn}
\DeclareMathOperator{\tr}{tr}	   
\DeclareMathOperator{\Tr}{Tr}	   
\newcommand{\DM}{D_\Man}
\newcommand{\lag}{{\calL}} 
\newcommand{\act}{\calS} 
\newcommand{\US}{S} 
\newcommand{\Sp}{\mathcal{S}} 
\newcommand{\Dir}{D} 
\newcommand{\inn}{{\langle\, \cdot  \, , \, \cdot  \, \rangle}}
\newcommand{\tw}{K}
\newcommand{\inntw}{{\langle\, \cdot \, , \, \cdot  \, \rangle_\tw}}
\newcommand{\Tadj}{{\dagger_{\tiny \tw}}}
\newcommand{\adp}{{\dagger_{\tiny p}}}
\newcommand{\KDir}{\Dir^\tw}
\newcommand{\FDir}{\Dir_F}
\newcommand{\JF}{J_F}
\newcommand{\JH}{{\hat{J}}}
\newcommand{\CH}{{\hat{C}}}
\newcommand{\GF}{\Gamma_F}
\newcommand{\kgm}{\gamma_\tw}
\newcommand{\tgm}{\tilde\gamma}
\newcommand{\tm}{{T\Man}}
\newcommand{\txm}{{T_x\Man}}
\newcommand{\twx}{\tw_x}
\newcommand{\AF}{\calA_F}
\newcommand{\HF}{\calH_F}
\newcommand{\utw}{U_\tw}
\newcommand{\g}{g} 
\newcommand{\tg}{{\tilde{\g}}}
\newcommand{\gr}{\g_{\scriptscriptstyle R}} 
\newcommand{\gk}{\g_{\scriptscriptstyle \tw}} 
\newcommand{\tad}{{\widetilde{\ad}}} 
\newcommand{\tphi}{{\widetilde{\phi}}} 
\newcommand{\ovg}{\calO(V, \g)} 
\newcommand{\opvg}{\calO_{\bullet}(V, \g)} 
\newcommand{\gl}{{GL}} 
\newcommand{\glv}{\gl(V)} 
\newcommand{\glpv}{\gl_{\bullet}(V)}
\newcommand{\cl}{c}
\newcommand{\tcl}{\tilde{c}}
\newcommand{\twphi}{{ \Phi^{\,\tw}}}
\newcounter{mnotecount}[section]
\renewcommand{\themnotecount}{\thesection.\arabic{mnotecount}}
\newcommand{\mnote}[1]%
{\protect{\stepcounter{mnotecount}}${}^{\text{\footnotesize$\bullet$\themnotecount}}$%
	\reversemarginpar%
	\marginpar{\raggedleft\footnotesize$\bullet$\themnotecount: #1}}
\newlength{\mnotewidth}
\definecolor{blueamu}{RGB}{0, 101, 189}
\definecolor{cyanamu}{RGB}{61, 183, 228}
\newcommand{\dhorline}[3][0]{%
	\tikz[baseline=-2pt]{\path[decoration={markings, 
			mark=between positions 0 and 1 step 2*#3
			with {\node[color=blueamu, fill, circle, minimum width=#3, inner sep=0pt, anchor=south west] {};}},postaction={decorate}]  (0,#1) -- ++(#2,0);}}
\newcommand{\dvertline}[3][0]{%
	\tikz[baseline=2em]{\path[decoration={markings,
			mark=between positions 0 and 1 step 2*#2
			with {\node[color=black!50, fill, circle, minimum width=#2, inner sep=0pt, anchor=south west] {};}},postaction={decorate}] (0, #1) -- ++(0,#3);}}
\newcommand\HUGE{\@setfontsize\Huge{28}{0}}\makeatother		
\numberwithin{equation}{section}
\begin{document}
	\renewcommand\figurename{Fig.}
	
	{
		\makeatletter\def\@fnsymbol{\@arabic}\makeatother 
 
		\title{Emergence of Lorentz symmetry\\ from an almost-commutative\\ twisted spectral triple.}
		
		
		\author{G. Nieuviarts\footnote{gaston.nieuviarts.wk@gmail.com}\\
			{\small  }%
			\\
			\small{ }\\[2ex]
		}
	 
		\date{}
		\maketitle
		\vspace{-\baselineskip}
		\vspace{-\baselineskip}
		\vspace{-\baselineskip}
	}  
	
	\setcounter{tocdepth}{2}
\vspace{-0.99cm}
	\begin{abstract}
		
		This article demonstrates how the transition from a (Riemannian) twisted spectral triple to a pseudo-Riemannian spectral triple arises within an almost-commutative spectral triple. This opens a new perspective on the Lorentzian signature problem, showing that the almost-commutative structure at the heart of the noncommutative standard model of particle physics could be the origin of the emergence of a Lorentzian spectral triple, starting from self-adjoint Dirac operators and conventional inner product structures, in the framework of twisted spectral triples. We present an alternative to Wick rotation, acting on the metric and the Christoffel symbols in a way that does not introduce any complex numbers. 
		
	\end{abstract}

	\tableofcontents 
	\vspace{-0.2cm}
	\newpage

	\section{Introduction}
	\label{sec introduction}
	
	Noncommutative geometry is a branch of mathematics that generalizes classical geometry to spaces where the algebra of functions is no longer commutative. Inspired by quantum mechanics, where observables (elements in an algebra) do not commute, this theory provides tools to describe "spaces" that are too irregular or singular to be treated with traditional differential geometry. One key realization of noncommutative geometry is its ability to recover Riemannian manifolds through spectral data. A. Connes, a pioneer of the field, developed the concept of spectral triples to encode the geometric information. A spectral triple $(\algA, \calH,\Dir)$ is the data of an involutive unital algebra $\algA$ represented by bounded operators on a Hilbert space $\calH$ together with a self-adjoint operator $\Dir$ acting on $\calH$. In this framework, Riemannian geometry can be reconstructed from the spectral properties of the Dirac operator by starting from the spectral triple 
	$(C^\infty(\Man), L^2(\Man, \Sp), \Dir, J, \Gamma)$, where $L^2(\Man, \Sp)$ is the Hilbert space of square integrable sections of the spinor bundle over the smooth manifold $\Man$. The Dirac operator $\Dir$ is constructed using the Riemannian gamma matrices and the Levi-Civita spin connection. $J$ is a given anti-unitary operator, and $\Gamma$ is the grading operator.

	In theoretical physics, fundamental interactions are described as Yang-Mills and Gravity theories, both based on the concept of connection on vector bundles. The extension of these concepts to noncommutative geometry relies on a crucial duality given by the Serre-Swan theorem. The gauge group emerges from inner automorphisms of $\algA$. According to the noncommutative differential structure we choose, two main frameworks have been used to construct Noncommutative Gauge Field Theories. The first one (historically), proposed by M. Dubois-Violette, R. Kerner, and J. Madore in \cite{DuboKernMado90b} concerns the use of derivation-based differential structure. The second one, mainly developed by A. Connes, J. Lott, and A. Chamseddine in \cite{ConnLott90a,chamseddine1997spectral} concerns the use of spectral triples based differential structure. Lets focus on this latter approach by considering a spectral triple $(\calA,\, \calH,\, \Dir,\, J,\, \Gamma)$ (see \cite{connes2006quantum, connes2013spectral} for the complete definition). The derivative is given by $\delta(.)=\left[D,.\right]$ so that one-forms are given by elements $A=\sum_{k} a_{k}\left[D, b_{k}\right]$ with $a_{k}, b_{k} \in \algA$. The connection with gauge theory goes as follows, if we consider a unitary element $u \in \calU(\algA)$ (the set of unitaries in $\algA$), and define the adjoint unitary $U = \pi(u) J \pi(u) J^{-1}$, then, considering a one-form $\omega$ as a gauge field and the associated “fluctuated” Dirac operator $\Dir_\omega := \Dir + \omega + \epsilon' J \omega J^{-1}$, inner fluctuations  $(\Dir_\omega)^u:=U\Dir_\omega U^\dagger$ are then equivalent to gauge transformations $\omega^u := u \omega u^\dagger+ u \delta(u^\dagger)$ via the relation $(\Dir_\omega)^u = \Dir_{\omega^u}$. The spectral action, a central spectral invariant, is given by $\act_b[\Dir]:= \Tr f(\Dir \Dir^\dagger / \Lambda^2)$ (for more details, see \cite{chamseddine2007gravity}).

	Based on this construction, we can consider a noncommutative extension of $(C^\infty(\Man), L^2(\Man, \Sp), \Dir, J, \Gamma)$ by taking the Almost-Commutative (AC) algebra $\calC^\infty(\Man) \otimes \AF$ with $\AF$ a finite dimensional noncommutative algebra. The Dirac operator is $\DM \otimes \bbbone + \Gamma \otimes \FDir$, with Hilbert space $L^2(\calM,\SpinBun) \otimes \HF$, real structure $J\otimes  \JF$ and grading $\Gamma \otimes \GF$. A fundamental aspect of this AC manifold is that its spectral action corresponds to the Lagrangian of an Einstein-Yang-Mills-Higgs model, entirely determined by the product spectral triple. Driven by the constraints coming with the axioms of even real spectral triples, a reformulation of the Standard Model of Particle Physics coupled to gravitation was elaborated, with crucial steps in \cite{connes1989particle, chamseddine1997spectral} until its full development in \cite{chamseddine2007gravity} (we refer to \cite{connes2006quantum} for detailed explanations). At the heart of this construction lies the finite geometry encoded in $\calA_F\defeq \mathbb{C}\oplus \mathbb{H} \oplus M_3(\mathbb{C})$ and its associated spectral triple. The symmetry group of the spectral action is given by $\calG=Map(\Man, G) \rtimes \Diff(\Man)$, where $\Diff(\Man)$ denotes the diffeomorphism group, and $Map(\Man, G)$ represents the gauge group of the second kind, with $G = U(1)\times SU(2)\times  SU(3)$. In this way, noncommutative geometry provides a unified framework to describe both Einstein-Hilbert gravity (in Euclidean signature) and classical gauge theories. This framework is known as the NonCommutative Standard Model. Notably, it offers an elegant description of the Standard Model, including the Higgs mechanism and neutrino mixing, as "gravity" on an AC-manifold. The fermionic masses are encoded into $\FDir$, so that the masses of the Higgs boson and the ones of fermions became related, thus providing a prediction for the Higgs mass.

	The noncommutative approach links geometry to operator algebras, revealing new mathematical and physical insights. But it falls short when describing spacetimes with pseudo-Riemannian signatures that are essential in physics. Indeed, realistic models require a formulation that incorporates time and causality, motivating the search for Lorentzian approaches within noncommutative geometry. This issue has spurred various proposals, from modified axioms for spectral triples to alternative formulations that attempt to reconcile causality with noncommutative structures see \cite{barrett2007lorentzian, franco2014temporal, paschke2006equivariant, strohmaier2006noncommutative, bochniak2018finite, devastato2018lorentz, barrett2007lorentzian, van2016krein, bizi2018space, d2016wick}. An essential feature of most of these approaches is the replacement of the Hilbert space by a Krein space, turning the inner product into an indefinite inner product $\inntw\defeq \langle\, \cdot \, , \, \tw \cdot \, \rangle$ with the unitary operator $\tw=\tw^\dagger$. Such structures are essential in quantum field theory to describe spinor fields on pseudo-Riemannian manifolds. However, developing a fully satisfactory and widely accepted framework for Lorentzian noncommutative geometry remains an open and actively pursued area of research.

	In \cite{nieuvsignchange}, I present a connection between twisted spectral triples and pseudo-Riemannian spectral triples based on a fundamental link between the previous Krein product and the twist defined by $\rho(\, \cdot\,)\defeq \tw (\, \cdot\,)\tw$. A concept of spectral triple morphism was introduced, enabling the transformation of a twisted spectral triple into a pseudo-Riemannian one. This morphism was shown to induce a signature change in the context of the manifold's spectral triples. The signature change transformation was determined solely by the unitary operator $\tw$, being the central element of the approach.
	
	The following article proposes to extend these results. The first three sections introduce what are twisted spectral triples, pseudo-Riemannian spectral triples, and the $\tw$-morphism between them. One of the main contributions is the definition of a specific real structure for twisted spectral triples in subsection \ref{SecTwistedRealStruct}, adapted to the bimodule structure associated with the derivations and related to the special $\tw$-unitary operators which generates the twisted fluctuations. Interestingly, this real structure also emerges in the setting of pseudo-Riemannian spectral triples and remains invariant under the $\tw$-morphism, thus retaining its significance in both frameworks.
	
	Another important point was the introduction of the twist in the context of Clifford algebras (for even-dimensional vector spaces) in subsection \ref{SubsecCliff}, showing its structural role in the structure, to characterise the orthochronous spin group and to deduce the Krein product $\inntw$. The way the $\tw$-morphism acts to change the signature of the metric and the Christoffel symbols is presented in section \ref{subsectionsignaturechange}.
	
	The main results are presented in section \ref{sec Almost Commutative manifold}. An almost-commutative spectral triple is constructed on a twisted spectral triple, naturally inducing the emergence of the operator $\tw$ in the product Dirac operator. This leads to the appearance of a Krein structure together with the pseudo-Riemannian Dirac operators, from the very first spectral triple structure, then implementing the $\tw$-morphism. We then show how the specific signature of space-time is recovered in the 4-dimensional case. This section follows a top-down approach, revealing the signature change as an intrinsic outcome of the algebraic structure governing the almost-commutative spectral triple.

	\section{Twisted spectral triple}
	\label{SecTwisted}
	
	Spectral triples have been particularly successful in describing noncommutative spaces arising from type I and type II von Neumann algebras, where a well-defined trace exists. However, a major challenge in noncommutative geometry has been the extension of these methods to type III von Neumann algebras, which lack such a finite normal trace. Type III algebras arise naturally in quantum field theory, especially in the algebraic approach to quantum statistical mechanics and the study of vacuum states in curved spacetime.
	
	To address this challenge, twisted spectral triples were introduced by A. Connes and H. Moscovici in their 2006 work, see \cite{connes2006type}. These modified spectral triples generalize the standard construction by incorporating an automorphism, which compensates for the absence of a trace and allows for the formulation of geometric invariants in this setting.
	
	Interestingly, the introduction of this twist is related to the modular theory of Tomita-Takesaki, which governs the structure of von Neumann algebras. Tomita-Takesaki theory links the modular structure of von Neumann algebras to time evolution, defining a natural flow on the algebra. Twisted spectral triples were then implemented in the context of the noncommutative Standard Model in \cite{TwistSpontBreakDevastaMartine2017,TwistLandiMarti2016,TwistGaugeLandiMarti2018}. Applications to conformal and noncommutative geometry can also be found in \cite{ponge2016index, ponge2018noncommutative}.

	\subsection{Fundamentals of twisted spectral triple}
	
	Let $\calA$ be a $*$-algebra with representation $\pi$ on a Hilbert space $\calH$. The faithful representation $\pi$ will be omitted by identifying the elements of $\calA$ with their representation. The adjoint $\dagger$ in $\calB(\calH)$ is then given by the relation $\pi(a^*)=\pi(a)^\dagger$.
	
	\begin{definition}[Twisted spectral triple]
		\label{DefTwisted}
		A twisted spectral triple $(\calA, \calH, \Dir, \rho)$ is obtained by considering an involutive algebra $\calA$ acting faithfully as a bounded operator algebra on a Hilbert space $\calH$ with a self-adjoint operator $\Dir$ with compact resolvent together with an automorphism $\rho\in \Aut(\calA)$ called twist which satisfies the regularity condition
		\begin{equation}
			\label{EqRegRelation}
			\rho(a^\dagger)=(\rho^{-1}(a))^\dagger,\,\qquad\qquad \forall a\in\calA.
		\end{equation}
		The twisted commutator
		\begin{equation}
			[\Dir,a]_\rho:= \Dir a-\rho(a)\Dir
		\end{equation}
		is used instead of $[\Dir,a]$ and we require only the twisted commutator to be bounded.
	\end{definition} 
	
	In the same way as for spectral triples, we can add a grading $\Gamma$ and a real structure $J$. The axioms of such even real twisted spectral triples are the same as for spectral triples
	\begin{equation}
		\label{EqDefStructParam}
		J^2 = \epsilon_0,\qquad J D = \epsilon_1 D J,\qquad J \Gamma = \epsilon_2 \Gamma J,\qquad \epsilon_0, \epsilon_1, \epsilon_2=\pm 1,
	\end{equation}
	with the introduction of the characteristic number $\epsilon=\pm 1$ so that $\rho$ is required to satisfy
	\begin{equation}
		\label{EqRelKJ}
		\rho( J)=\epsilon J,
	\end{equation}
	and the replacement of the first order condition by the twisted first order condition
	\begin{equation}
		\label{TwFirstorder}
		[[\Dir, a]_\rho, b^\circ]_{\rho^\circ}=0 
	\end{equation}
	$\forall a\in\calA$ and $b^\circ \in\calA^\circ$ with $\calA^\circ$ the opposite algebra. The opposite twist $\rho^\circ$ acting on $\calA^\circ$ is defined by 
	\begin{equation}
		\rho^\circ(a^\circ):= (\rho^{-1}(a))^\circ.
	\end{equation}
	See \cite{TwistLandiMarti2016} for further details.
	
	The set of twisted 1-forms $\Omega^1_\Dir(\calA, \rho)$ is defined as 
	\begin{equation}
		\Omega^1_\Dir(\calA, \rho)\defeq \{\sum_ka_k[\Dir, b_k]_{\rho}\, \mid\,   a_k, b_k \in \calA\}
	\end{equation}
	so that twisted-fluctuations of $\Dir$ are given by
	\begin{equation}
		\label{EqTwFluct}
		\Dir_{A_\rho}:= \Dir+A_\rho+\epsilon_1 JA_\rho J^{-1}\qquad\text{with}\qquad A_\rho\in \Omega^1_\Dir(\calA, \rho).
	\end{equation}
	The set $\Omega^1_\Dir(\calA, \rho)$ is a $\calA$-bimodule for the following product
	\begin{equation}
		\label{EqLeftRightTw}
		a\, \cdot\, A_\rho \, \cdot\,  b=\rho(a)A_\rho b
	\end{equation}
	for any $a, b\in\calA$ and $A_\rho \in \Omega^1_\Dir(\calA, \rho)$. This makes the map $a\,\to\, \delta_\rho(a)\defeq [\Dir,a]_\rho$ be a derivation since we will have that
	\begin{equation}
		\delta_\rho(ab)=\delta_\rho(a)b+\rho(a)\delta_\rho(b)=\delta_\rho(a)\, \cdot\, b+a \, \cdot\,  \delta_\rho(b).
	\end{equation} 
	As shown in \cite{TwistLandiMarti2016} a way to  “minimally twist” the spectral triple $(\algA, \calH, \Dir, \Gamma)$ i.e.,without changing the Hilbert space and the Dirac operator, is by the use of the decomposition coming with $\Gamma$. Using the projections $p_\pm:=  (\bbbone \pm \Gamma)/2$ so that $\calH=p_+\calH\oplus p_-\calH:=\calH_+\oplus \calH_-$, all elements $a=(a_1,a_2)\in \algA\otimes \mathbb{C}^2$ are represented accordingly:
	\begin{equation}
		\label{EqRhogen}
		\pi(a_1,a_2)\defeq p_+\pi_0(a_1)+p_-\pi_0(a_2)=	\begin{pmatrix}
			\pi_+(a_1) & 0  \\
			0 & \pi_-(a_2)  \\
		\end{pmatrix}
	\end{equation}
	with $\pi_0$ the representation used for $\algA$ and $\pi_\pm(\, \cdot\, )\defeq p_\pm \pi_0(\, \cdot\, )_{|\calH_\pm}$ the restrictions to $\calH_\pm$. This will be called the chiral representation.
	\begin{definition}[twist by grading]
		Taking the graded spectral triple $(\algA, \calH, \Dir, \Gamma)$, its corresponding twist by gradding is given by the twisted spectral triple $(\algA\otimes \mathbb{C}^2 , \calH, \Dir, \Gamma, \rho )$, using the chiral representation. The twist acts on elements $a=(a_1,a_2)\in \algA\otimes \mathbb{C}^2$ as
		\begin{equation}
			\label{EqactionRho}
			\rho(a_1,a_2)= (a_2,a_1).
		\end{equation}  
	\end{definition}  
	As shown in \cite{TwistLandiMarti2016}, the twist by the grading is the only way to minimally twist the spectral triple of even-dimensional manifolds based on $\CM\otimes \mathbb{C}^2$. The fact that no such procedure for twisting odd-dimensional manifold's spectral triple have been found will be one of the justifications to focus on the study of even-dimensional manifolds. More information on twisted spectral triples can be found in \cite{TwistLandiMarti2016,TwistGaugeLandiMarti2018, martinetti2024torsion}.
	
	\newpage
	\subsection{From twisted to indefinite inner products}
	\label{subsecFromTwToIndef}
	
	Following \cite{nieuvsignchange}, any automorphism $\rho\in \Aut(\calA)$ will be required to be $\calB(\calH)$-regular, i.e., a regular automorphism on all of $\calB(\calH)$. Since \cite{devastato2018lorentz}, a novelty consists in the possibility of considering a new inner product $\langle\, \cdot \, , \,\cdot  \,   \rangle_\rho$, called the $\rho$-product, which by definition satisfies
	\begin{align}
		\label{EqRhoprod}
		\langle\psi   , O\psi^\prime   \rangle_\rho=\langle\rho(O)^\dagger\psi ,\psi^{\prime} \rangle_\rho
	\end{align}
	for all $\psi, \psi^{\prime} \in \calH$ and $O\in \calB(\calH)$.
	
	This new inner product naturally comes with its own notion of adjoint, denoted by "$+$", defined as $(\, \cdot\,)^+:= \rho(\, \cdot\, )^\dagger$, along with an associated notion of $\rho$-unitarity, characterized by elements $U_\rho$ satisfying $U_\rho U_\rho^+=U_\rho^+U_\rho=\bbbone$, thereby preserving the $\rho$-product. As mentioned in \cite{nieuvsignchange}, the uniqueness of this adjoint (i.e., the equality of the left and right adjoint) is ensured by the $\calB(\calH)$-regularity condition. We will restrict ourselves to twists that are $*$-automorphisms, i.e., $\rho(O^\dagger)=\rho(O)^\dagger$. This, together with the regularity condition, implies that $\rho^2=\bbbone$, which in turn ensures that $(\, \cdot  \,)^+$ is an involution.
	
	As shown in \cite{TwistGaugeLandiMarti2018} and \cite{martinetti2024torsion}, there are two ways to generate the twisted fluctuations of equation \eqref{EqTwFluct}, using either usual unitary operators or $\rho$-unitary operators:
	\begin{enumerate}
		\item Type 1: $\Dir_{A_\rho}=U\Dir U^+$ with $U=uJuJ^{-1}$ and $u$ a unitary operator in $\calA$.
		\item Type 2: $\Dir_{A_\rho}=U_\rho\Dir U_\rho^\dagger$ with $U_\rho=u_\rho Ju_\rho J^{-1}$ and $u_\rho$ a $\rho$-unitary operator in $\calA$.
	\end{enumerate}
	This justifies the introduction of the notions of $\rho$-product, along with the adjoint "$+$" and the notion of $\rho$-unitarity.
	
	In the case where $\rho\in\Inn(\calB(\calH))$, there exists a unitary operator $\tw$ acting on $\calH$ such that $\rho(O)=\tw O \tw^\dagger$. In this case, the $\rho$-product is determined by $\tw$, so that $\langle\, \cdot\, , \, \cdot\, \rangle_\rho := \langle\, \cdot\, , \tw \cdot  \, \rangle$. Furthermore, the twist is a $*$-automorphism of $\calA$.
	
	In the following, any twist $\rho$ will be assumed to be in $\Inn(\calB(\calH))$. Since different operators $\tw$ can implement the same twist $\rho$ while inducing different inner products, we will adopt a more precise notation. The twisted spectral triple will be denoted by $(\calA, \calH, \Dir, \tw)$, and the $\rho$-product notation $\langle\, \cdot\, , \, \cdot\, \rangle_\rho$ will be replaced by the $\tw$-product notation:
	\begin{equation}
		\label{Eqtwprod}
		\inntw := \langle\, \cdot\, , \tw \cdot \, \rangle
	\end{equation}
	
	so that the associated adjoint is called the $\tw$-adjoint and is defined by
	\begin{equation}
		(\, \cdot\,)^\Tadj\defeq \rho(\, \cdot\,)^\dagger.
	\end{equation}
	We define $\tw$-unitary operators by the relation $\utw \utw^\Tadj =\utw^\Tadj  \utw=\bbbone$, with the corresponding set denoted as $\calU_\tw(\calB(\calH))$.
	
	Taking the automorphism $\rho(\, \cdot\,)=\tw(\, \cdot\,)\tw^\dagger$ to be $\calB(\calH)$-regular, or regular on $\calA$ if the algebra is a simple finite-dimensional algebra, we obtain
	\begin{equation}
		\label{EqFormTw}
		\tw=\exp(i\theta)\tw^\dagger
	\end{equation}
	for some real number $\theta$. As a consequence, $\inntw$ is Hermitian and indefinite if and only if $\tw=\tw^\dagger$ (see \cite{nieuvsignchange} for more details). The condition $\tw=\tw^\dagger$ will be derived from algebraic constraints in the top-down approach presented in section \ref{sec Almost Commutative manifold}.

	\begin{definition}[Fundamental twist]
		A fundamental twist is a twist $\rho\in\Inn(\calB(\calH))$, implemented by the unitary $\tw$ such that $\tw=\tw^\dagger$. 
	\end{definition}

	\subsection{Twisted real structure}
	\label{SecTwistedRealStruct}
	
	In noncommutative geometry, the real structure given by the operator $J$ allows to define a right action of the algebra 
	$\calA$ on the Hilbert space $\calH$, turning it into a $\calA$-bimodule by relating left and right actions. We have seen in equation \eqref{EqLeftRightTw} that $\Omega^1_\Dir(\calA, \rho)$ is a $\calA$-bimodule whose left and right actions are connected by the twist $\rho$. Here, I propose to introduce a notion of real structure that implement the twisted bimodule action. This particular real structure will play a crucial role throughout the article, appearing notably in the context of Clifford algebras with indefinite (pseudo-Riemannian) metrics in subsection \ref{SubsecCliff}, and then in the framework of pseudo-Riemannian spectral triples in subsection \ref{SubsecIndefinite}. 
	
	Taking $a\in\calA$, the left action of the algebra on bimodules is denoted by $l(a)$. This operator satisfies $l(a^\dagger)=l(a)^\dagger$. The corresponding right action $r(a)$ is given through $J$ by 
	\begin{align}
		r(a)=Jl(a)^\dagger J^{-1},
	\end{align}
	with the zero-order condition $[l(a), r(a)]=0$, ensuring that the right and left actions on the $\calA$-bimodules are connected by the relation
	\begin{equation}
		a \cdot (\, \cdot\, ) \cdot  b=l(a)r(b)(\, \cdot\, ).
	\end{equation}
	
	For the usual action on a bimodule, we have $a\cdot(\, \cdot\, )\cdot  b=a(\, \cdot\, )b$, meaning that the left and right actions correspond to the usual multiplication of operators acting on elements of the bimodule. We denote by $\JH$ the corresponding real structure, i.e., $\hat{l}(a)=a$ and $\hat{r}(a)=\JH \hat{l}(a)^\dagger \JH^{-1}$. We define $\hat{r}(a)=a^{\hat{\circ}}=\JH a^\dagger \JH^{-1}$ as the element of the corresponding opposite algebra $\calA^{\hat{\circ}}$. This is the real structure used for conventional spectral triples.
	
	Now, to recover the twisted product action \eqref{EqLeftRightTw}, i.e.,
	\begin{equation}
		a\cdot(\, \cdot\, ) \cdot b=\rho(a)(\, \cdot\, ) b,
	\end{equation}
	we have to find $J$ such that $l(a)=Jr(a)^\dagger J^{-1}=\rho(a)$ with $r(a)=\hat{r}(a)=a$. Noting that
	\begin{align}
		l(a)=\rho(\hat{l}(a))=\rho(\JH \hat{r}(a)^\dagger \JH^{-1})=\rho(\JH r(a)^\dagger \JH^{-1}),
	\end{align}
	we can define\footnote{In full generality, we may have $J\defeq \alpha\tw \JH$ with $\alpha\in\mathbb{C}$ such that $\alpha\alpha^*=1$, but this phase shift is set to $1$ to avoid unnecessary terms in the expressions.}
	\begin{equation}
		\label{EqRelJ}
		J\defeq \tw \JH
	\end{equation}
	so that $l(a)=Jr(a)^\dagger J^{-1}=\rho(a)$. We define $r(a)=a^{\circ}=\rho(a^{\hat{\circ}})=J a^\dagger J^{-1}$ as the element of the corresponding opposite algebra $\calA^{\circ}$.
	
	In subsection \ref{subsecFromTwToIndef}, we highlight the important role the $\tw$-unitary operators associated with $\inntw$ play in the context of twisted spectral triples. Given any $(\psi, u_\tw) \in (\calH, \calU_\tw(\calA)$, then the adjoint action on the bimodule is given by
	\begin{equation}
		\label{EqAdjActionUni}
		\psi\,\to\, u_\tw \psi u_\tw^{-1}=u_\tw \psi u_\tw^\Tadj=u_\tw J u_\tw J^{-1}\psi 
	\end{equation}
	showing that the adjoint action of $\tw$-unitary operators is intrinsically connected with the operator $J$ defining the twisted real structure.
	
	Spectral triples correspond to the special case $\tw=\bbbone$. The twisted commutator then reduces to the usual one, $u_\tw=u$ is unitary for $\inn$, and $J=\JH$, so that we recover the usual adjoint action on the bimodule:
	\begin{equation}
		\psi\,\to\, u \psi u^{-1}=u \psi u^\dagger=u \JH u \JH^{-1}\psi. 
	\end{equation}
	
	\subsection{Twisted Clifford relation}
	\label{SubsectwistCliffRel}
A general definition of the notion of twisted Clifford algebra was given in \cite{nieuvsignchange}. However, this was not clear weather the twisted Clifford algebra can be represented from the defining twisted Clifford algebraic constrait of the vector space and the metric. We propose here a modification of this definition to solve this issue, by retrieving the well known Clifford algebra case, but from a twisted Clifford algebraic relation. This specific relationship to the metric will be connected to a twisted spectral triple in subsection \ref{subsectionsignaturechange}.

\begin{definition}[Twisted Clifford relation]
	Let $V$ be a vector space over a commutative field with $\tg$ a metric on $V$ and $\rho$ a linear, unit-preserving involution on $V$, i.e., $\rho^2 = id_V$. The relation between $\tg$ and $V$ is given by the twisted Clifford relation
	\begin{align}
		\label{EqDefTwCliff}
		\rho(u)\cdot v + v\cdot \rho(u) = 2 \tg(u,v)\bbbone,\qquad\qquad \forall u, v\in V
	\end{align}  
	where $\rho$ is required to be a morphism, i.e., $\rho(u\cdot v) = \rho(u)\cdot \rho(v)$ for any $u,v \in V$.
\end{definition}

The bilinearity of $\tg$ follows from the linearity of $\rho$, and the symmetry of $\tg$ from the fact that $\rho$ is a unital, $V$-involutory morphism.

\begin{definition}[Reflection]
	\label{DefReflection}
	Taking a vector space $V$ and the metric $\tg$ on $V$, a reflection is a $\tg$-isometric involutive map on $V$.
\end{definition} 

\begin{proposition}
	$\rho$ is a reflection for $\tg$.
\end{proposition}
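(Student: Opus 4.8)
The plan is to unwind the two conditions that, by Definition \ref{DefReflection}, characterise a reflection: that $\rho$ is involutive and that it is $\tg$-isometric. The first is immediate, since the twisted Clifford relation already assumes $\rho^2 = id_V$. All the content therefore lies in establishing the isometry $\tg(\rho(u), \rho(v)) = \tg(u,v)$ for every $u, v \in V$.

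To prove isometry I would apply the twisted Clifford relation \eqref{EqDefTwCliff} to the pair $(\rho(u), \rho(v))$, which is legitimate because $\rho$ maps $V$ into $V$. This produces
\[
	\rho(\rho(u)) \cdot \rho(v) + \rho(v) \cdot \rho(\rho(u)) = 2\,\tg(\rho(u), \rho(v))\,\bbbone.
\]
Using the involutivity $\rho^2 = id_V$ to replace $\rho(\rho(u))$ by $u$, the left-hand side collapses to $u \cdot \rho(v) + \rho(v) \cdot u$. I would then recognise this very expression through the original relation: writing \eqref{EqDefTwCliff} with $u$ and $v$ interchanged gives $\rho(v) \cdot u + u \cdot \rho(v) = 2\,\tg(v,u)\,\bbbone$, and the already-established symmetry of $\tg$ turns the right-hand side into $2\,\tg(u,v)\,\bbbone$. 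Comparing the two identities for $u \cdot \rho(v) + \rho(v) \cdot u$ yields $2\,\tg(\rho(u),\rho(v))\,\bbbone = 2\,\tg(u,v)\,\bbbone$, and cancelling the nonzero scalar unit $\bbbone$ over the field gives $\tg(\rho(u),\rho(v)) = \tg(u,v)$, as required.

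I do not anticipate a genuine obstacle, since the argument is essentially a two-line manipulation of the defining relation. The only points demanding care are that $\rho(u), \rho(v)$ indeed lie in $V$ so that \eqref{EqDefTwCliff} applies to them, that the symmetry of $\tg$ has been secured beforehand (it follows, as noted after the definition, from $\rho$ being a unital, involutory morphism), and that $\bbbone$ may be cancelled. As a variant that sidesteps the explicit appeal to symmetry, one could instead apply the morphism property $\rho(u\cdot v)=\rho(u)\cdot\rho(v)$ together with linearity and unit-preservation directly to \eqref{EqDefTwCliff}; this reproduces the intermediate identity $u\cdot\rho(v) + \rho(v)\cdot u = 2\,\tg(u,v)\,\bbbone$ and leads to the same conclusion.
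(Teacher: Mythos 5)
Your proof is correct and is essentially the paper's argument: both compute $u\cdot\rho(v)+\rho(v)\cdot u$ in two ways from the twisted Clifford relation (once for the pair $(\rho(u),\rho(v))$ using involutivity, once for the original pair) and compare; the paper reaches the second identity by applying $\rho$ to \eqref{EqDefTwCliff} as a morphism, which is exactly the variant you mention at the end, while your main route substitutes the swap $u\leftrightarrow v$ plus the previously established symmetry of $\tg$. The difference is purely cosmetic and your handling of the prerequisites (membership of $\rho(u),\rho(v)$ in $V$, prior symmetry of $\tg$) is sound.
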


\begin{proof}
	We only have to prove that $\rho$ is $\tg$-isometric. The computation of the action of $\rho$ on equation \eqref{EqDefTwCliff} gives
	\begin{align*}
		u\cdot \rho(v) + \rho(v)\cdot u = 2 \tg(u,v)\bbbone
	\end{align*}
	where the left-hand side also identifies with $2 \tg(\rho(u),\rho(v))\bbbone$, implying that $\tg(u,v) = \tg(\rho(u),\rho(v))$, which proves that $\rho$ is $\tg$-isometric.
\end{proof}

We define $\g$ from $\tg$ and $\rho$ by the relation
\begin{align}
	\g(u,v) = \tg(\rho(u),v),\qquad\qquad \forall u,v \in V.
\end{align}

The bilinearity of $\g$ follows from that of $\rho$ and $\tg$, and the symmetry property from
\begin{align*}
	\g(u,v) = \tg(\rho(u),v) = \tg(v, \rho(u)) = \tg(\rho(v), \rho^2(u)) = \tg(\rho(v), u) = \g(v,u).
\end{align*}
 
Then equation \eqref{EqDefTwCliff} rewrites
\begin{align*}
	\rho(u)\cdot v + v\cdot \rho(u) = 2 \g(\rho(u),v)\bbbone 
\end{align*}
or equivalently, replacing $\rho(u)$ by $u$\footnote{This is possible since $\rho$ is an involution, hence an element in $\Aut(V)$.} 
\begin{align}
	\label{EqDefCliff}
	u\cdot v + v\cdot u = 2 \g(u,v)\bbbone  
\end{align}

The algebra generated by $\bbbone$ and $V$ is then, by definition, the Clifford algebra $Cl(V, \g)$. Automorphisms $\Aut(Cl(V, \g))$ of the Clifford algebra are bijective maps $\phi\, :\, Cl(V, \g)\, \to\, Cl(V, \g)$ that preserve equation \eqref{EqDefCliff}.

\begin{proposition}
	The involution $\rho$ is an element of $\Aut(Cl(V, \g))$.
\end{proposition}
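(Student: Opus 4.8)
The plan is to verify directly the two defining properties recalled just after \eqref{EqDefCliff}: that an element of $\Aut(Cl(V,\g))$ is a bijective map of the algebra that preserves the Clifford relation \eqref{EqDefCliff}. Bijectivity is immediate from the hypotheses. Indeed $\rho$ is a linear involution, so $\rho^2 = id_V$ gives $\rho^{-1}=\rho$ on $V$; since $\rho$ is moreover unit-preserving and satisfies $\rho(u\cdot v)=\rho(u)\cdot\rho(v)$, it is an algebra endomorphism of $Cl(V,\g)$, and being its own inverse it is bijective on all of $Cl(V,\g)$. Hence the entire content of the proposition reduces to showing that $\rho$ respects \eqref{EqDefCliff}.

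First I would apply $\rho$ to both sides of \eqref{EqDefCliff}. Using linearity and the morphism property on the left, together with $\rho(\bbbone)=\bbbone$ on the right, this yields
\[
\rho(u)\cdot\rho(v) + \rho(v)\cdot\rho(u) = 2\,\g(u,v)\,\bbbone .
\]
To recognise the right-hand side as the Clifford relation for the pair $\big(\rho(u),\rho(v)\big)$, the key step is to establish that $\rho$ is $\g$-isometric, i.e. $\g(\rho(u),\rho(v))=\g(u,v)$. This is the heart of the argument, and I would obtain it by chaining the definition $\g(u,v)=\tg(\rho(u),v)$ with the $\tg$-isometry of $\rho$ already proved in the reflection proposition and the involution property $\rho^2=id_V$:
\[
\g(\rho(u),\rho(v)) = \tg(\rho^2(u),\rho(v)) = \tg(u,\rho(v)) = \tg(\rho(u),\rho^2(v)) = \tg(\rho(u),v) = \g(u,v),
\]
where the middle equality $\tg(u,\rho(v))=\tg(\rho(u),v)$ comes from applying the $\tg$-isometry identity $\tg(a,b)=\tg(\rho(a),\rho(b))$ to $a=u$, $b=\rho(v)$ and then using $\rho^2=id_V$.

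Combining this $\g$-isometry with the displayed consequence of \eqref{EqDefCliff} gives
\[
\rho(u)\cdot\rho(v) + \rho(v)\cdot\rho(u) = 2\,\g(\rho(u),\rho(v))\,\bbbone ,
\]
which is exactly relation \eqref{EqDefCliff} for the vectors $\rho(u),\rho(v)$. Since $\rho$ is surjective on $V$, every pair of vectors arises this way, so \eqref{EqDefCliff} is preserved, and together with the bijectivity established above this shows $\rho\in\Aut(Cl(V,\g))$.

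The main obstacle I anticipate is bookkeeping rather than depth: one must be careful to invoke the already-established $\tg$-isometry of $\rho$ (and not merely the symmetry of $\tg$) in order to move $\rho$ across the arguments in the middle step, and to insert $\rho^2=id_V$ at the correct places. Once the $\g$-isometry is in hand, the fact that $\rho$ is assumed to be a morphism is precisely what guarantees that the property verified on generators propagates to all of $Cl(V,\g)$, so no separate appeal to the universal property is required.
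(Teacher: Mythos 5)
Your proposal is correct and follows essentially the same route as the paper: show $\rho$ maps $Cl(V,\g)$ to itself, then apply $\rho$ to \eqref{EqDefCliff} using the morphism property and $\rho(\bbbone)=\bbbone$ to obtain \eqref{EqAutomPr}. The additional $\g$-isometry verification you carry out via the $\tg$-isometry is a harmless elaboration of what the paper states as a one-line remark after its proof (where it is obtained instead by comparing \eqref{EqAutomPr} with the instance of \eqref{EqDefCliff} for the pair $\rho(u),\rho(v)$).
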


\begin{proof}
	Any element in $Cl(V, \g)$ is a linear combination of $\bbbone$ and elements of the form $\alpha v_1v_2\dots v_k$ with $v_i\in V$, $k\in\mathbb{N}$ and $\alpha\in \mathbb{C}$. The action of $\rho$ on such elements is given by $\rho(\bbbone)=\bbbone$ and $\rho(\alpha v_1v_2\dots v_k)=\alpha \rho(v_1)\rho(v_2)\dots \rho(v_k)$, which lies in $Cl(V, \g)$ as $\rho(v)\in V$ for any $v\in V$, proving that $\rho\, :\, Cl(V, \g)\, \to\, Cl(V, \g)$. It remains to prove that $\rho$ preserves equation \eqref{EqDefCliff}. From a direct computation we have
	\begin{align}
		\label{EqAutomPr}
		\rho(u)\cdot \rho(v) + \rho(v)\cdot \rho(u) = 2 \g(u,v)\bbbone
	\end{align}
	proving that $\rho\in \Aut(Cl(V, \g))$.
\end{proof}

Note that $\rho$ is also a reflection for $\g$ since equation \eqref{EqAutomPr} identifies with $2 \g(\rho(u),\rho(v))\bbbone$.

If $\rho$ is implemented by an operator $\tw$ such that $\rho(v)=\tw\cdot v\cdot\tw$ with $\tw^2=\tw\cdot\tw=\bbbone$, we can define $\tilde{v}\defeq \tw \cdot v$ so that equation \eqref{EqDefTwCliff} can be rewritten as
\begin{align}
	\label{DefRelDeuxTwRel}
\tilde{u}\cdot \tilde{v}+\rho(\tilde{v}\cdot \tilde{u})=2 \tg(u,v)\bbbone.
\end{align}
Such a reformulation of the metric relation in terms of the vectors $\tilde{v}$ will be particularly relevant in subsections \ref{subsectionsignaturechange} and \ref{Subsec4dimACmfld}.

The definitions of this subsection do not imply any regularity condition satisfied by $\rho$. The following section will provide a context in which the twist $\rho$ satisfies the $\mathcal{B}(\mathcal{H})$-regularity condition for a given Hilbert space $\calH$.

\section{Pseudo-Riemannian spectral triples}
	\label{sec Lorentz symetry}
	\subsection{Krein product and pseudo-Riemannian spectral triple}
	\label{SubsecIndefinite}
	
In the previous section, a notion of $\tw$-product was introduced in the context of twisted spectral triples. This notion coincides with the Krein product when $\tw = \tw^\dagger$. We now propose to present this notion of Krein product, along with the associated concept of Krein space, leading to the definition of pseudo-Riemannian spectral triples.
	
Let us start with a Hilbert space $\calH\defeq (E, \langle \, \cdot\, , \, \cdot\, \rangle)$ where $E$ is a vector space and $\inn$ a Hermitian inner product. Taking a self-adjoint unitary operator $\tw^\dagger =\tw$ (called fundamental symmetry), the associated Krein product is given by
	\begin{equation}
		\label{EqKreinProd}
		\inntw\defeq \langle \, \cdot\, , \tw\, \cdot\, \rangle.
	\end{equation}
The Krein space is defined as
	\begin{align*}
		\calK\defeq (E, \inntw).
	\end{align*}

	Considering an involutive unital algebra $\algA$ represented by bounded operators on $\calH$, the operator $\tw$ should not be regarded as an element of $\algA$. Consequently, the corresponding $\tw$-unitary actions on the algebra and the Hilbert space do not act on it:
	\begin{equation*}
		\langle \psi , a\psi \rangle_\tw \,\to\, \langle \utw\psi , \utw a{\utw}^\Tadj \utw \psi \rangle_\tw= \langle \psi , \utw^\Tadj \utw a\utw^\Tadj \utw \psi \rangle_\tw=\langle \psi , a\psi \rangle_\tw,
	\end{equation*}
	where $\utw\in \calU_\tw(\calB(\calH))$. This justifies the notation $\inntw$.

	Conversely, fixing $\tw$ as part of the structure constrains the transformations $\psi\,\to\, U\psi$ and $\psi\,\to\, U^\dagger\psi$ that preserve $\langle \psi , \psi \rangle_\tw$ to be $\tw$-unitaries, since they must satisfy
	\begin{equation}
		U^\dagger \tw U=U \tw U^\dagger=\tw,
	\end{equation}
	which implies that 
	\[
	UU^\Tadj=U^\Tadj U=\bbbone.
	\]

	We can then generalize the notion of spectral triples by defining a spectral triple based on these Krein space structures.
	
	\begin{definition}[Pseudo-Riemannian spectral triple]
		A pseudo-Riemannian spectral triple $(\calA,\calK, \KDir)$ is a spectral triple in which the Hilbert space is replaced by a Krein space $\calK$, and for which the Dirac operator $\KDir$ is $\Tadj$-self-adjoint, with the requirement that $[\KDir ,a]$ be bounded for any $a\in\calA$. 
	\end{definition} 
	
	If the pseudo-Riemannian spectral triple admits a real structure, an operator $J$ that implements the adjoint action of $\tw$-unitary operators in $\calA$ can be derived from $\JH$ (defined in Subsection \ref{SecTwistedRealStruct}) and $\tw$ via the relation \eqref{EqRelJ}.

	The algebraic constraints of a pseudo-Riemannian spectral triple are specified by:
	\begin{equation}
		\label{EqDefStructParamK}
		J^2 = \epsilon_0^\tw, \qquad\quad	 J \KDir = \epsilon_1^\tw \KDir J, \qquad\quad	 J \Gamma = \epsilon_2^\tw \Gamma J, \qquad\quad	 \epsilon_0^\tw, \epsilon_1^\tw, \epsilon_2^\tw \in \{\pm 1\}.
	\end{equation}
	The following table gives the relation between the KO-dimension of even-dimensional pseudo-Riemannian spectral triples and the numbers $\epsilon_i^\tw$ for $i\in\{0, 1, 2\}$:
	
	\begin{table}[H]
		\centering
		\small
		\resizebox{0.4\textwidth}{!}{%
			\begin{tabular}{|c|*{4}{c|}}
				\hline
				$2(n-m) \bmod 8$	& 0 & 2 & 4 & 6\\
				\hline
				$\epsilon_0^\tw$ & 1 & -1 & -1 & 1 \\ 
				\hline
				$\epsilon_1^\tw$ & 1 & 1 & 1 & 1  \\ 
				\hline
				$\epsilon_2^\tw$ & 1 & -1 & 1 & -1 \\ 
				\hline
		\end{tabular}}
		\normalsize
		\caption{\label{KoDimTable} KO-dimensions for even-dimensional pseudo-Riemannian spectral triples.}
	\end{table} 
	
	As for twisted spectral triples, standard spectral triples are retrieved in the special case $\tw=\bbbone$, which corresponds to $n=0$ in the KO-dimension table. More information on such spectral triples can be found in \cite{strohmaier2006noncommutative}.
	
The following three sections aim to construct the pseudo-Riemannian spectral triple on even-dimensional manifolds.

	\subsection{Clifford algebras for even-dimensional vector spaces and associated twist}
	\label{SubsecCliff}
	
	From now on, we will focus on Clifford algebras generated by complex linear representations of even-dimensional real vector spaces, in the spirit of quantum field theory with gamma matrices. The aim is to present the pseudo-Riemannian spectral triple of a manifold, and the connection with the twist.
	
	The following subsections aims at introducing such Clifford algebras with the so-called three structural automorphisms, their properties and interrelations. These automorphisms are of great importance to characterize these Clifford algebras and their automorphisms. The main novelty consists in the introduction of the twist (satisfying the $\mathcal{B}(\mathcal{H})$-regularity condition; see Proposition \ref{PropRegKK}), being one of these automorphisms, connected to the metric and providing the parity operator in the case the chosen basis is constituted of unitary operators. This twist will also plays a significant role in subsection \ref{subsecAutomClifftoKrein}, to characterize an important class of automorphisms, which holds fundamental importance in physics, i.e., the orthochronous spin group and the associated indefinite inner product. The author has followed similar conventions to \cite{lawson2016spin}.

	Let $V\simeq \mathbb{R}^{2m}$ be an even-dimensional (real) vector space over $\mathbb{R}$ of dimension $2m$, with $m\in\mathbb{N}$. A non-degenerate symmetric bilinear form $\g$ on $V$ determines a metric with signature $(n, 2m-n)$, where $n\in\mathbb{N}$. 
	
	The automorphism group of $V$ is the general linear group $\glv$ implemented by $2m$-square matrices $A$ with real coeficients acting on $V$. The orthogonal group $\ovg$ is the subgroup of $\glv$ composed by elements that preserves the metric, i.e.,$A\in \glv$ such that $\g(Au, Av)=\g(u,v)$.

	Taking any vector $v\in V$, a complex linear representation of $v$ is provided by the Clifford multiplication operator $\cl(v)$ as a $2^m \times 2^m$ complex square matrix such that
	\begin{equation}
		\label{EqDefRelCliff}
		\cl(u)\cl(v) + \cl(v)\cl(u) = 2\g(u, v)\mbb,\qquad \qquad \forall\,  u, v \in V.
	\end{equation}
The product $\cl(u)\cdot\cl(v)=\cl(u)\cl(v)$ is provided by the operator multiplication here.

The definition of $\glv$ extends to the Clifford representation of $V$ by the relation  
	\begin{align}
		\label{RelExtent}
	 \phi_A(\cl(v))\defeq\cl(Av),\qquad\qquad\forall A\in \glv\,\,\, \text{and}\,\,\, v\in V,
	\end{align}

Unless explicitly mentioned, we will identify elements $v\in V$ with their representation $\cl(v)$ in what follows, thereby identifying the vector space $V$ with its representation and also $\g(u, v)$ with $\g(\cl(u), \cl(v))$.
 
We denote by $\glpv$ the elements of $\phi_A\in \glv$ which are also unital morphisms of $V$ equiped with the product "$\cdot$" i.e.,$\phi_A(\cl(u)\cl(v))=\phi_A(\cl(u))\phi_A(\cl(v))$ forall $u,v\in V$. The notation $\opvg$ will be used for the elements of $\ovg$ in $\glpv$. We have that $\glpv=\opvg$ since the action of $\glpv$ preserve relation \eqref{EqDefRelCliff}. Note that $\rho$ defined in subsection \ref{SubsectwistCliffRel} is an element in $\opvg$.

	The (complex) Clifford algebra $Cl(V, \g)$ is the algebra generated by the vector space $V\subset Cl(V, \g)$ and the identity $\mbb\subset Cl(V, \g)$.

 Since $Cl(V, \g)$ has even dimension, it is isomorphic to $M_{2^m}(\mathbb{C})$, implying that all its automorphisms are inner, i.e., $\Aut(Cl(V, \g))=\Inn(Cl(V, \g))$. Defining $Cl^{\, \times}(V, \g)$ as the subset of invertible elements in $Cl(V, \g)$,  then every automorphism of $Cl(V, \g)$ can be characterized by an element $x\in Cl^{\, \times}(V, \g)$ through the adjoint action
	\begin{equation}
		\phi_x(y)\defeq\ad_x(y)=xyx^{-1},\qquad \qquad\qquad \forall y\in Cl(V, \g).
	\end{equation}
 
	Note that $\opvg\subset \Aut(Cl(V, \g))$ can then be implemented by such an adjoint action.

	Let $\{e_a\}_{1\leq a\leq 2m}$ be a $\g$-orthonormal basis of $V$, i.e., subject to the relation
	\begin{equation}
		e_ae_b+e_be_a=2\g(e_a,e_b)\mbb =\left\{
		\begin{array}{ll}
			\,\,\,\,2\delta_{ab}\mbb  &\qquad\text{ for $1\leq a\leq n$},\\ 
			-2\delta_{ab}\mbb  &\qquad\text{ for $n+1\leq a\leq 2m$}. 
		\end{array}\right.  
	\end{equation}
	Any vector $v\in V$ can be written as $v=\sum_{a=1}^{2m} \lambda_a e_a$ with $\lambda_a\in \mathbb{R}$ (since $V$ is a linear representation of $\mathbb{R}^{2m}$). $Cl(V, \g)$ is then generated by $\{e_a\}_{1\leq a\leq 2m}$ and $\mbb$, so that a basis of $Cl(V, \g)\simeq M_{2^m}(\mathbb{C})$ is given by the set
	\begin{equation}
		\mathcal{I}\defeq \{\mbb, \{e_{a} \}_{1\leq a\leq 2m},\{e_{a_1}e_{a_2} \}_{1\leq a_1<a_2\leq 2m},\{e_{a_1}e_{a_2}e_{a_3} \}_{1\leq a_1<a_2<a_3\leq 2m},\dots ,e_{1}e_{2}\dots e_{2m}  \}.
	\end{equation}
	Let define $\g_v\defeq \g(v,v)$ and $\g_a\defeq \g(e_a,e_a)$. The metric $\g$ induces the splitting $V=V^+\oplus V^-$, where the set $\{e_a\}_{1\leq a\leq n}$ is a basis of $V^+$ and the set $\{e_a\}_{n+1\leq a\leq 2m}$ is a basis of $V^-$, with $\g_{v^+}\geq 0$ and $\g_{v^-}\leq 0$ for $v^\pm \in V^\pm$.
	
	An irreducible representation of $Cl(V, \g)\simeq M_{2^m}(\mathbb{C})$ is given by the module $M=\mathbb{C}^{2^m}$. A Hermitian inner product $\inn$ can be associated with $M$, with adjoint denoted by $\dagger$. We define the Hilbert space $\calH\defeq(M,\langle \, \cdot\, , \, \cdot\, \rangle)$.
	
	The commutant $V^\prime$ of $V$ in $Cl(V, \g)$ is defined by
	\begin{align}
		V^\prime\defeq \{a\in Cl(V, \g) \, \mid\, \forall v\in V, \quad va=av\}.
	\end{align}
	\begin{proposition}
		\label{PropComV}
		We have that $V^\prime=Z(Cl(V, \g) )=\mathbb{C}\mbb$.
	\end{proposition}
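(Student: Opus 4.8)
The plan is to prove the two equalities separately: first that $V^\prime = Z(Cl(V, \g))$ by a generation argument, and then that this common subalgebra reduces to the scalars $\mathbb{C}\mbb$ by a direct computation in the basis $\mathcal{I}$.

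For the first equality, the inclusion $Z(Cl(V, \g)) \subseteq V^\prime$ is immediate, since central elements commute with everything, in particular with $V \subseteq Cl(V, \g)$. Conversely, $V$ together with $\mbb$ generates $Cl(V, \g)$ as an algebra, so an element commuting with every $v \in V$ commutes with all products $e_{a_1}\cdots e_{a_k}$ (by induction on the number of factors), hence with every linear combination, i.e. with all of $Cl(V, \g)$. This gives $V^\prime \subseteq Z(Cl(V, \g))$ and therefore $V^\prime = Z(Cl(V, \g))$.

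For the second equality I would argue directly, which also makes transparent the role of even-dimensionality (a one-line alternative is to invoke $Cl(V, \g)\simeq M_{2^m}(\mathbb{C})$ and the fact that the center of a full matrix algebra is scalar, but the explicit computation is more instructive). Write an arbitrary $x \in Cl(V, \g)$ in the basis $\mathcal{I}$ as $x = \sum_I \lambda_I e_I$, where $I$ ranges over ordered subsets of $\{1, \dots, 2m\}$ and $e_I$ is the associated ordered product with $e_\emptyset = \mbb$. The key step is the conjugation of a basis element by a generator $e_a$: since distinct orthonormal basis vectors anticommute while $e_a^2 = \pm\mbb$ is invertible, one obtains $e_a e_I e_a^{-1} = (-1)^{|I|} e_I$ when $a \notin I$ and $e_a e_I e_a^{-1} = (-1)^{|I|-1} e_I$ when $a \in I$. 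Thus conjugation by each $e_a$ is diagonal in $\mathcal{I}$ with eigenvalues $\pm 1$.

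Because the elements of $\mathcal{I}$ are linearly independent, $x$ commutes with a given $e_a$ if and only if $e_a e_I e_a^{-1} = e_I$ for every $I$ with $\lambda_I \neq 0$. Imposing this for all $a$ forces, for each surviving $I$, that $|I|$ be even whenever some index lies outside $I$ and odd whenever some index lies inside $I$. If $I$ is neither empty nor the full index set, both requirements apply and contradict one another; and if $I = \{1, \dots, 2m\}$, then $|I| = 2m$ is even while membership forces $|I|$ odd, again a contradiction. Hence only $I = \emptyset$ survives, giving $x = \lambda_\emptyset \mbb$, so $V^\prime = \mathbb{C}\mbb$. The main obstacle is purely the sign bookkeeping in the conjugation formula and the justification that commuting with $e_a$ passes from the whole sum to each individual term, which rests on the diagonal form of the conjugation together with the linear independence of $\mathcal{I}$; the even-dimensional hypothesis is essential, as it is precisely what makes the top-degree volume element anticommute with every generator and thus fail to be central.
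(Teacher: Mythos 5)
Your proof is correct and follows essentially the same route as the paper's: expand an element of the commutant in the basis $\mathcal{I}$ and observe that every non-scalar basis monomial anticommutes with some generator $e_a$, with even-dimensionality handling the top-degree element. You are in fact more careful than the paper on two points it leaves implicit, namely the term-by-term reduction via the $\pm 1$-diagonal action of conjugation by $e_a$ combined with the linear independence of $\mathcal{I}$, and the direct generation argument for $V^\prime = Z(Cl(V, \g))$ in place of the appeal to $Cl(V, \g)\simeq M_{2^m}(\mathbb{C})$.
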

	\begin{proof}
		Since $Cl(V, \g)\simeq M_{2^m}(\mathbb{C})$, its center is $\mathbb{C}\mbb$. Now, taking any element $a\in Cl(V, \g)$, it is a linear combination of elements in $\mathcal{I}$. The requirement $[a,v]=0$ for any $v\in V$ implies that $a\in \mathbb{C}\mbb$, since for any element $e_{a_1}e_{a_2}\dots e_{a_k}\in \mathcal{I}$ with $1\leq a_1<a_2<\dots <a_k\leq 2m$, there exists an element $e_a\in V$ that anticommutes with it, meaning that $V^\prime=\mathbb{C}\mbb$.
	\end{proof}
	
	The complex conjugation operator is defined on any operator $O\in\mathcal{B}(\mathcal{H})$ by $\overline{O}=cc\circ O\circ cc$ where $cc=cc^{-1}$ is the complex conjugation over $\mathcal{H}$.
	
	\begin{definition}
		\label{DefKCliff}
		The structural operators $\rho, \chi, \kappa$, called twist, grading, and charge conjugation, are defined to be elements in $\opvg$ so that $\forall v\in V$ we have
		\begin{align}
			\label{EqDefStruct}
			&\rho(v)\defeq v^\dagger, \qquad\qquad\quad\chi(v)\defeq -v, \qquad\qquad\quad\kappa(v)\defeq -\bar{v}.
		\end{align}
	\end{definition}
	This directly implies that $\rho, \chi, \kappa$ are reflections on $V$, that $\bar{v}, v^\dagger$, and $v^\top=\bar{v}^\dagger$ are elements of $V$ for any $v\in V$, and consequently that 
	\begin{equation}
		\rho(\bar{v})=v^\top =\overline{\rho(v)}.
	\end{equation}
	We then have the equivalent definition $\kappa(v)=\chi(\bar{v})$.
	\begin{remark}
		If we do not restrict to even-dimensional vector spaces $V$, the relation for $\kappa$ is more generally given by
		\begin{equation}
			\kappa(v)\defeq -\epsilon_1^\tw\bar{v},
		\end{equation}
		where $\epsilon_1^\tw=\pm 1$. As shown in the KO-dimension table \ref{KoDimTable}, a particular feature of the even-dimensional case is that $\epsilon_1^\tw=1$ in all cases, so the definition without this factor.
	\end{remark}

	\begin{proposition}
		\label{PropInvol}
		The structural automorphisms $\rho,\chi$, and $\kappa$ are involutions on $Cl(V, \g)$.
	\end{proposition}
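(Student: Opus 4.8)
The plan is to reduce the claim to a computation on generators. Each of $\rho$, $\chi$, $\kappa$ belongs to $\opvg \subset \Aut(Cl(V, \g))$, so each is a unital algebra automorphism of $Cl(V, \g)$, and hence so is its square. Since $Cl(V, \g)$ is generated as an algebra by $\bbbone$ together with $V$, any unital automorphism that restricts to the identity on $V$ must be the identity on all of $Cl(V, \g)$. Therefore it will suffice to verify that $\rho^2$, $\chi^2$, and $\kappa^2$ each act as the identity on an arbitrary $v \in V$, the element $\bbbone$ being fixed automatically by unitality.

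First I would treat $\chi$, which is immediate: from $\chi(v) = -v$ we get $\chi^2(v) = \chi(-v) = v$. For $\rho$ and $\kappa$ I would invoke the consequence of Definition \ref{DefKCliff} recorded just below it, namely that $v^\dagger \in V$ and $\bar v \in V$ for every $v \in V$. For $\rho$, applying $\rho(w) = w^\dagger$ with $w = v$ and then with $w = v^\dagger \in V$ gives $\rho^2(v) = (v^\dagger)^\dagger = v$, since $\dagger$ is an involution on $\mathcal{B}(\calH)$. For $\kappa$, note first that $-\bar v \in V$; applying $\kappa(w) = -\bar w$ with $w = v$ and then with $w = -\bar v$ yields $\kappa^2(v) = -\overline{(-\bar v)} = \overline{\bar v} = v$, using that $cc$ is an involution and fixes the real scalar $-1$.

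The conceptually load-bearing step, and the one I would state carefully, is the passage from \emph{fixes every element of} $V$ to \emph{is the identity on} $Cl(V, \g)$: this works precisely because $\rho^2$, $\chi^2$, $\kappa^2$ are algebra homomorphisms, so that agreement with the identity on the generating set $\{\bbbone\} \cup V$ forces agreement on all products and linear combinations. No genuine obstacle remains beyond this, since the containments $v^\dagger, \bar v \in V$ needed for the three computations are already available from the discussion following Definition \ref{DefKCliff}.
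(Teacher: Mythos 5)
Your proof is correct and follows essentially the same route as the paper: the paper verifies that $\rho^2$, $\chi^2$, $\kappa^2$ fix each monomial $\alpha e_{a_1}\cdots e_{a_k}$ of the basis $\mathcal{I}$ (pushing the automorphism through the product factor by factor) and extends by linearity, which is the same reduction-to-generators argument you make explicit by checking only on $V$ and invoking multiplicativity of the squares. The single-vector computations $\rho^2(v)=(v^\dagger)^\dagger=v$, $\chi^2(v)=v$, $\kappa^2(v)=\overline{\bar v}=v$, together with the containments $v^\dagger,\bar v\in V$ noted after Definition \ref{DefKCliff}, match the paper's calculation exactly.
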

	\begin{proof}
		Any element of $Cl(V, \g)$ can be expressed as a linear combination of elements in $\mathcal{I}$. We can directly check that 
		\begin{equation}
			\rho^2(\alpha\mbb)=\chi^2(\alpha\mbb)=\kappa^2(\alpha\mbb)=\alpha\mbb, \quad \forall \alpha\in \mathbb{C},
		\end{equation}
		and that for any element $\alpha e_{a_1}e_{a_2}\dots e_{a_k}$ with $k\in\mathbb{N}$, the same relation holds:
		\begin{align}
			&\rho^2(\alpha e_{a_1}e_{a_2}\dots e_{a_k})=\alpha\rho(e_{a_1}^\dagger e_{a_2}^\dagger \dots e_{a_k}^\dagger)=\alpha e_{a_1}e_{a_2}\dots e_{a_k},\\
			&\chi^2(\alpha e_{a_1}e_{a_2}\dots e_{a_k})=(-1)^{2k}\alpha e_{a_1}e_{a_2}\dots e_{a_k}=\alpha e_{a_1}e_{a_2}\dots e_{a_k},\\
			&\kappa^2(\alpha e_{a_1}e_{a_2}\dots e_{a_k})=(-1)^{2k}\alpha e_{a_1}e_{a_2}\dots e_{a_k}=\alpha e_{a_1}e_{a_2}\dots e_{a_k}.
		\end{align}
		This implies that $\rho^2=\chi^2=\kappa^2=\mbb$ on any element of $\mathcal{I}$, and thus on all of $Cl(V, \g)$.
	\end{proof}
	
	This implies the existence of positive and negative eigenspaces in each case
	\begin{align}
		&Cl(V, \g)=Cl^{\, +}(V, \g)\oplus Cl^{\, -}(V, \g), \quad \text{with} \quad \rho(a^\pm)=\pm a^\pm, \quad a^\pm\in Cl^{\, \pm}(V, \g),\\
		&Cl(V, \g)=Cl^{\, 0}(V, \g)\oplus Cl^{\, 1}(V, \g), \quad \text{with} \quad \chi(a^i)=(-1)^i a^i, \quad a^i\in Cl^{\, i}(V, \g),\\
		&Cl(V, \g)=Cl^{\, \bar{0}}(V, \g)\oplus Cl^{\, \bar{1}}(V, \g), \quad \text{with} \quad \kappa(a^i)=(-1)^i a^i, \quad a^i\in Cl^{\, \bar{i}}(V, \g).
	\end{align}

	\begin{proposition}
		\label{PropComStructAutom}
		We have $\rho\circ \chi=\chi \circ \rho$, $\rho\circ \kappa=\kappa \circ \rho$, and $\kappa\circ \chi=\chi \circ \kappa$ on $Cl(V, \g)$.
	\end{proposition}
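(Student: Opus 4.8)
The plan is to reduce the three identities to a check on generators. By the preceding propositions, $\rho$, $\chi$, and $\kappa$ all belong to $\opvg\subset\Aut(Cl(V,\g))$, so each is a (complex-linear) algebra automorphism of $Cl(V,\g)$; in particular any composition of two of them is again an algebra morphism. Since $Cl(V,\g)$ is generated as an algebra by $\mbb$ together with $V$, two such morphisms coincide on all of $Cl(V,\g)$ as soon as they coincide on the generating set $\{\mbb\}\cup V$. All three automorphisms fix $\mbb$, so the identities hold trivially on $\mbb$, and the whole problem collapses to verifying them on an arbitrary vector $v\in V$.

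First I would treat $\rho\circ\chi=\chi\circ\rho$. Linearity gives $\rho(\chi(v))=\rho(-v)=-v^\dagger$, while $\chi(\rho(v))=\chi(v^\dagger)$; invoking the fact recorded after Definition \ref{DefKCliff} that $v^\dagger\in V$, the defining formula \eqref{EqDefStruct} yields $\chi(v^\dagger)=-v^\dagger$, so both sides equal $-v^\dagger$. For $\rho\circ\kappa=\kappa\circ\rho$, I would compute $\rho(\kappa(v))=\rho(-\bar v)=-\bar v^\dagger=-v^\top$ and $\kappa(\rho(v))=\kappa(v^\dagger)=-\overline{v^\dagger}=-v^\top$, using that $\bar v,v^\dagger\in V$ so the defining formulas apply, together with the identity $\overline{v^\dagger}=\bar v^\dagger=v^\top$ noted in the excerpt. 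Finally, for $\kappa\circ\chi=\chi\circ\kappa$, linearity gives $\kappa(\chi(v))=\kappa(-v)=\bar v$ and $\chi(\kappa(v))=\chi(-\bar v)=-\chi(\bar v)=\bar v$, using once more that $\bar v\in V$. In each case the two composites agree on $v$, hence on all generators, hence on $Cl(V,\g)$.

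The computation itself is routine; the one point demanding care, and the step I would flag as the main obstacle, is the legitimacy of applying the defining relations \eqref{EqDefStruct} to the transformed vectors $v^\dagger$, $\bar v$, and $v^\top$ rather than to $v$ itself. This is exactly what the observation following Definition \ref{DefKCliff}—that $\bar v$, $v^\dagger$, and $v^\top$ all lie again in $V$—provides, and it is what lets me evaluate $\chi$, $\kappa$, and $\rho$ on them directly instead of expanding them as products of basis generators and tracking signs. Once that closure of $V$ under conjugation, adjoint, and transpose is in hand, the generator check and the morphism-extension principle close the argument.
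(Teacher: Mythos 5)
Your proposal is correct and follows essentially the same route as the paper: verify the three identities on $\mbb$ and on an arbitrary $v\in V$ using the defining relations \eqref{EqDefStruct} together with the closure of $V$ under $\dagger$, conjugation, and transpose, then extend to all of $Cl(V,\g)$ via the morphism/generator argument. Your explicit appeal to the fact that the composites are algebra morphisms agreeing on generators is just a cleaner packaging of the paper's step of checking the relations factor-by-factor on products $e_{a_1}\dots e_{a_k}$ and extending by linearity.
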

	
	\begin{proof}
		These relations are easily verified on the first element $\alpha\mbb$ of $\mathcal{I}$ and on any element $v\in V$, since
		\begin{align}
			&\rho(\chi(v))=-v^\dagger=\chi(\rho(v)),\\
			&\rho(\kappa(v))=- v^\top=\kappa(\rho(v)),\\
			&\kappa(\chi(v))=\bar{v}=\chi(\kappa(v)).
		\end{align}
		Taking any element $a=\alpha e_{a_1}e_{a_2}\dots e_{a_k}$ with $k\in\mathbb{N}$, we obtain the same equalities, as the operators commute on each $e_{a_i} \in V$, ensuring that the relations hold on $\mathcal{I}$, then extend to all of $Cl(V, \g)$.
	\end{proof}
	
	The following proposition justifies the notation $\rho$ by establishing that it fulfills the defining properties of a twist i.e., satisfying the regularity condition.
	
	\begin{proposition}
		\label{PropRegKK}
		The automorphism $\rho$ is $\mathcal{B}(\mathcal{H})$-regular.
	\end{proposition}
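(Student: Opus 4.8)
The plan is to unravel what $\calB(\calH)$-regularity concretely means here and reduce it to the statement that $\rho$ is a $*$-automorphism. Recall from \eqref{EqRegRelation} that $\rho$ is regular on an algebra when $\rho(a^\dagger) = (\rho^{-1}(a))^\dagger$ for every element $a$, the relevant algebra now being all of $\calB(\calH)$. The structural fact to exploit is that $\calH = (M, \inn)$ with $M = \mathbb{C}^{2^m}$ and $Cl(V, \g) \simeq M_{2^m}(\mathbb{C})$, so that $\calB(\calH)$ coincides with $Cl(V, \g)$ itself. Hence it suffices to verify the regularity identity on all of $Cl(V, \g)$, where I can work with the explicit basis $\calI$.

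First I would invoke Proposition \ref{PropInvol}, which gives $\rho^2 = \mbb$ and thus $\rho^{-1} = \rho$. The regularity condition then collapses to $\rho(a^\dagger) = \rho(a)^\dagger$, i.e.\ the requirement that $\rho$ commute with the $\dagger$-adjoint, so the whole proposition reduces to showing that $\rho$ is a $*$-automorphism of $Cl(V, \g)$.

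Next I would check this identity on generators and extend by linearity. For $v \in V$, since $\rho \in \opvg$ maps $V$ to itself, we have $v^\dagger = \rho(v) \in V$, whence $\rho(v^\dagger) = (v^\dagger)^\dagger = v = \rho(v)^\dagger$, establishing the identity on $V$. For a monomial $a = v_1 \cdots v_k$ I would use that $\rho$ is an algebra morphism whereas $\dagger$ is an antimorphism: computing $\rho(a^\dagger) = \rho(v_k^\dagger)\cdots \rho(v_1^\dagger) = v_k \cdots v_1$ and $\rho(a)^\dagger = (v_1^\dagger \cdots v_k^\dagger)^\dagger = v_k \cdots v_1$ shows the two agree. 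Since $\rho$ is $\mathbb{C}$-linear and $\dagger$ is conjugate-linear, both sides of $\rho(a^\dagger) = \rho(a)^\dagger$ are conjugate-linear in $a$, so agreement on the spanning set $\calI$ propagates to all of $Cl(V, \g) = \calB(\calH)$, which concludes the argument.

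There is no deep obstacle here; the argument is essentially bookkeeping on the Clifford basis. The two points deserving care are the identification $\calB(\calH) = Cl(V, \g)$ (so that regularity on the algebra really does amount to $\calB(\calH)$-regularity) and the observation that $v^\dagger$ lands back in $V$, which is precisely what forces $\rho(v^\dagger) = v$ and drives the entire computation.
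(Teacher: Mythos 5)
Your proposal is correct and follows essentially the same route as the paper's proof: both use $\rho^{-1}=\rho$ from Proposition \ref{PropInvol} to reduce regularity to the $*$-automorphism identity, verify it on $V$ via $\rho(v^\dagger)=(v^\dagger)^\dagger=v=\rho(v)^\dagger$, extend to monomials using that $\rho$ is a morphism while $\dagger$ reverses order, and conclude by (conjugate-)linearity together with the identification $Cl(V,\g)\simeq M_{2^m}(\mathbb{C})\simeq\calB(\calH)$. No substantive difference.
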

	
	\begin{proof}
		Since $\rho^2=\mbb$ on all of $Cl(V, \g)$, we have $\rho^{-1}=\rho$. From the definition of $\rho$, taking any element $e_a\in V$, we obtain
		\begin{equation}
			\label{EqRegulDem}
			\rho(e_a^\dagger)=e_a=\rho(e_a)^\dagger,
		\end{equation}
		which implies that $\rho$ satisfies
		\begin{equation}
			\rho(e_a^\dagger) = \rho^{-1}(e_a)^\dagger.
		\end{equation}
		Thus, $\rho$ is regular on $V$. 
		
		We recover this relation for any element of $\mathcal{I}$, since $\rho(\mbb^\dagger)=\rho^{-1}(\mbb)^\dagger$, and for any $\alpha\in\mathbb{C}$ and $k\in\mathbb{N}$, we have
		\begin{align*}
			\rho((\alpha e_{a_1}e_{a_2}\dots e_{a_k} )^\dagger)&= \alpha^*\rho(e_{a_k}^\dagger)\dots \rho(e_{a_1}^\dagger)\\
			&= \alpha^*\rho^{-1}(e_{a_k})^\dagger\dots \rho^{-1}(e_{a_1})^\dagger\\
			&=\alpha^*\rho^{-1}(e_{a_1}\dots e_{a_k})^\dagger=\rho^{-1}(\alpha e_{a_1}e_{a_2}\dots e_{a_k})^\dagger.
		\end{align*}
		By linearity, this extends to all of $Cl(V, \g)\simeq M_{2^m}(\mathbb{C})\simeq \mathcal{B}(\mathcal{H})$, proving that $\rho$ is regular.
	\end{proof}
	
	We thus recover an automorphism with the properties of a twist, satisfying the regularity condition defined in section \ref{SecTwisted}, so the name twist for this operator.
	
	Since $\Aut(Cl(V, \g))=\Inn(Cl(V, \g))$, there exist operators $\tw, \Gamma, C\in Cl(V, \g)$ such that
	\begin{equation}
		\rho(\, \cdot\,)=\tw(\, \cdot\,)\tw^{-1},\qquad\quad  \chi(\, \cdot\,)=\Gamma(\, \cdot\,)\Gamma^{-1},\qquad\quad \kappa(\, \cdot\,)=C(\, \cdot\,)C^{-1}.
	\end{equation}
	We require these operators to be unitary, ensuring that the structural automorphisms are $*$-automorphisms.
	
	The real operator is defined from $C$ by 
	\begin{equation}
		J\defeq C\circ cc,
	\end{equation}
	so that the charge conjugation transformation is given by $\psi\to\psi_c\defeq J\psi$ with 
	\begin{equation}
		\label{RelJv}
		Jv=- vJ,\qquad\qquad \forall v\in V.
	\end{equation}
	From Proposition \ref{PropRegKK}, we retrieve the result of equation \eqref{EqFormTw}, namely $\tw=\exp(i\theta)\tw^\dagger$, since $\rho$ is $\mathcal{B}(\mathcal{H})$-regular. Additionally, we have:
	
	\begin{proposition}
		We have that $\overline{\tw}=\exp(i\theta^\prime)\tw$ with $\theta^\prime\in\mathbb{R}$.
	\end{proposition}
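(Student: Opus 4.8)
The plan is to show that the complex conjugate $\overline{\tw}$ implements the very same inner automorphism of $Cl(V, \g)$ as $\tw$ does, namely the twist $\rho$, and then to use that the center of $Cl(V, \g)$ is $\mathbb{C}\mbb$ (Proposition \ref{PropComV}) to conclude that $\overline{\tw}$ and $\tw$ can differ only by a scalar, which unitarity forces to have modulus one. Recall $\rho(\, \cdot\,)=\tw(\, \cdot\,)\tw^{-1}$, and that $\overline{\tw}$ is again unitary, being the complex conjugate of a unitary (using $\overline{\tw^\dagger}=(\overline{\tw})^\dagger$).

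The key step is a direct conjugation computation. Writing $\overline{O}=cc\circ O\circ cc$ with $cc$ antilinear and $cc^2=\mbb$, complex conjugation is multiplicative, $\overline{AB}=\overline{A}\,\overline{B}$, satisfies $\overline{O^\dagger}=(\overline{O})^\dagger$, and obeys $\overline{\tw^{-1}}=\overline{\tw}^{\,-1}$. Hence for every $O\in Cl(V,\g)$,
\[
	\ad_{\overline{\tw}}(O)=\overline{\tw}\,O\,\overline{\tw}^{\,-1}=\overline{\tw\,\overline{O}\,\tw^{-1}}=\overline{\rho(\overline{O})}.
\]
It therefore suffices to check $\ad_{\overline{\tw}}=\rho$ on the generators $v\in V$. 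Using that $\overline{v}=\bar v\in V$, the defining relation $\rho(w)=w^\dagger$ for $w\in V$, together with the identities $\bar v^\dagger=v^\top$ and $\overline{v^\top}=v^\dagger$ recorded just after Definition \ref{DefKCliff}, one finds
\[
	\ad_{\overline{\tw}}(v)=\overline{\rho(\bar v)}=\overline{\bar v^\dagger}=\overline{v^\top}=v^\dagger=\rho(v).
\]

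Since $V$ generates $Cl(V,\g)$ as an algebra and both $\ad_{\overline{\tw}}$ and $\rho$ are algebra automorphisms, they must coincide on all of $Cl(V,\g)$, so $\ad_{\overline{\tw}}=\rho=\ad_{\tw}$. Consequently $\tw^{-1}\overline{\tw}$ is central, whence $\overline{\tw}=\lambda\,\tw$ for some $\lambda\in\mathbb{C}^\times$ by Proposition \ref{PropComV}. Finally, both $\tw$ and $\overline{\tw}$ are unitary, which forces $\abs{\lambda}=1$, i.e. $\lambda=\exp(i\theta^\prime)$ with $\theta^\prime\in\mathbb{R}$, as claimed.

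The only delicate aspect is bookkeeping the antilinearity of $cc$: one must verify $\overline{AB}=\overline{A}\,\overline{B}$, $\overline{O^\dagger}=(\overline{O})^\dagger$, $\overline{\tw^{-1}}=\overline{\tw}^{\,-1}$, and the three-way interplay of $\dagger$, transpose, and bar on $V$. These are all elementary, but sign or ordering slips are easy to make. An alternative route would proceed through the commutation relation $\rho\circ\kappa=\kappa\circ\rho$ of Proposition \ref{PropComStructAutom} combined with $J=C\circ cc$; the direct conjugation computation above is, however, more economical and self-contained.
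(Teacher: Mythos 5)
Your proof is correct and follows essentially the same route as the paper: both arguments reduce to the identity $\overline{\rho(\bar v)}=\rho(v)$ (equivalently $\rho(\bar v)=\overline{\rho(v)}$), deduce that $\tw$ and $\overline{\tw}$ implement the same conjugation on $V$, invoke Proposition \ref{PropComV} to conclude that $\tw^{-1}\overline{\tw}$ (resp.\ $\tw^\dagger\overline{\tw}$) is a scalar, and then fix its modulus to one. The only cosmetic difference is that you obtain $\abs{\lambda}=1$ from the unitarity of $\overline{\tw}$, whereas the paper substitutes $\overline{\tw}=\alpha\tw$ back into the conjugation relation to get $\alpha\alpha^*=1$.
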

	
	\begin{proof}
		The relation $\rho(\bar{v})=\overline{\rho(v)}$ implies that 
		\begin{equation}
			\tw \bar{v}\tw^\dagger=\overline{\tw}\bar{v} \tw^\top.
		\end{equation}
		Equivalently, we write
		\begin{equation}
			\tw^\dagger\overline{\tw}\bar{v} \tw^\top\tw=\bar{v}, \qquad\quad \forall \bar{v}\in V.
		\end{equation}
		This implies that $\tw^\dagger\overline{\tw}$ belongs to the commutant $V^\prime$ and is therefore of the form $\alpha\mbb$ with $\alpha\in \mathbb{C}$. Multiplying on the left by $\tw$, we obtain
		\begin{equation}
			\overline{\tw}=\alpha\tw.
		\end{equation}
		Substituting into the previous expression, we find
		\begin{equation}
			\tw \bar{v}\tw^\dagger=\overline{\tw}\bar{v} \tw^\top=\alpha\alpha^*\tw\bar{v}\tw^\dagger,
		\end{equation}
		which requires $\alpha\alpha^*=1$, meaning that $\alpha=\exp(i\theta^\prime)$ with $\theta^\prime\in\mathbb{R}$.
	\end{proof}
	From now on, we impose $\overline{\tw}=\pm\tw$.
	
	\begin{proposition}
		We have that 
		\begin{equation}
			\tw\Gamma=\exp(i\theta_1)\Gamma\tw, \qquad \tw C=\exp(i\theta_2)C\tw, \qquad \text{and} \qquad \Gamma C=\exp(i\theta_3)C\Gamma,
		\end{equation}
		where $\theta_i\in \mathbb{R}$ for $i\in\{1,2,3\}$.
	\end{proposition}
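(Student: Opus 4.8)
The plan is to deduce each commutation-up-to-phase relation from the corresponding commutation of the structural automorphisms established in Proposition \ref{PropComStructAutom}, using that the center of $Cl(V,\g)\simeq M_{2^m}(\mathbb{C})$ is $\mathbb{C}\mbb$ (Proposition \ref{PropComV}) together with the unitarity of $\tw,\Gamma,C$. Since the three relations are proved identically, I describe the argument only for the pair $(\tw,\Gamma)$ and indicate the repetition at the end.

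First I would translate the equality $\rho\circ\chi=\chi\circ\rho$ into a statement about the implementing unitaries. Writing both sides as conjugations, $\rho(\chi(x))=\chi(\rho(x))$ for all $x\in Cl(V,\g)$ reads
\[
\tw\Gamma\, x\,(\tw\Gamma)^{-1}=\Gamma\tw\, x\,(\Gamma\tw)^{-1},
\]
so that $Z\defeq \tw^{-1}\Gamma^{-1}\tw\Gamma$ satisfies $ZxZ^{-1}=x$ for every $x$. Hence $Z$ commutes with all of $Cl(V,\g)$, which by Proposition \ref{PropComV} is the center $\mathbb{C}\mbb$; thus $Z=\lambda\mbb$ for some $\lambda\in\mathbb{C}$, i.e. $\tw\Gamma=\lambda\,\Gamma\tw$.

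Next I would show $|\lambda|=1$, which is where unitarity enters. Taking the Hilbert-space adjoint of $\tw\Gamma=\lambda\,\Gamma\tw$ and using $\tw^\dagger=\tw^{-1}$, $\Gamma^\dagger=\Gamma^{-1}$ gives $\Gamma^{-1}\tw^{-1}=\lambda^*\,\tw^{-1}\Gamma^{-1}$, whereas inverting the relation directly gives $\Gamma^{-1}\tw^{-1}=\lambda^{-1}\,\tw^{-1}\Gamma^{-1}$. Comparing the two yields $\lambda^*=\lambda^{-1}$, hence $\lambda\lambda^*=1$, so $\lambda=\exp(i\theta_1)$ with $\theta_1\in\mathbb{R}$.

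Finally, the same two steps applied verbatim to the pairs $(\tw,C)$ and $(\Gamma,C)$ produce $\tw C=\exp(i\theta_2)C\tw$ and $\Gamma C=\exp(i\theta_3)C\Gamma$ with $\theta_2,\theta_3\in\mathbb{R}$; this is legitimate because $\rho\circ\kappa=\kappa\circ\rho$ and $\chi\circ\kappa=\chi\circ\kappa$ also hold by Proposition \ref{PropComStructAutom}, and because $C$ is likewise unitary (the antilinear part of the charge conjugation being carried by $cc$ in $J=C\circ cc$, not by $C$ itself). The one point requiring care is the second step: the centrality argument alone determines $\lambda$ only up to a nonzero complex scalar, and it is precisely the unitarity of the implementing operators that pins $|\lambda|$ to $1$. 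Without unitarity one could conclude commutation only up to an arbitrary invertible scalar, so I expect this to be the substantive part of the proof, the centrality step being essentially formal once Propositions \ref{PropComV} and \ref{PropComStructAutom} are in hand.
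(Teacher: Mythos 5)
Your proposal is correct and follows essentially the same route as the paper: the commutation of the structural automorphisms (Proposition \ref{PropComStructAutom}) forces $\tw^{-1}\Gamma^{-1}\tw\Gamma$ (and its analogues) into the center $\mathbb{C}\mbb$ via Proposition \ref{PropComV}, and unitarity then pins the scalar to a phase. The only cosmetic difference is in the last step — you compare the adjoint of $\tw\Gamma=\lambda\Gamma\tw$ with its inverse to get $\lambda^*=\lambda^{-1}$, whereas the paper substitutes $\tw\Gamma=\alpha\Gamma\tw$ back into the conjugation identity to read off $\alpha\alpha^*=1$ — which are equivalent uses of the same unitarity hypothesis.
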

	
	\begin{proof}
		From Proposition \ref{PropComStructAutom}, taking any element $a\in Cl(V, \g)$, we have
		\begin{equation}
			\label{Eqpomp}
			\tw\Gamma a (\tw\Gamma)^\dagger=\Gamma\tw a (\Gamma\tw)^\dagger, \quad 
			\tw C a (\tw C)^\dagger=C\tw a (C\tw)^\dagger, \quad 
			\Gamma C a (\Gamma  C)^\dagger=C\Gamma  a (C\Gamma )^\dagger.
		\end{equation}
		This implies that $\tw^{-1}\Gamma^{-1}\tw\Gamma$, $\tw^{-1}C^{-1}\tw C$, and $\Gamma^{-1}C^{-1}\Gamma C$ belong to the center $Z(Cl(V, \g))$, and are therefore of the form $\alpha\mbb$ with $\alpha\in\mathbb{C}$. As a consequence, we obtain
		\begin{equation}
			\tw\Gamma=\alpha_1\Gamma\tw, \qquad \tw C=\alpha_2C\tw, \qquad \text{and} \qquad \Gamma C=\alpha_3C\Gamma,
		\end{equation}
		where $\alpha_i\in\mathbb{C}$ for $i\in\{1,2,3\}$. Substituting these expressions back into equation \eqref{Eqpomp}, we find that $\alpha_i\alpha_i^*=1$, hence the result.
	\end{proof}
	To prevent unnecessary complex factors, we impose $\exp(i\theta_i)=\pm 1$ for $i\in\{1,2,3\}$.
	
	From now on, we assume the chosen representation of $e_a$ to be unitary\footnote{This choice is natural and essential for the upcoming definition of the Dirac operator.} with respect to the inner product $\langle \, \cdot\, , \, \cdot\, \rangle$. This, in particular implies that $e_a^\dagger=\g_a e_a$. Consequently, the vector spaces $V^\pm\subset Cl^{\, \pm}(V, \g)$ induced by $\rho$ are also those induced by the metric, as
	\begin{equation}
		\label{EqRelMetRho}
		\rho(e_a)=e_a^\dagger=\g_a e_a=\pm e_a\qquad\text{for}\qquad e_a\in V^\pm,
	\end{equation}
	hence the notation $Cl^{\, \pm}(V, \g)$ for the splitting induced by $\rho$. The twist then reverses the components associated with negative metrics. The automorphism $\rho$ thus corresponds to the parity operator, a point that will be further emphasized in Subsection \ref{subsectionsignaturechange}. 
	
	\begin{remark}
		\label{Rqcasriem}
		In the case of a Riemannian metric, i.e., $n=2m$, we have $e_a^\dagger=\g_a e_a=e_a$, so that $v=v^\dagger$, meaning that we can take $\tw=\mbb$.
	\end{remark}
	Let $\hat{V}$ be a vector space corresponding to a positive definite metric, such that the orthonormal basis $\{\hat{e}_a\}_{1\leq a\leq 2m}$ for $\hat{V}$ is connected to the one of $V$ by a Wick rotation:
	\begin{align*}
		&\hat{e}_a=e_a\qquad \qquad\,\, \text{for }  1\leq a\leq n,\\
		&\hat{e}_a=ie_a\qquad\qquad \text{for } n+1\leq a\leq 2m.
	\end{align*} 
	The unitarity of the $e_a$'s implies the unitarity of the $\hat{e}_a$'s. The corresponding real operator is denoted by $\JH=\CH\circ cc$\footnote{Similarly to subsection \ref{SecTwistedRealStruct}, as this corresponds to the spectral triple case.}, with $\hat{\kappa}(\hat{v})=-\bar{\hat{v}}$ for any $\hat{v}\in\hat{V}$.
	
	\begin{proposition}
		\label{PropRealPseudo}
		We have $\kappa=\hat{\kappa}\circ\rho=\rho\circ\hat{\kappa}$ on $Cl(V, \g)$.   
	\end{proposition}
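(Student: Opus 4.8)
The plan is to reduce the claimed identity to a check on the orthonormal generators $\{e_a\}_{1\le a\le 2m}$. Since $\rho$, $\hat{\kappa}$ and $\kappa$ are all inner automorphisms of $Cl(V, \g)$, hence $\mathbb{C}$-linear and multiplicative, and since $Cl(V, \g)$ is generated by $\mbb$ together with the $e_a$, it is enough to verify $\kappa(e_a) = (\hat{\kappa}\circ\rho)(e_a) = (\rho\circ\hat{\kappa})(e_a)$ for each $a$. The identity on $\mbb$ is automatic because every structural automorphism fixes the unit. Once the three maps agree on a generating set and are algebra automorphisms, they agree everywhere.

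The key intermediate step is to compute $\hat{\kappa}$ on the \emph{unhatted} basis vectors. For $1\le a\le n$ one has $\hat{e}_a = e_a$, so directly $\hat{\kappa}(e_a) = -\overline{e_a}$. For $n+1\le a\le 2m$ one has $e_a = -i\,\hat{e}_a$; using that $\hat{\kappa}$ is $\mathbb{C}$-linear together with $\hat{\kappa}(\hat{e}_a) = -\overline{\hat{e}_a}$ and $\overline{\hat{e}_a} = \overline{i e_a} = -i\,\overline{e_a}$, one gets $\hat{\kappa}(e_a) = -i\,\hat{\kappa}(\hat{e}_a) = i\,\overline{\hat{e}_a} = \overline{e_a}$. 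Both cases are packaged into the single formula $\hat{\kappa}(e_a) = -\g_a\,\overline{e_a}$, where $\g_a = \g(e_a,e_a)=\pm1$.

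With this in hand, the two compositions follow from $\rho(e_a) = e_a^\dagger = \g_a e_a$ and from $\rho(\overline{e_a}) = (\overline{e_a})^\dagger = e_a^\top = \g_a\,\overline{e_a}$ (the latter using the conventions $v^\top=\overline{v}^\dagger$ and $e_a^\dagger=\g_a e_a$), together with $\g_a^2=1$:
\begin{align*}
(\hat{\kappa}\circ\rho)(e_a) &= \g_a\,\hat{\kappa}(e_a) = -\g_a^2\,\overline{e_a} = -\overline{e_a} = \kappa(e_a),\\
(\rho\circ\hat{\kappa})(e_a) &= -\g_a\,\rho(\overline{e_a}) = -\g_a^2\,\overline{e_a} = -\overline{e_a} = \kappa(e_a).
\end{align*}
Extending these equalities by $\mathbb{C}$-linearity and multiplicativity over the generators yields $\kappa = \hat{\kappa}\circ\rho = \rho\circ\hat{\kappa}$ on all of $Cl(V, \g)$ (the commutation $\hat\kappa\circ\rho=\rho\circ\hat\kappa$ being a byproduct, since both sides equal $\kappa$).

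The main obstacle is the sign bookkeeping in the second case of the intermediate step. Because $\hat{\kappa}$ is $\mathbb{C}$-linear as an inner automorphism yet is built from complex conjugation, the factor $i$ in $\hat{e}_a = i e_a$ enters twice with opposite effect — once in rewriting $e_a = -i\hat{e}_a$ and once inside $\overline{\hat{e}_a}$ — producing the sign flip that exactly cancels the $\g_a$ introduced by $\rho$. Getting this sign correct is precisely where the Wick rotation does its work, so I would be careful to fix the conventions for $\overline{(\cdot)}$, $(\cdot)^\dagger$ and $(\cdot)^\top$ before evaluating $\rho(\overline{e_a})$, and to confirm that $\overline{e_a}\in V$ so that $\rho$ may be applied to it via $\rho(w)=w^\dagger$.
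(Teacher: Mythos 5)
Your proof is correct and follows essentially the same route as the paper: verify the identities on the generators $e_a$ using the Wick-rotation relation between $\{e_a\}$ and $\{\hat e_a\}$ together with $\rho(e_a)=\g_a e_a$ and $\rho(\bar v)=\overline{\rho(v)}$, then extend by multiplicativity and linearity. The only cosmetic difference is that you first package the computation into the intermediate formula $\hat\kappa(e_a)=-\g_a\,\overline{e_a}$ before composing, whereas the paper evaluates the two compositions directly case by case; the sign bookkeeping is identical.
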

	\begin{proof}
		Taking any element $e_a$ of the basis of $V$ we have
		\begin{align*}
			&\hat{\kappa}(\rho(e_a))=\hat{\kappa}(e_a)=\hat{\kappa}(\hat{e}_a)=- \bar{e}_a \qquad \qquad\qquad\quad\,\,\ \text{for }  1\leq a\leq n,\\
			&\hat{\kappa}(\rho(e_a))=-\hat{\kappa}(e_a)=i\hat{\kappa}(\hat{e}_a)=- \bar{e}_a\qquad\qquad\qquad\,\,\,\, \text{for } n+1\leq a\leq 2m.
		\end{align*} 
		Similarly, for $\rho\circ\hat{\kappa}$, using $\rho(\bar{v})=\overline{\rho(v)}$ we get
		\begin{align*}
			&\rho(\hat{\kappa}(e_a))=-\rho(\bar{e}_a)=-\overline{\rho(e_a)} =- \bar{e}_a \qquad\quad\quad  \qquad\,\,\, \text{for }  1\leq a\leq n,\\
			&\rho(\hat{\kappa}(e_a))=i\rho(\bar{\hat{e}}_a)=\rho(\bar{e}_a)=\overline{\rho(e_a)} =- \bar{e}_a\qquad\qquad  \,\, \text{for } n+1\leq a\leq 2m.
		\end{align*} 
		Thus, we obtain $\kappa=\hat{\kappa}\circ\rho=\rho\circ\hat{\kappa}$ on all $V$. Then, for any element $a=\alpha e_{a_1}e_{a_2}\dots e_{a_k}$ with $\alpha\in\mathbb{C}$ and $k\in\mathbb{N}$, we have:
		\[
		\kappa(a)=\rho(\hat{\kappa}(a))=\hat{\kappa}(\rho(a)).
		\]
		By linearity, this relation extends to all of $Cl(V, \g)$, concluding the proof.
	\end{proof}

	We can then define $C=\tw\hat{C}$ with $\tw\hat{C}=\pm \hat{C} \tw$ as $\tw C=\pm C \tw$. We recover the relation
	\begin{equation}
		\label{EqDelta}
		J=\tw\JH 
	\end{equation}
	obtained in the context of twisted spectral triples (see equation \eqref{EqRelJ}).
	
	\begin{remark}
		In full generality, if we do not restrict to the even-dimensional case, the defining relation for $\kappa$ is given by $\kappa(v)\defeq -\epsilon_1^\tw\bar{v}$ (see relation \eqref{EqDefStructParamK} for the definition of $\epsilon_1^\tw$). By defining $\hat{\kappa}(\hat{v})=-\hat{\epsilon}_1\bar{\hat{v}}$ and $\delta=\epsilon_1^\tw/\hat{\epsilon}_1=\pm 1$, the result of Proposition \ref{PropRealPseudo} becomes
		\[
		\kappa=\delta\hat{\kappa}\circ\rho=\delta\rho\circ\hat{\kappa}.
		\]
		The result of equation \eqref{EqDelta} is then a specificity of the case $\delta=1$, which always holds in the even-dimensional case (see Table \ref{KoDimTable}).
	\end{remark}
	
	In addition, we have the relation
	\begin{equation}
		\tw J=\tw C\circ cc=\pm C \tw\circ cc=\pm J\overline{\tw}=(\pm) J\tw,
	\end{equation}
	where the last $\pm$ sign is in parentheses to signify that it is not necessarily the same $\pm$ sign as the ones without parentheses. This notation will not be retained in what follows. We recover relation \eqref{EqRelKJ}, i.e., $\rho(J)=\epsilon J$, given in the context of twisted spectral triples.
	
	It is interesting to note that most of these results would have been different in the odd-dimensional case, as the Clifford algebra is not simple in this case. This implies, in particular, that not all automorphisms are inner, and the commutant $V^\prime$ is no longer proportional to $\mathbb{C}$, properties that have been essential in many demonstrations.

	\subsection{From automorphisms of Clifford algebras to Krein products}
	\label{subsecAutomClifftoKrein}
	
	This section aims to show how the previously defined twist plays a fundamental role in describing the automorphisms of the Clifford algebra $Cl(V, \g)$ which are also automorphisms of $V$ and the induced automorphism invariant inner product in the case of the orthochronous spin group.

	The following proposition establishes the $\glv$-invariance of the structural automorphisms' definitions, justifying the term "structural". 
	
	\begin{proposition}
		\label{PropConDefStruct}
		The relations given in \eqref{EqDefStruct} are preserved under $\glv$, i.e.,
		\begin{equation}
			\label{EqConsRelStructOp}
			\rho(\phi(v))=\phi(v)^\dagger, \qquad\quad \chi(\phi(v))=-\phi(v), \qquad\quad \kappa(\phi(v))=-\overline{\phi(v)}
		\end{equation}
		for all $v\in V$ and all $\phi\in \glv$.
	\end{proposition}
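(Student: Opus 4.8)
The key point is that every $\phi\in\glv$ maps $V$ into itself, which reduces the claim to the defining relations \eqref{EqDefStruct}. The plan is as follows.

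First, I would record that $\phi(v)\in V$ for every $v\in V$. Writing $\phi=\phi_A$ with $A\in\glv$, the extension rule \eqref{RelExtent} reads $\phi_A(\cl(v))=\cl(Av)$; since $A$ acts as a real linear map on $V\simeq\mathbb{R}^{2m}$, the vector $Av$ again lies in $V$, so under the identification of $v$ with $\cl(v)$ we get $\phi(v)\in V$. Note that here we only need the action of $\phi$ on $V$, which is defined for \emph{all} of $\glv$ by \eqref{RelExtent}; we do not require $\phi$ to extend to an algebra automorphism of $Cl(V,\g)$.

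Second, I would substitute the vector $\phi(v)$ into each of the three defining relations of Definition \ref{DefKCliff}. Since $\rho(w)=w^\dagger$, $\chi(w)=-w$ and $\kappa(w)=-\overline{w}$ hold for every $w\in V$, the choice $w=\phi(v)$ gives at once
\[
\rho(\phi(v))=\phi(v)^\dagger,\qquad \chi(\phi(v))=-\phi(v),\qquad \kappa(\phi(v))=-\overline{\phi(v)},
\]
which is exactly \eqref{EqConsRelStructOp}. The only auxiliary fact used is that $V$ is stable under $\dagger$ and under complex conjugation, so that the right-hand sides lie in $V$ and the three relations are internally consistent; this was already recorded just after Definition \ref{DefKCliff}.

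I do not expect any genuine obstacle: the statement is conceptual rather than computational, expressing the frame-independence of the structural operators. The single idea carrying the proof is the stability $\phi(V)\subseteq V$ furnished by \eqref{RelExtent}, which turns ``invariance under $\glv$'' into a direct instance of the definitions. Accordingly I would expect the written proof to be only a couple of lines, the sole point worth stressing being that $\glv$ acts through the representation precisely so as to preserve the subspace $V$ on which $\rho,\chi,\kappa$ are prescribed.
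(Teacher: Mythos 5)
Your proposal is correct and follows essentially the same route as the paper: both arguments reduce to the stability $\phi(V)\subseteq V$, which holds because $\phi_A$ acts as a \emph{real} linear map via \eqref{RelExtent}, after which the defining relations of Definition \ref{DefKCliff} apply verbatim to $\phi(v)\in V$. The paper merely spells this out by expanding $\phi(e_a)=\sum_b\lambda_{ab}e_b$ with $\lambda_{ab}\in\mathbb{R}$ and checking each operator term by term, which makes explicit the same reality-of-coefficients point you invoke.
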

	\begin{proof}
		Taking $\phi\in \glv$, from relation \eqref{RelExtent} $\phi$ acts as a real linear transformation on $V$, so that $\phi(e_a)=\sum_{b=1}^{2m}\lambda_{ab}e_b$ for some $\lambda_{ab}\in \mathbb{R}$. Then we have that
		\begin{align*}
			&\rho(\phi(e_a))=\sum_{b=1}^{2m}\lambda_{ab}\rho(e_b)=\sum_{b=1}^{2m}\lambda_{ab}e_b^\dagger=(\sum_{b=1}^{2m}\lambda_{ab}e_b)^\dagger=\phi(e_a)^\dagger, \\
			&\chi(\phi(e_a))=\sum_{b=1}^{2m}\lambda_{ab}\chi(e_b)=-\sum_{b=1}^{2m}\lambda_{ab}e_b=-\phi(e_a), \\
			&\kappa(\phi(e_a))= \sum_{b=1}^{2m}\lambda_{ab}\kappa(e_b)=-\sum_{b=1}^{2m}\lambda_{ab}\bar{e}_b=-\overline{\phi(e_a)}.
		\end{align*}
		We then recover the result of relations \eqref{EqConsRelStructOp} for any $v\in V$.
	\end{proof}
	Notably, if $\phi\in \glv$, we have $\phi\circ \chi=\chi\circ \phi$.
	
	From now on, we will only consider elements in $\glpv=\opvg\subset\Aut(Cl(V, \g))$.
	
	\begin{proposition}
		Given $\phi \in \opvg$, defined via $x\in Cl(V, \g)^{\, \times}$ by $\phi_x(y)=\ad_x(y)$ for all $y\in Cl(V, \g)$, we have
		\begin{equation}
			\label{EqGh}
			x^{-1}=\alpha \rho(x^\dagger),
		\end{equation}
		where $\alpha$ is a real number.
	\end{proposition}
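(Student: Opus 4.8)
The plan is to exploit that $\phi=\ad_x$ is an \emph{algebra} automorphism (this is exactly why $\phi\in\opvg$ matters: orthogonal maps are the ones lifting to automorphisms of $Cl(V,\g)$, which are all inner) together with the structural identity $\rho(\phi(v))=\phi(v)^\dagger$ from Proposition~\ref{PropConDefStruct}, and then to pin the resulting scalar down using the centre computation of Proposition~\ref{PropComV}. First I would expand both sides of the first relation in \eqref{EqConsRelStructOp} for the twist $\rho$, substituting $\phi(v)=xvx^{-1}$. On the left, since $\rho$ is an algebra morphism with $\rho(v)=v^\dagger$ on $V$, one gets $\rho(xvx^{-1})=\rho(x)\,v^\dagger\,\rho(x)^{-1}$; on the right, $\phi(v)^\dagger=(xvx^{-1})^\dagger=(x^\dagger)^{-1}v^\dagger x^\dagger$. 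Because $v\mapsto v^\dagger=\rho(v)$ is a bijection of $V$, the element $w=v^\dagger$ ranges over all of $V$, so the identity $\rho(x)\,w\,\rho(x)^{-1}=(x^\dagger)^{-1}\,w\,x^\dagger$ holds for every $w\in V$.

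The second step is to promote this to all of $Cl(V,\g)$: since $V$ generates the Clifford algebra and both $\ad_{\rho(x)}$ and $\ad_{(x^\dagger)^{-1}}$ are algebra homomorphisms, agreement on $V$ forces agreement everywhere. Two invertible elements implementing the same inner automorphism differ by a factor in the kernel of $\ad$, i.e.\ a central element, and by Proposition~\ref{PropComV} the centre is $\mathbb{C}\mbb$; hence there is $\beta\in\mathbb{C}^\times$ (nonzero because $x$, and therefore $\rho(x)$, is invertible) with
\begin{equation*}
	x^\dagger\rho(x)=\beta\mbb .
\end{equation*}
Applying the complex-linear involutive morphism $\rho$ to this relation and using $\rho^2=\mbb$ gives $\rho(x^\dagger)\,x=\beta\mbb$, which rearranges to $x^{-1}=\beta^{-1}\rho(x^\dagger)$ — the claimed form with $\alpha=\beta^{-1}$.

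The main obstacle, and the only genuinely delicate point, is showing that $\alpha$ (equivalently $\beta$) is \emph{real} rather than an arbitrary phase. Here I would use that $\rho$ is a $*$-automorphism, so $\rho(O)^\dagger=\rho(O^\dagger)$ (it is implemented by the unitary $\tw$). Taking the $\dagger$-adjoint of $x^\dagger\rho(x)=\beta\mbb$ yields $\rho(x)^\dagger x=\rho(x^\dagger)\,x=\beta^*\mbb$, whereas applying $\rho$ directly yielded $\rho(x^\dagger)\,x=\beta\mbb$. Comparing the two expressions for $\rho(x^\dagger)\,x$ forces $\beta=\beta^*$, i.e.\ $\beta\in\mathbb{R}^\times$, and hence $\alpha=\beta^{-1}\in\mathbb{R}$, completing the proof. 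The subtlety is entirely in combining the three distinct properties of $\rho$ — that it is complex-linear, involutive, and $\dagger$-compatible — to rule out a non-real phase.
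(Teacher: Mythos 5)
Your proof is correct and takes essentially the same route as the paper's: both expand the preserved relation $\rho(\phi_x(v))=\phi_x(v)^\dagger$ to conclude that $\rho(x^\dagger)\,x$ lies in the commutant $V^\prime=\mathbb{C}\mbb$ of Proposition \ref{PropComV}, and then pin down the reality of the resulting scalar by an adjoint computation. The only cosmetic difference is that the paper manipulates the explicit unitary implementation $\tw$ and the phase relation $\tw=\exp(i\theta)\tw^\dagger$, whereas you use the equivalent abstract facts that $\rho$ is an involutive, complex-linear $*$-automorphism.
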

	
	\begin{proof}
		Since $\phi\in \glv$, the preservation of the defining relation $\rho(v)=v^\dagger$ implies that $\phi_x(v)^\dagger=\rho(\phi_x(v))$ for any $v\in V$, so that
		\begin{equation}
			\label{EqPA}
			(x^{-1})^\dagger v^\dagger x^\dagger=\tw x v x^{-1} \tw^\dagger.
		\end{equation}
		Using the fact that $v^\dagger=\tw v\tw^\dagger$, equation \eqref{EqPA} becomes
		\begin{equation*}
			x^{-1}\tw^\dagger (x^{-1})^\dagger \tw v\tw^\dagger x^\dagger\tw x=v.
		\end{equation*}
		This rewrites $A^{-1} v A = v$, where $A=\tw^\dagger x^\dagger\tw x$. This implies that $A$ belongs to the commutant of $V$, and hence takes the form $\beta \mbb$ from Proposition \ref{PropComV}, with $\beta\in\mathbb{C}$. Consequently, we obtain $\tw^\dagger x^\dagger\tw x=\beta\mbb$ and then $	x^{-1}=\frac{1}{\beta} \tw^\dagger x^\dagger \tw$.	Multiplying $\tw^\dagger x^\dagger\tw x=\beta\mbb$ on the left by $\tw$, this relation becomes $x^\dagger\tw x=\beta\tw$. Taking the adjoint, we obtain $x^\dagger\tw^\dagger x=\beta^*\tw^\dagger$, which is equivalent to $x^\dagger\tw x=\beta^*\tw$
		since $\tw=\exp(i\theta)\tw^\dagger$. This implies that $\beta$ is real. Identifying $\alpha$ with $1/\beta$, we obtain $x^{-1}=\alpha \tw^\dagger x^\dagger \tw=\alpha \tw x^\dagger \tw^\dagger$, hence the result.
	\end{proof}
	We define $V^{\, \times}=V\cap Cl^{\, \times}(V, \g)$ to be the subset of invertible elements in $V$ or equivalently\footnote{The relation $vv=\g_v\mbb$ directly implies the existence of $v^{-1}$ if $\g_v\neq 0$. If $\g_v= 0$ we have that $vv=0$, so that the existence of $v^{-1}$ would lead to the contradiction $v=v(vv^{-1})=(vv)v^{-1}=0$, meaning that $v^{-1}$ does not exist in this case. Hence, the equivalent definitions for $V^{\, \times}$.} 
	\begin{align}
		V^{\, \times}=\{v\in V\, \mid\, \g(v,v)\neq 0\}.
	\end{align}
	
	Taking $u,v\in V$ and observing that $\ad_v(u)\in V$, the group $P(V, \g)\subset \opvg$ is defined as the group generated by the elements of $V^{\, \times}$:
	\begin{equation}
		P(V, \g)=\{v_1\dots v_k\, \mid\, v_1, \dots , v_k \in V^{\, \times} \quad \text{and}\quad k\in \mathbb{N}^*\,\, \text{is finite}\}.
	\end{equation}
	
	Taking $x=v_1\dots v_k\in P(V, \g)$, we define $q(x)\defeq \card\{v_i\, \mid\, \g_{v_i}<0\}$. We then have that 
	\begin{equation*}
		x^{-1}=v_k^{-1}\dots v_1^{-1}=v_k g_{v_k}^{-1}\dots v_1 g_{v_1}^{-1}
	\end{equation*}
	
	and that $\rho(x^\dagger)=v_k\dots v_1$, implying that equation \eqref{EqGh} becomes
	\begin{equation}
		\label{EqRelInt}
		x^{-1}=\alpha \rho(x^\dagger)=g_{v_1}^{-1}\dots g_{v_k}^{-1}\rho(x^\dagger).
	\end{equation}
	
	Now if we define the (twisted) adjoint action $\tphi_x(y)\defeq\tad_x(y)=\chi(x) y x^{-1}$, taking $u, v \in V$, we have that
	\begin{equation}
		\tad_v(u)=u-2\frac{\g(v, u)}{\g_v}v 
	\end{equation}
	is a reflection across the orthogonal complement $v^\perp$. Moreover, we have 
	\[
	\tad_{x}=\tad_{v_1\dots v_k}=\tad_{v_1}\circ \dots \circ \tad_{v_k} \quad \text{and} \quad \ad_{v_1\dots v_k}=\ad_{v_1}\circ \dots \circ \ad_{v_k}
	\]
	since $\tad_v=-\ad_v$. Therefore, the image of $P(V,\g)$ under $\tad$ is the group generated by reflections, which is the full orthogonal group; see \cite{lawson2016spin} for more details.

	The unit sphere $\US$ for $\g$ is defined as 
	\begin{align*}
		\US\defeq\{v\in V\, \mid\, \g_v=\pm 1\}.
	\end{align*}
	Taking ${v}\in V\cap \US$ and $u\in V$, we obtain
	\begin{equation}
		\tad_{v}(u)=u-2\frac{\g({v}, u)}{\g_{v}}{v}=\left\{
		\begin{array}{ll}
			\, \,\,\,	u &\qquad\text{ for $u\in v^\perp$}  ,\\ 
			-u &\qquad\text{ for $u=v$.}
		\end{array}\right.
	\end{equation} 
	Such a transformation corresponds precisely to an orthogonal reflection. The induced subgroup $Pin(V, \g)\subset P(V, \g)$ is defined as 
	\begin{equation}
		\label{EqPin}
		Pin(V, \g)=\{v_1\dots v_k\, \mid\, v_1, \dots , v_k \in \US \quad \text{and}\quad k\in \mathbb{N}^*\}.
	\end{equation}
	Taking any $x\in Pin(V, \g)$, from equation \eqref{EqRelInt} we have that 
	\begin{align*}
		x^{-1}=\pm \rho(x^\dagger).
	\end{align*}

	We define the "fixed" $\g$-orthonormal basis 
	\[
	e_{(i)}\defeq e_a|_{a=i} 
	\]
	which is, by definition, invariant under $\glv$. The corresponding "fixed" vector space is denoted by $V_f$. The subgroup $Pin(V_f, \g)$, i.e., the fixed version of $Pin(V, \g)$, is generated by the elements of $V_f$ in the unit sphere.

	\begin{proposition}
		The structural automorphisms $\rho, \chi, \kappa$ are elements of $Pin(V_f, \g)$ represented by the adjoint action.
	\end{proposition}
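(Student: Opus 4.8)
The plan is to produce, for each structural automorphism, an explicit implementing operator that is a product of the fixed unit basis vectors $e_{(i)}$ and to identify the automorphism with its untwisted adjoint action. Since $\Aut(Cl(V, \g)) = \Inn(Cl(V, \g))$ and each of $\rho, \chi, \kappa$ is an algebra automorphism, we already have $\rho = \ad_\tw$, $\chi = \ad_\Gamma$ and $\kappa = \ad_C$; the content to be shown is that $\tw, \Gamma, C$ may be chosen in $Pin(V_f, \g)$. The starting observation is that all three lie in $\opvg = \glpv$ and hence restrict to orthogonal transformations of $V$, so by the Cartan--Dieudonn\'e fact recalled above (the twisted adjoint action $\tad$ sends products of unit vectors onto the full orthogonal group generated by reflections) each is a composite of hyperplane reflections across fixed coordinate hyperplanes. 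Three elementary tools handle the bookkeeping: on vectors one has $\ad_x = (-1)^k \tad_x$ whenever $x$ is a product of $k$ vectors; the volume element $\omega \defeq e_1 e_2 \cdots e_{2m}$ anticommutes with every $e_a$ because $\dim V$ is even, so $\ad_\omega = \chi$; and every product of basis vectors is unitary, since each $e_a$ is unitary with $e_a^\dagger = \g_a e_a$ and $\g_a^2 = 1$.

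For the grading I would take $\Gamma = \omega$: then $\ad_\omega(e_a) = -e_a = \chi(e_a)$, and as $\omega$ is a product of the $2m$ fixed unit vectors we get $\chi = \ad_\Gamma$ with $\Gamma \in Pin(V_f, \g)$. For the twist I would set $W \defeq e_{n+1} \cdots e_{2m}$, the product over the negative directions; the composite $\tad_W$ negates $V^-$ and fixes $V^+$, which is exactly $\rho$ because $\rho(e_a) = e_a^\dagger = \g_a e_a$. Passing to the untwisted action gives $\ad_W = (-1)^{2m-n}\rho$ on $V$, so I would take $\tw = W$ when $2m - n$ is even and $\tw = \omega W$ when $2m - n$ is odd, using $\ad_\omega = \chi = -\Id$ on $V$ to absorb the sign; since $\ad_\tw$ and $\rho$ are algebra automorphisms agreeing on the generators, they agree on $Cl(V, \g)$, and in both cases $\tw$ is a genuine (unitary) product of fixed unit basis vectors, so $\rho = \ad_\tw$ with $\tw \in Pin(V_f, \g)$.

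The charge conjugation is the \emph{delicate} case, since $\kappa$ involves complex conjugation and is only $\mathbb{R}$-linear on $V$. Here I would use Proposition \ref{PropRealPseudo}, $\kappa = \hat{\kappa} \circ \rho$, together with the factorization $C = \tw \CH$. The Riemannian charge conjugation $\hat{\kappa}(\hat{v}) = -\bar{\hat{v}}$ is an orthogonal transformation of $\hat{V}$ built entirely from the fixed basis, hence again a product of reflections across fixed coordinate hyperplanes of $\hat{V}$, so $\CH$ is, up to a scalar, a product of the $\hat{e}_a$. The Wick rotation $\hat{e}_a = e_a$ for $a \le n$ and $\hat{e}_a = i e_a$ for $a > n$ turns any such product into a power of $i$ times the product of the corresponding $e_a$, whence $\CH$, and therefore $C = \tw \CH$, agrees up to a phase with a product of fixed unit basis vectors of $V$. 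This gives $\kappa = \ad_C$ represented by an element of $Pin(V_f, \g)$, completing the three cases.

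The step I expect to be the main obstacle is precisely this last one: controlling the antilinear operator $C$ and checking that the conjugation $\hat{v} \mapsto \bar{\hat{v}}$, composed with the Wick rotation, yields an honest product of unit vectors rather than a more general Clifford element; this requires tracking which basis gammas are real and which are imaginary in the chosen representation and the phases the rotation introduces. The linear cases $\rho$ and $\chi$ reduce to sign bookkeeping once $\omega$ is available. A secondary point to state explicitly is that, for $C$, membership in $Pin(V_f, \g)$ is understood up to the unitary normalizing phase already fixed on $C$ (so that $J = C \circ cc$ and $Jv = -vJ$ hold); this phase is irrelevant because the adjoint action depends on the implementing operator only up to a scalar.
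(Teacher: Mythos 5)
Your construction is correct and lands on exactly the operators the paper writes down in equations \eqref{EqNewK}, \eqref{EqNewG} and \eqref{Cexplicite}, but it is organized quite differently from the paper's proof. The paper gives a single uniform two-step argument covering all three automorphisms at once: each $\phi\in\{\rho,\chi,\kappa\}$ is an involution (Proposition \ref{PropInvol}) and a $\g$-isometry, hence splits $V=W^+\oplus W^-$ into real eigenspaces and acts as the orthogonal reflection across $W^+$; choosing a fixed $\g$-orthonormal basis $\{f_{(i)}\}$ adapted to this splitting, $\phi=\ad_{f_{(1)}\dots f_{(l)}}$ when $l=\dim W^+$ is odd and $\phi=\ad_{f_{(l+1)}\dots f_{(2m)}}$ when $l$ is even, and these products lie in $Pin(V_f,\g)$. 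Your treatment of $\chi$ and $\rho$ is the same mechanism made explicit (your $\ad_x=(-1)^k\tad_x$ bookkeeping and the $\omega$-correction reproduce precisely the paper's odd/even dichotomy, since $2m-n$ even iff $n$ even), so there the difference is cosmetic. Where you genuinely diverge is $\kappa$: you route through Proposition \ref{PropRealPseudo}, the factorization $C=\tw\CH$ and the Wick rotation, whereas the paper simply applies the same eigenspace argument to $\kappa$, which is a perfectly good real-linear involutive isometry of $V$ despite the complex conjugation. Your route works but is heavier and, as you yourself flag, hinges on $\hat\kappa$ being a product of reflections in \emph{coordinate} hyperplanes, which requires the basis to be adapted to complex conjugation; the surjectivity of $\tad$ onto the orthogonal group alone gives you a product of general unit vectors of $\hat{V}$, which does not Wick-rotate to a power of $i$ times a product of vectors in $V$. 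This is exactly the paper's later ``convenient choice'' $\bar{e}_{(i)}=\pm e_{(i)}$ (legitimate because $\kappa$ and $\rho$ commute by Proposition \ref{PropComStructAutom}, so a simultaneously adapted orthonormal basis exists), and it should be stated as a hypothesis rather than left as an anticipated obstacle. The uniform argument buys economy and makes clear that nothing about $\kappa$ is special; your version buys explicit formulas for $\tw$, $\Gamma$, $C$ along the way, which the paper only records afterwards.
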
 
	\begin{proof}
		The automorphisms $\rho, \chi, \kappa$ are elements of $\opvg$. Taking the general notation $\phi$ for any of these automorphisms, from Proposition \ref{PropInvol}, we have that $\phi$ is an involution on $Cl(V, \g)$. As a consequence, we can define the eigenspaces $W^\pm\subset V$ so that 
		\begin{align*}
			V=W^+\oplus W^-, \qquad\qquad  \text{with } \phi(w^\pm)=\pm w^\pm \text{ for } w^\pm\in W^\pm.
		\end{align*}
		This implies that $\phi$ acts as an orthogonal reflection across $W^+$. From Proposition \ref{PropConDefStruct}, the definition of $\phi$ is $\glv$-invariant. 
		
		By defining $\{f_{(i)}\}_{1\leq i\leq l}$ and $\{f_{(i)}\}_{l+1\leq i\leq 2m}$ as fixed $\g$-orthonormal bases of $W^+$ and $W^-$ respectively, then for all $u\in V$, we have
		\begin{equation*}
			\phi(u)=\left\{
			\begin{array}{ll}
				\ad_{f_{(1)}}\circ \dots \circ \ad_{f_{(l)}}(u)=\ad_{f_{(1)}\dots f_{(l)}}(u) &\quad\text{for } l \text{ odd},  \\ 
				(\mbb)\ad_{f_{(l+1)}}\circ \dots \circ \ad_{f_{(2m)}}(u)=\ad_{f_{(l+1)}\dots f_{(2m)}}(u) &\quad\text{for } l \text{ even}.
			\end{array}\right. 
		\end{equation*}
		where $f_{(1)}\dots f_{(l)}$ and $f_{(l+1)}\dots f_{(2m)}$ are elements of $Pin(V_f, \g)$, thus providing the desired $\glv$-invariant automorphism $\phi$, hence the result.
	\end{proof}
	For $\rho$, noting that $W^\pm=V^\pm$, the unitary operator $\tw$ is then given by 
	\begin{equation}
		\label{EqNewK}
		\tw=\left\{
		\begin{array}{ll}
			\alpha_1\prod_{i=1}^{n} e_{(i)} &\qquad\text{for } n \text{ odd},  \\ 
			\alpha_1\,  (\mbb)\prod_{i=n+1}^{2m} e_{(i)} &\qquad\text{for } n \text{ even}.
		\end{array}\right.  
	\end{equation} 
	with $\alpha_1^{-1}=\alpha_1^*$ a complex number, where the fixed basis $e_{(i)}$ guarantees that $\tw$ remains $\glv$-invariant.

	For $\Gamma$, we have that $V=W^-$, since $\chi(v)=-v$ for any $v\in V$, so that
	\begin{equation}
		\label{EqNewG}
		\Gamma= \alpha_2\prod_{i=1}^{2m} e_{(i)}
	\end{equation}
	with $\alpha_2^{-1}=\alpha_2^*$ a complex number. We now fix $\alpha_2=i^{-m(2m-1)-n}$ so that $\Gamma^2=\mbb$ and $\Gamma^\dagger=\Gamma$.

	For $C$, a convenient choice is to require that the basis $\{e_a\}_{1\leq a\leq 2m}$ (which is then the fixed one) satisfies $\bar{e}_a=\pm e_a$. By defining the index set $I\defeq \{i\, \mid\, \kappa(e_{(i)})=-\bar{e}_{(i)}=e_{(i)} \}$
	corresponding to the elements in $W^+$, and $\bar{I}\defeq \{1, \ldots, 2m\} \setminus I$ for $W^-$, we obtain
	\begin{equation}
		\label{Cexplicite}
		C=\left\{
		\begin{array}{ll}
			\alpha_3\prod_{i\in I} e_{(i)} &\qquad\text{for } \card(I) \text{ odd},  \\ 
			\alpha_3\,  (\mbb)\prod_{i \in  \bar{I}} e_{(i)} &\qquad\text{for } \card(I) \text{ even}.
		\end{array}\right.  
	\end{equation} 
	with $\alpha_3^{-1}=\alpha_3^*$ a complex number.
	
	The invariance of these three operators under $\glv$ follows from their definition based on the fixed basis, in accordance with Proposition \ref{PropConDefStruct}.

	Coming back to the group $Pin(V, \g)$, the associated spin group is defined as
	\begin{align*}
		Spin(V, \g)=Pin(V, \g)\cap Cl^{\, 0}(V, \g),
	\end{align*}
	or equivalently, as the subset of $Pin(V, \g)$ corresponding to even numbers $k$. The representations of $Spin(V, \g)$ are called spinor representations and will be denoted by $\Sp$. For any $x\in Spin(V, \g)$, the image under the adjoint and twisted adjoint actions coincide, i.e., $\tad_x=\ad_x$.
	
	The subgroup $Spin(V, \g)^+\subset Spin(V, \g)$ called orthochronous spin group is defined as the set of elements of $Spin(V, \g)$ that preserve the orientation of $V^\pm$ independently (spatial and temporal). Taking ${v^\pm}\in V^\pm\cap \US$ and $u\in V$, we have that 
	\begin{equation}
		\tad_{v^\pm}(u)=u-2\frac{\g({v^\pm}, u)}{\g_{v^\pm}}{v^\pm}=\left\{
		\begin{array}{ll}
			\, \,\,\,	u &\qquad\text{for } u\in (v^\pm)^\perp,  \\ 
			-u &\qquad\text{for } u=v^\pm.
		\end{array}\right. 
	\end{equation} 
	meaning that $\tad_{v^\pm}$ reverses the orientation of $V^\pm$ while leaving the one of $V^\mp$ unchanged. Taking $x=v_1\dots v_k\in Spin(V, \g)$, the transformation $\tad_{x}$ preserves the orientation of $V^\pm$ only if $q(x)$ is even, since 
	\[
	\tad_{v_1\dots v_k}=\tad_{v_1}\circ \dots \circ \tad_{v_k}.
	\]
	As a consequence, the orthochronous spin group corresponds to the subgroup of elements $x\in Spin(V, \g)$ such that $q(x)$ is even:
	\begin{equation}
		Spin(V, \g)^+=\{x=v_1\dots v_k\, \mid\, v_1, \dots , v_k \in \US, \quad q(x) \in 2\mathbb{N}, \quad \text{and}\quad k\in 2\mathbb{N}^*\}.
	\end{equation}
	For any $x\in Spin(V, \g)^+$, we directly obtain
	\begin{equation}
		\label{EqformSpin}
		x^{-1}= \rho(x^\dagger)
	\end{equation}
	from equation \eqref{EqRelInt}, since $q(x)$ is even.
	
	Consequently, a natural $Spin(V, \g)^+$-invariant inner product is given by the $\tw$-product $\inntw$ defined in equation \eqref{Eqtwprod} from $\langle \, \cdot\, , \, \cdot\, \rangle$. The associated adjoint operator $\Tadj$ is unique, as $\rho$ is $\calB(\calH)$-regular. equation \eqref{EqformSpin} then rewrites as $x^{-1}= x^\Tadj$, implying that any $x\in Spin(V, \g)^+$ is a $\tw$-unitary operator. The relation $\rho(v)=v^\dagger$ then becomes
	\begin{equation}
		\label{RelVadj}
		v=v^\Tadj, \qquad\qquad \forall v\in V,
	\end{equation}
	which is preserved under $\glv$ and consequently under $Spin(V, \g)^+$.

	The inner product $\inntw$ allows the definition of the dual/adjoint spinor of $\psi\in \Sp$ as
	\begin{equation}
		\overline{\psi}=\psi^\dagger \tw
	\end{equation}
	through the relation $\overline{\psi}(\phi)=\langle \psi ,\phi \rangle_\tw$.

	\begin{remark}
		The specific notation $\inntw$ emphasizes the fact that the operator $\tw$, being $\glv$-invariant, is embedded within the algebraic structure. In the case $n=2m$, i.e., a positive definite metric, we can take $\tw=\mbb$, as mentioned in Remark \ref{Rqcasriem}. In this case, we recover the usual Hermitian inner product $\inntw=\langle \, \cdot\, , \, \cdot\, \rangle$.
	\end{remark}


	As stated in Subsection \ref{subsecFromTwToIndef}, an indefinite (or definite) Hermitian product $\inntw$ can only be obtained by eliminating the phase shift in equation \eqref{EqFormTw}, enforcing $\tw=\tw^\dagger$ so that $\inntw$ is a Krein product. In what follows, we will only consider such fundamental twists, ensuring that any product $\inntw$ is indefinite and Hermitian, with
	\begin{equation}
		\tw=\tw^\dagger=(\overline{\tw})^\top=\pm\tw^\top=\pm \overline{\tw}.
	\end{equation}

	\subsection{Even dimensional pseudo-Riemannian manifolds} 
	\label{SubsecPseudoRiem}
	
	Let's consider a $2m$-dimensional smooth manifold $\Man$ equipped with a non-degenerate metric $\g$ of signature $(n, 2m-n)$, defined on the tangent space $\tm$.

	At each point $x\in\Man$, we associate the local metric $\g_x$ and the tangent space $\txm$. The Clifford bundle $Cl_{n,\, 2m-n}$ is the bundle of Clifford algebras, defined pointwise as
	\[
	Cl_{n,\, 2m-n}=\bigcup_{x\in\Man}Cl(\txm, \g_x),
	\]
	where $Cl(\txm, \g_x)$ is the Clifford algebra generated by the elements $\cl(v)$ and $\mbb$ for $v\in\txm$, satisfying the relation 
	\begin{equation}
		\cl(u)\cl(v) + \cl(v)\cl(u) = 2\g_x(u, v)\mbb,\qquad \qquad \forall u, v \in \txm.
	\end{equation}
	
	We define the vector space $V_x\subset Cl(\txm, \g_x)$ associated with $\txm$ by
	\begin{equation}
		V_x\defeq \{\cl(v)\, \mid\, v\in \txm\}
	\end{equation}
	so that $Cl(\txm, \g_x)$ is generated by $V_x$ and $\mbb$. The corresponding vector bundle is then given by $V=\bigcup_{x\in\Man} V_x$. As previously, unless explicitly mentioned otherwise, we replace the notation $\cl(v)$ by $v$ for any element $v\in V_x$ and for any $x\in\Man$. Automorphisms of the vector bundle $V$ are defined as  
	\begin{equation}
		\glv=\bigcup_{x\in\Man} GL(V_x).
	\end{equation}
	The spinor bundle $\Sp$ over $\Man$ is a left-module for $Cl_{n,\, 2m-n}$. A section $\psi :\Man\to \Sp$ of $\Sp$ is a map that assigns to each point $x\in \Man$ a fiber $\Sp_x$. These sections, known as spinor fields, constitute the vector space denoted $\Gamma(\Sp)$. In this framework, the Spin group is more commonly denoted as $Spin(n, 2m-n)$.

	We define the Hermitian inner product 
	\[
	\inn : \Gamma(\Sp) \times \Gamma(\Sp)\to C^\infty(\Man, \mathbb{C})
	\]
	with adjoint $\dagger$. As $\Man$ is locally compact, any vector bundle carries a continuously varying fiber-wise inner product $\inn_x$ so that for $\psi, \psi^\prime\in\Gamma(\Sp)$, we have $\langle \psi, \psi^\prime\rangle_x=\langle \psi(x), \psi^\prime(x)\rangle_x$.

	The twist is (locally) defined on $V_x$ by the relation
	\begin{equation}
		\rho_x(v)=\twx v\twx^\dagger\defeq\rho(v)=v^\dagger, \qquad\qquad  \forall v\in V_x,
	\end{equation}
	with $\rho_x(w)=0$ for any $w\in V_{y\neq x}$ and $\twx\defeq \tw|_x$.
	
	From this, we define the fiber-wise indefinite inner product
	\begin{equation}
		\langle \, \cdot\, , \, \cdot\, \rangle_{\twx}=\langle \, \cdot\, , \twx\, \cdot\, \rangle_x,
	\end{equation}
	which is invariant under $Spin(V_x, \g_x)^+$. 
	
	Using the fiber-wise evaluation $\|\psi  \|_{x}^2=\langle \psi(x) , \psi(x) \rangle_{x}$, we define its bundle counterpart as
	\begin{equation}
		\|\psi  \|^2\defeq \int_\Man\|\psi  \|_{x}^2\sqrt{|\g|}dx,
	\end{equation}
	where $\sqrt{|\g|}$ is the volume form.

	The space of square-integrable sections of the spinor bundle over $\Man$ is defined by
	\begin{align}
		\label{Ldeux}
		L^2(\Man, \Sp)\defeq \{\psi\in \Gamma(\Sp)\, \mid\,\, \|\psi  \|^2<\infty \}.
	\end{align}
 A local oriented $\g$-orthonormal basis of $\tm$ is given by $\{\partial_1, \dots, \partial_{2m}\}$, with components satisfying $g_{ab}=\g(\partial_a, \partial_b)$. The basis of the dual tangent space $T^*\Man$ is given by $\{dx^1, \dots, dx^{2m}\}$, defined by the relation $\g(dx^b, \partial_a)=\delta^b_a$. The metric $\g$ induces a splitting $\tm =\tm^+\oplus \tm^-$, where a local oriented orthonormal basis is given by $\{\partial_1, \dots, \partial_{n}\}$ for $\tm^+$ and $\{\partial_{n+1}, \dots, \partial_{2m}\}$ for $\tm^-$. 
 
	Defining $\g^{ab}=\g(dx^a,dx^b)$ in the $\g$-orthonormal basis, the gamma matrices are defined as the unitary operators $\gamma^a\defeq \cl(dx^a)$ for all $a\in\{1,\dots, 2m\}$, satisfying
	\begin{equation}
		\label{EqMetGamma}
		\{\gamma^a, \gamma^b\}=2\g^{ab}\mbb.
	\end{equation}
	The relation to general coordinates $dx^\mu$\footnote{denoted by Greek indices} is provided by the vielbein $e^\mu_a\in\CM$:
	\begin{equation}
		\label{EqRelVielb}
		\cl(dx^\mu)=\gamma^\mu=\cl(e^\mu_a dx^a)=e^\mu_a\gamma^a.
	\end{equation} 
	The relation between the twist and the adjoint remains valid in general coordinates
	\begin{equation}
		(\gamma^\mu)^\dagger=e^\mu_a(\gamma^a)^\dagger=e^\mu_a\rho(\gamma^a) = \rho(\gamma^\mu).
	\end{equation}
	This implies that the operator $\tw$ is independent of the chosen coordinate system. Consequently, the relation in equation \eqref{RelVadj} became
	\begin{equation}
		(\gamma^\mu)^\Tadj=\gamma^\mu,
	\end{equation}
	which is preserved under the action of $\glv$.
	
	The Dirac operator is then given by the $\Tadj$-selfadjoint operator 
	\begin{equation}
		\label{EqDefDirPseudo}
		\KDir=i\gamma^\mu \nabla_\mu^{{\scriptscriptstyle\Sp}},
	\end{equation} 
	where the spin connection $\nabla_a^{{\scriptscriptstyle\Sp}}=\partial_a+\frac{1}{4}\Gamma^b_{\,\, a c}\gamma^c\gamma_b$ is the lift of the Levi-Civita connection to the spinor bundle. 
	
	Using the $\glv$-invariant matrices $\gamma^{(i)}\defeq \gamma^a|_{a=i}$, from equation \eqref{EqNewK} the operator $\tw=\tw^\dagger$ takes the explicit form
	\begin{equation}
		\label{ExplicitK}
		\tw=\left\{
		\begin{array}{ll}
			i^{-n(n-1)/2}\prod_{i=1}^{n} \gamma^{(i)} &\qquad\text{for } n \text{ odd},  \\ 
			(\mbb)\prod_{i=n+1}^{2m} \gamma^{(i)} &\qquad\text{for } n \text{ even}.
		\end{array}\right.  
	\end{equation}
	Similarly, a grading operator satisfying $\Gamma^\dagger=\Gamma$ is defined from equation \eqref{EqNewG} as
	\begin{equation}
		\label{DefOpGrad}
		\Gamma=i^{-m(2m-1)-n} \prod_{i=1}^{2m} \gamma^{(i)}.
	\end{equation}
	The real structure operator $J$ is defined from $\tw$ and $\JH$ following equation \eqref{EqRelJ}. From equation \eqref{RelJv}, we obtain relation
	\begin{equation}
		J\gamma^\mu =-\gamma^\mu J.
	\end{equation}
	
	Consequently, the operators $\tw, \Gamma$, and $J$ remain invariant under $\glv$, encompassing coordinate transformations and the Spin group action.
	
	\begin{proposition}
		\label{PropRelKJG}
		In the case where $n$ is odd, we have 
		\begin{equation}
			\label{EqRelKG}
			\tw\Gamma=-\Gamma\tw, \qquad\text{and}\qquad \tw J=(-1)^{n(3n-1)/2}J\tw.
		\end{equation}
		In the case where $n$ is even, we obtain
		\begin{equation}
			\tw\Gamma=\Gamma\tw, \qquad\text{and}\qquad \tw J=J\tw.
		\end{equation}
	\end{proposition}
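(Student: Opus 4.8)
The plan is to reduce both relations to the elementary Clifford anticommutators among the unitary generators $\gamma^{(i)}$, together with the explicit products \eqref{ExplicitK} and \eqref{DefOpGrad} defining $\tw$ and $\Gamma$, and the relation $J\gamma^\mu=-\gamma^\mu J$ recorded just above (which, via \eqref{RelJv}, says $J$ anticommutes with every generator). The first thing I would establish is a commutation lemma for $\Gamma$: since $\Gamma$ is, up to a scalar, the product $\gamma^{(1)}\cdots\gamma^{(2m)}$ of all $2m$ distinct generators, and a single $\gamma^{(j)}$ anticommutes with the $2m-1$ other generators while commuting with itself, sliding $\gamma^{(j)}$ through $\Gamma$ produces the sign $(-1)^{2m-1}=-1$. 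Hence $\Gamma\gamma^{(j)}=-\gamma^{(j)}\Gamma$ for every $j$, so that $\Gamma$ anticommutes with each generator, exactly as $J$ does.

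For the $\tw\Gamma$ relations I would then count generators. Writing $\tw=\alpha_1\prod_{i\in S}\gamma^{(i)}$, with $S=\{1,\dots,n\}$ when $n$ is odd (so $|S|=n$) and $S=\{n+1,\dots,2m\}$ when $n$ is even (so $|S|=2m-n$), I commute the linear operator $\Gamma$ leftward across the $|S|$ factors of $\tw$, picking up one sign $-1$ per factor, which gives $\tw\Gamma=(-1)^{|S|}\Gamma\tw$; the scalar $\alpha_1$ is irrelevant here because $\Gamma$ is linear. For $n$ odd, $|S|=n$ is odd and $\tw\Gamma=-\Gamma\tw$; for $n$ even, $|S|=2m-n$ is even and $\tw\Gamma=\Gamma\tw$, as stated.

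The $\tw J$ relations run identically, with the single extra subtlety that $J$ is antilinear, so passing $J$ across the scalar $\alpha_1$ conjugates it. Commuting $J$ through the $|S|$ generators of $\tw$ contributes $(-1)^{|S|}$, while the phase contributes $\alpha_1/\alpha_1^{*}$, so $\tw J=(-1)^{|S|}(\alpha_1/\alpha_1^{*})\,J\tw$. For $n$ even, $\alpha_1=1$ is real and $|S|=2m-n$ is even, whence $\tw J=J\tw$. For $n$ odd, $\alpha_1=i^{-n(n-1)/2}$; using $i^{*}=i^{-1}$ yields $\alpha_1/\alpha_1^{*}=i^{-n(n-1)}=(-1)^{n(n-1)/2}$, the last equality being legitimate because $n(n-1)$ is even. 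Combining this with $(-1)^{|S|}=(-1)^{n}$ gives the total sign $(-1)^{\,n+n(n-1)/2}=(-1)^{\,n(n+1)/2}$, and since $n(3n-1)/2-n(n+1)/2=n(n-1)$ is even, this equals $(-1)^{\,n(3n-1)/2}$, proving $\tw J=(-1)^{n(3n-1)/2}J\tw$.

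The only genuinely delicate point is this last computation: correctly accounting for the phase conjugation forced by the antilinearity of $J$ and then verifying that the exponent $n(n+1)/2$ emerging from the sign-count agrees with the stated $n(3n-1)/2$ modulo $2$. Everything else is routine bookkeeping with the Clifford anticommutators.
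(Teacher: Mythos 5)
Your proof is correct and follows essentially the same route as the paper's: commuting $\Gamma$ and the antilinear $J$ through the explicit product of $\gamma^{(i)}$'s defining $\tw$, counting one sign per generator, and tracking the conjugation of the phase $\alpha_1=i^{-n(n-1)/2}$ under $J$. Your final exponent $n(n+1)/2$ agrees with the stated $n(3n-1)/2$ modulo $2$ exactly as you verify, so the argument matches the paper's (which reaches the same sign via $i^{-n(n-1)/2}=i^{3n(n-1)/2}$).
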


	\begin{proof}
		In the even case, we have
		\begin{align}
			\tw\Gamma=(-1)^n\Gamma \tw=-\Gamma \tw.
		\end{align}  
		Noting that $i^{-n(n-1)/2}=i^{3n(n-1)/2}$ and that $J i=-iJ$, we obtain
		\begin{align}
			J\tw=(-1)^{3n(n-1)/2}i^{3n(n-1)/2}J\prod_{i=1}^{n} \gamma^{(i)}=(-1)^{3n(n-1)/2}(-1)^{n}\tw J=(-1)^{n(3n-1)/2}\tw J.
		\end{align}
		Similarly, in the odd case, we obtain $\tw\Gamma=(-1)^{2m-n}\Gamma \tw=\Gamma \tw$ and $J\tw=(-1)^{2m-n} \tw J=\tw J$.
	\end{proof}
	
	From equation \eqref{EqRelKJ}, we specify the relation between $J$ and $\tw$ using the parameter $\epsilon$:
	\begin{equation}
		\tw J=\epsilon J\tw,
	\end{equation}
	and the relation between $\tw$ and $\Gamma$ using the parameter $\epsilon^\prime=\pm 1$:
	\begin{equation}
		\label{EqRelKGamma}
		\tw \Gamma=\epsilon^\prime \Gamma\tw.
	\end{equation}
	Note that $\epsilon$ and $\epsilon^\prime$ are functions of $n$ in all cases.
	
	\begin{remark}
		\label{RQZZ}
		The odd case, where $\epsilon^\prime=-1$, corresponds to a twist by the grading, since it implies $\rho(\Gamma)=-\Gamma$ so that $\rho$ exchanges the representations $\pi_+$ and $\pi_-$ in equation \eqref{EqRhogen}.
	\end{remark}
 	
 We then get the pseudo-Riemannian spectral triple
 \begin{align*}
 (\CM,\, (L^2(\Man, \Sp),\, \inntw),\, \KDir,\, J,\, \Gamma).
 \end{align*}

	Now, let us focus on the four-dimensional Lorentzian case. The four gamma matrices (in the chiral representation) are given by $\gamma^a$ with $a\in \{0, 1, 2, 3\}$,\footnote{The index $a$ now starts at $0$ instead of $1$, aligning with the standard physics notation.} defined as follows:
	\begin{equation}
		\label{EDirac}
		\gamma^a =
		\left( \begin{array}{cc}
			0 & \sigma^a \\ \tilde\sigma^a & 0
		\end{array} \right).
	\end{equation}
	
	where, for $a= 0$ and $j\in \{1,2,3\}$, we define $\sigma^a \coloneqq  \left\{ \mathbb{\bbbone}_2, -\sigma_j \right\}$ and $\tilde\sigma^a \coloneqq  \left\{ \mathbb{\bbbone}_2, \sigma_j \right\}$, with the Pauli matrices
	\begin{equation}
		\label{Pauli}
		\sigma_1 = \left(\begin{array}{cc} 0 & 1 \\ 1 & 0 \end{array}\right)\!,
		\qquad\quad	\sigma_2 = \left(\begin{array}{cc} 0 & -i \\ i & 0 \end{array}\right)\!,
		\qquad\quad		\sigma_3 = \left(\begin{array}{cc} 1 & 0 \\ 0 & -1 \end{array}\right)\!.
	\end{equation}
	They satisfy the anticommutation relation
	\begin{equation}
		\gamma^a\gamma^b + \gamma^b \gamma^a =2\eta^{ab}\mathbb \bbbone_4, \qquad\quad	
		\forall a, b= 0, ..., 3,
	\end{equation}
	where $\eta^{ab}$ is the Minkowski metric. The grading operator is given by
	\[
	\Gamma=-i\gamma^{(0)}\gamma^{(1)}\gamma^{(2)}\gamma^{(3)}.
	\]
	Noting that $\gamma^2$ is the only gamma matrix such that $\bar{\gamma}^2=-\gamma^2$, from equation \eqref{Cexplicite}, the real structure is defined as
	\[
	J=-i\gamma^{(2)}\circ cc.
	\]
	Taking $\tw=\gamma^{(0)}$, the indefinite inner product is given by $\inntw$, with the adjoint  
	\begin{equation}
		(\, \cdot\,)^\Tadj\defeq \gamma^{(0)}(\, \cdot\,)^\dagger\gamma^{(0)},
	\end{equation}
	so that the Lorentzian Dirac operator $\KDir$ is $\tw$-selfadjoint. 
	
	If $S\in Spin(1,3)^+$ is an element of the orthochronous Lorentz group, then $S$ is unitary with respect to this inner product, i.e.,
	\begin{equation}
		SS^\Tadj=S^\Tadj S=\bbbone_4.
	\end{equation}
	These transformations naturally preserve the Dirac Lagrangian 
	\begin{equation}
		\label{EqLagDir}
		\lag_f=\overline{\psi} \KDir \psi+ m\overline{\psi} \psi.
	\end{equation}
	
	Lorentz transformations act on coordinates and partial derivatives as 
	\begin{equation}
		(x^a)^\prime=\Lambda^a_{\,\, b}x^b, \quad \text{and} \quad \partial_a\to \partial_a^\prime=(\Lambda^{-1})^b_{\,\, a}\partial_b.
	\end{equation}
	Taking $S(\Lambda)$ as the bispinor representation of an arbitrary Lorentz transformation $\Lambda$, we have that $\psi^\prime(x^\prime)=S(\Lambda)\psi(x)$, with the relation
	\begin{equation}
		S\gamma^a S^{-1}=(\Lambda^{-1})^a_{\,\, b}\gamma^b.
	\end{equation}
	For the Dirac operator $\KDir=i\gamma^a\partial_a$, Lorentz transformations act as
	\[
	\KDir\,\to\, S\KDir S^{-1}.
	\]
	This transformation can be viewed in two equivalent ways:
	
	\begin{enumerate}
		\item As a transformation of the partial derivatives:
		\begin{equation}
			\partial_a\to \partial_a^\prime=(\Lambda^{-1})^b_{\,\, a}\partial_b, \qquad\quad	 \gamma^a\to\gamma^a, \qquad\quad	 \gamma^{(0)}\to \gamma^{(0)}.
		\end{equation} 
		
		\item As a transformation of the Clifford algebra:
		\begin{equation}
			\partial_a\to \partial_a^\prime=\partial_a, \qquad\quad	 \gamma^a\to\gamma^{\prime a}=S\gamma^a S^{-1}, \qquad\quad	 \gamma^{(0)}\to \gamma^{(0)}.
		\end{equation} 
	\end{enumerate}
	A distinction between the transformation behaviors of the matrices $\gamma^{0}$ and $\gamma^{(0)}$ arises only in the second case, this justifies the need to distinguish these two matrices.
	
	The notation $\overline{\psi}=\psi^\dagger \gamma^0$, commonly used in quantum field theory textbooks instead of $\overline{\psi}=\psi^\dagger \gamma^{(0)}$, is therefore misleading. It suggests that $\tw=\gamma^0$ is a true gamma matrix, subject to coordinate changes and varying under Lorentz transformations when applied to the Clifford algebra. However, this does not pose an issue in standard quantum field theory textbooks, as they typically restrict themselves to Minkowski spacetime, where the Clifford algebra is fixed in the background and Lorentz transformations act only on coordinates and partial derivatives.

	However, the approach that considers Lorentz transformations at the level of the gamma matrices becomes more appropriate in the generalized context of curved spacetimes. Indeed, the definition and properties of the adjoint spinor $\overline{\psi}=\psi^\dagger \gamma^{(0)}$ are crucial for maintaining consistency across curved geometries. The matrix $\gamma^{(0)}$ is fixed in the local tangent space as the canonical temporal gamma matrix in the Minkowski frame and does not transform as a vector, unlike the $\gamma^\mu$ matrices. The invariance of $\gamma^{(0)}$ is essential for constructing the adjoint spinor and preserving the indefinite inner product across all coordinate systems. More information can be found in \cite{saharian2020quantum}.

	\section{Morphism between pseudo-Riemannian and twisted spectral triples}
	\label{sec Presentation of the duality}
	In the previous sections, a link between twisted spectral triples and pseudo-Riemannian spectral triples has been revealed. This arises through the twist and the associated indefinite inner product, the role played by the corresponding unitaries in each context, and the real structure. This suggests the existence of a connection between these two generalized types of spectral triples. This section presents some important notions for what will follow and summarizes key results from \cite{nieuvsignchange} concerning this connection.

	\subsection{$\tw$-morphism of spectral triples}
	
	In this subsection, we present part of the results from \cite{nieuvsignchange}, specifically focusing on the definition of the $\tw$-morphism between the (even real) twisted spectral triple $(\calA, \calH, \Dir, J, \Gamma, \tw)$ and the pseudo-Riemannian spectral triples $(\calA, \calK, \KDir, J, \Gamma)$.

	Starting with the fundamental symmetry $\tw$, we construct the corresponding twist as $\rho(\, \cdot\,)=\tw (\, \cdot\,)\tw$. From the compatibility of the real structure with the twist and the result in equation \eqref{EqRelKGamma} we require \begin{align}
		J\tw=\epsilon \tw J,\qquad\qquad\qquad\qquad \tw \Gamma=\epsilon^\prime \Gamma\tw.
	\end{align}
	
	The $\tw$-morphism, denoted $\twphi$, is a bijective involutive map parametrized by $\tw$. It permits the transition from the twisted spectral triple $(\calA, \calH, \Dir, J, \Gamma, \tw)$ to a pseudo-Riemannian spectral triple $(\calA, \calK,\KDir, J, \Gamma)$ by constructing $\KDir$, the derivation and $\calK$ from $\Dir$ and $\calH$ through the following relations:
	\begin{align*}
		&\Dir\,\xrightarrow{\twphi}\, \KDir\defeq \tw \Dir,\\
		&[\Dir,  \, \cdot\,]_\rho\,\xrightarrow{\twphi}\, [\KDir, \, \cdot\, ],\\
		&\calH\defeq (E, \langle \, \cdot\, , \, \cdot\, \rangle)\, \xrightarrow{\twphi}\, \calK\defeq  (E, \inntw= \langle\, \cdot\, , \tw \cdot  \, \rangle),
	\end{align*}
	where $E$ is a vector space.
	
	The $\tw$-morphism does not act on $\calA$, $E$, $J$ and $\Gamma$ and preserves the evaluation of the Dirac operator by the inner product
	\begin{align*}
	Ev_{\,\psi}(\Dir, \inn)\defeq \langle \psi , \Dir\psi \rangle=\langle \psi , \KDir\psi \rangle_\tw=Ev_{\,\psi}(\KDir, \inntw)
	\end{align*}
for any $\psi\in E$.
	\begin{remark}
Following \cite{nieuvsignchange}, the definition of $\twphi$ can be extended to $\calA$ via the action of $\rho$ on $\calA$. It can also be extended to $\calH$ through the action of $\tw$ on $E$, thereby recovering the usual notions of $*$-algebra and vector-space morphisms. However, this does not affect the main results we will present. Therefore, we choose to define $\twphi$ without these additional actions, i.e., letting it act as the identity morphism on $\calA$ and $E$, in order to avoid the proliferation of unnecessary relations.
	\end{remark}
	As previously, the characteristic numbers of $(\calA, \calH, \Dir, J, \Gamma, \tw)$ are given by $\epsilon_0, \epsilon_1, \epsilon_2, \epsilon$, and $\epsilon^\prime$. Similarly, the characteristic numbers of $(\calA, \calK,\KDir, J, \Gamma)$ are given by $\epsilon_0^\tw, \epsilon_1^\tw, \epsilon_2^\tw,\epsilon$, and $\epsilon^\prime$, which lead to the following relations:
	$\epsilon_0=\epsilon_0^\tw$, $\epsilon_1^\tw=\epsilon\epsilon_1$, and $\epsilon_2^\tw=\epsilon_2$.

	By doing this it was checked in \cite{nieuvsignchange} that 
	\begin{itemize}
		\item The assertions $\Dir^\dagger =\Dir$ and $(\KDir)^\Tadj=\KDir$ implies each other.
		\item The derivations in each context are related by
		\begin{align}
			[\Dir, a]_\rho =\tw [\KDir, a]
		\end{align}
		for any $a\in\calA$. The $\tw$-morphism then preserve the boundedness of the derivation.
		\item The first order conditions in each contexts implies each other through the relation 
		\begin{equation}
			[[\Dir, a]_\rho, b^\circ]_{\rho^\circ}=\tw [[\KDir, a], b^\circ].
		\end{equation}  
 
		\item The fluctuations in each context are related by the relation $\Dir_{A_\rho}=\tw \KDir_{A^\tw}$ where
		\begin{align}
			\label{EqFluctKDir}
			&\KDir_{A^\tw}=\KDir+A^\tw+\epsilon_1^\tw J A^{\,\tw} J^{-1},\\
			\label{EqFluctTw}
			&\Dir_{A_\rho}=\Dir+A_\rho+ \epsilon_1  JA_\rho J^{-1},		
		\end{align}
		where $A^\tw$ is a connection of the form
		$A^\tw=\sum_ia_i[\KDir, b_i]$ and $A_\rho=\sum_i \rho(a_i)[\Dir, b_i]\rho$, a twisted connection such that $A^\tw=\tw A_\rho$.
		\item In particular, fluctuations that preserve the $\dagger$-selfadjointness of $\Dir$ and the $\Tadj$-selfadjointness of $\KDir$ are related by the $\tw$-unitary $U_\tw=u_\tw Ju_\tw J^{-1}$, with $u_\tw\in\calU_\tw(\calA)$
		\begin{align}
			\label{Eqfluct}
			\KDir_{A^\tw}\defeq	U_\tw\KDir U_\tw^\Tadj= \tw V_\tw \Dir  V_\tw^\dagger \defeq \tw \Dir_{A_\rho},
		\end{align}
		where $V_\tw=\rho(U_\tw)=\rho(u_\tw) J\rho(u _\tw) J^{-1}$ is also a $\tw$-unitary operator. The computational details can be found in \cite{nieuvsignchange}.
	\end{itemize}
	
	Note that all these results depend on the connection between the twist and the $\tw$-product, requiring the twist to be fundamental.
	
	The commutation relation between the grading and the Dirac operator cannot be the same for the two spectral triples. This is due to the presence of $\tw$, which connects $\Dir$ and $\KDir$. Specifically, defining $\epsilon_3^\tw=\pm 1$ such that $\KDir\Gamma=\epsilon_3^\tw\Gamma\KDir$, we have $\Dir\Gamma\defeq \epsilon_3\Gamma\Dir$ with $\epsilon_3=\epsilon^\prime\epsilon_3^\tw$. In the case where $\epsilon^\prime=-1$, we obtain a different relation between $\Gamma$ and the respective Dirac operators.
	
	If we require that $\epsilon_3^\tw=-1$, i.e., $\{\KDir, \Gamma\}=0$ (the usual case for spectral triples), then we find that $\Dir$ and $\Gamma$ satisfy a twisted version of the usual anti-commutator relation:
	\begin{equation}
		\{\Dir, \Gamma\}_\rho\defeq \Dir\Gamma+\rho(\Gamma)\Dir=\Dir\Gamma+\epsilon^\prime\Gamma\Dir=0.
	\end{equation} 
	In the manifold spectral triple's case, if $\epsilon_3=-1$, then we can work with the twist by grading on the algebra $\CM\otimes \mathbb{C}^2$, and the twist is trivial, i.e., $\rho(a)=a$ for any $a\in \CM\otimes \mathbb{C}^2$. However, if $\epsilon_3=1$, the boundedness condition on the twisted commutator will imply that we must work on the sub-algebra $\CM$, i.e., the sub-algebra corresponding to elements $a\in \CM\otimes \mathbb{C}^2$ on which the twist is trivial (see \cite{nieuvsignchange} for the proof). In this case, the $\tw$-morphism connects spectral triples for which the twist acts trivially on the algebra.
	
	We will focus on the case $\epsilon_3^\tw=-1$ (context 2 in \cite{nieuvsignchange}). This choice is motivated by the upcoming results and the fact that physics provides us with the Lorentzian $\KDir$ (a special case of $\KDir$ here), which anti-commutes with the grading. 
	
	As for the grading, the real structure $J=\tw\JH$ is preserved by the $\tw$-morphism. This is an important property of the $\tw$-morphism, as we prove that such a real structure is relevant both in the context of twisted spectral triples (see subsection \ref{SecTwistedRealStruct}) and in the one of pseudo-Riemannian spectral triples (see subsection \ref{SubsecCliff}).

	Now, let's see how this $\tw$-morphism is realized in the case of the spectral triple of even-dimensional manifolds.

	\subsection{Signature change for even-dimensional manifolds}
	\label{subsectionsignaturechange} 
	Let’s consider the $2m$-dimensional pseudo-Riemannian manifold $(\Man, \g)$ defined in subsection \ref{SubsecPseudoRiem}. The following subsection shows how the presented $\tw$-morphism acts in this context, changing the signature of the metric.
	
	Following definition \ref{DefReflection}, a reflection for $(\Man, \g)$ is an isometric automorphism $r$ on $\tm$ such that $r^2=\bbbone$.
	
	\begin{definition}[Spacelike reflection]
		Taking $(\Man, \g)$ to be a pseudo-Riemannian manifold, a spacelike reflection $r$ is a reflection for the metric $\g$ so that
		\begin{equation}
			\label{EqmetDef}
			\gr(\, \cdot\,,\, \cdot\, )\defeq \g(\, \cdot\,,\,r \, \cdot \, )
		\end{equation}
		is a positive definite metric on $\tm$.
	\end{definition}
	Taking the fundamental symmetry $\tw$ defined in \eqref{ExplicitK} with the associated twist $\rho$, we have that $\rho$ implements the spacelike reflection $r$ at the level of the Clifford representation
	\begin{equation}
		\label{Eqtwistparity}
		\rho(\cl(v))=\cl(rv)
	\end{equation}
	for any vector $v\in \tm$ (see \cite{nieuvsignchange}). The twist $\rho$ is then the parity operator as it reverses the spacelike coordinates, i.e., those corresponding to negative metric components. The spacelike property of $\rho$ is a direct consequence of the unitary requirement on the basis of the Clifford representation made in subsection \ref{SubsecCliff}.
	
	The corresponding spectral triple is given by $(\CM, \calK, \KDir, J, \Gamma)$\footnote{We choose to work on $\CM$, as the case $\CM\otimes \mathbb{C}^2$ does not add anything to the structures.} with $\calK=(L^2(\Man, \Sp), \inntw)$ and $\KDir, J$, and $\Gamma$ as defined in subsection \ref{SubsecPseudoRiem}, so that $\epsilon_3^\tw=-1$ i.e., 
	\begin{align}
		\{\KDir, \Gamma\}=0.
	\end{align} 
	
	This pseudo-Riemannian spectral triple will be taken as basis to construct a twisted spectral triple through the $\tw$-morphism. 
	
	Using equation \eqref{Eqtwistparity}, one can express relation \eqref{EqmetDef} at the Clifford algebra level
	\begin{equation}
		\label{Eqmetchg}
		2\gr(u, v)\mbb=2\g(u, rv)\mbb=\{\cl(u), \cl(rv)\}=\{\cl(u), \rho(\cl(v))\}. 
	\end{equation}
Noting that the action of $\twphi$ on $\KDir$ extend to the action
\begin{equation}
\cl(v)\xrightarrow{\twphi}\tilde{\cl}(v)\defeq \tw \cl(v)
\end{equation}
on the Clifford representation, this relation rephrase as
\begin{equation}
	2\gr(u, v)\mbb=\rho(\tilde{\cl}(u)\tilde{\cl}(v))+\tilde{\cl}(v)\tilde{\cl}(u). 
\end{equation}
We recover the two twisted Clifford relations presented in subsection \ref{SubsectwistCliffRel}.

	We can then define $\Dir=\tw\KDir$ within the corresponding twisted spectral triple $(\CM, \calH, \Dir, J, \Gamma, \tw)$, that satisfies $\epsilon_3=-\epsilon_3^\tw=1$, or equivalently
	\begin{equation}
		\{\Dir, \Gamma\}_{\rho}=0.
	\end{equation}  
	For standard (untwisted) spectral triples, the Riemannian metric $\gr$ is obtained in the distance formula from the norm evaluation of the derivation of elements in the algebra, providing the Lipschitz norm $\|grad\, a\|_\infty^2$ of $a\in\CM$ associated with $\gr$; see \cite{rivasseau2007non,connes1996gravity} for more information.
	
	Using this in the context of the twisted spectral triple $(\CM, \calH, \Dir, J, \Gamma, \tw)$ i.e., replacing the derivation with the twisted derivation in the distance formula, it was shown in \cite{nieuvsignchange} that $\|[\Dir, a]_\rho\|^2 =\|grad\, a\|_\infty^2$ implying that $(\CM, \calH, \Dir, J, \Gamma, \tw)$ can similarly be associated with a Riemannian metric $\gr$.

The $\tw$-morphism then acts on the metric as the transformation
	\begin{align}
		\label{EqMetChg}
		\g(\, \cdot\,,\, \cdot\, )\,\xrightarrow{\twphi}\, \gr(\, \cdot\,,\, \cdot\, )= \g(\, \cdot\,,\, r\,  \cdot \, )
	\end{align}
	thus changing its signature via the spacelike reflection $r$ associated with $\tw$.

Following \cite{nieuvsignchange}, we can also propose the metric formula
\begin{align}
	\label{EqMetTrace}
\g(u, v)=\frac{1}{2^m}\Tr(\cl(u)\cl(v))
\end{align}
on which the $\tw$-morphism acts at the level of the representations $\cl(v)\xrightarrow{\twphi}\tilde{\cl}(v)= \tw \cl(v)$ so that we similarly obtain the following rephrasing of \eqref{EqMetChg} 
\begin{align}
	\label{EqMetChangeLoc}
\g(u, v)=\frac{1}{2^m}\Tr(\cl(u)\cl(v))\,\xrightarrow{\twphi}\,\frac{1}{2^m}\Tr(\tcl(u)\tcl(v))=\frac{1}{2^m}\Tr(\cl(ru)\cl(v))=\gr(u, v).
\end{align}
The transformation \eqref{EqMetChg} therefore translates at the level of the Clifford representation 	
\begin{align}
	\g(u, v)\mbb=\frac{1}{2}\{\cl(u), \cl(v)\}\,\xrightarrow{\twphi}\,\frac{1}{2}(\rho(\tilde{\cl}(u)\tilde{\cl}(v))+\tilde{\cl}(v)\tilde{\cl}(u))=\gr(u, v)\mbb. 
\end{align}
This leads to a generalization of the concept of Clifford relation presented in subsection \ref{SubsectwistCliffRel}. Two equivalent types of twisted Clifford relations can therefore be defined on the tangent space $T\Man$, starting from the representation $\cl$ or $\tilde{\cl}$
	\begin{align}
		\label{EqRepTw1}
		& 2\gr(u,v)\mbb=\cl(u)\rho(\cl(v))+\rho(\cl(v))\cl(u), \\
		\label{EqRepTw2}
		 &2\gr(u,v)\mbb=\tilde{\cl}(u)\tilde{\cl}(v)+\rho(\tilde{\cl}(v)\tilde{\cl}(u))
	\end{align} 
 where $u,v\in T_x\Man$ for a given $x\in\Man$. We will focus on the second one from now, as the representation $\tilde{\cl}$ will proves to be the relevant one for the Dirac operator $\Dir$.
 
	The relation to general coordinate $\mu$ is the same as in equation \eqref{EqRelVielb} since twisted Clifford representations are linear representations of the coordinates on which the vielbeins act. Note that the operations producing the signature change through the $\tw$-morphism are defined locally in equation \eqref{EqMetChangeLoc}, implying that the signature change is also performed locally, in the same manner as a Wick rotation in quantum field theory.
	
		Starting from $\KDir$ defined in \eqref{EqDefDirPseudo}, we define $\tgm^\mu=\tw \gamma^\mu$ so that $\Dir=-i \tgm^\mu\tilde\nabla_\mu^{{\scriptscriptstyle R, \Sp}}$ where
		\begin{equation}
			\label{EqParticGamma}
		\tilde\nabla_\mu^{{\scriptscriptstyle R, \Sp}}=\partial_\mu+\frac{1}{4}\tilde\Gamma^b_{\,\, \mu a}\tgm^a\tgm_b
		\end{equation}
	is the associated spin connection explicitly written in terms of the representation $\tilde{\cl}$, where the $\tilde\Gamma^b_{\,\, \mu a}$'s are defined from the $\Gamma^b_{\,\, \mu a}$'s in $\nabla_\mu^{{\scriptscriptstyle\Sp}}$ by the relation
	\begin{equation}
		\Gamma^b_{\,\, \mu a}=\g^a\g^b\tilde\Gamma^b_{\,\, \mu a}
	\end{equation}
using the fact that $\tgm^a\tgm_b=\tgm^a\tgm^b\g^{\scriptscriptstyle R}_b=\tgm^a\tgm^b=\tw\gamma^a\tw\gamma^b=\rho(\gamma^a)\gamma^b=\g^a\g^b\gamma^a\gamma_b$.

Note that $\tilde\nabla_\mu^{{\scriptscriptstyle R, \Sp}}$ is just a rewriting of $\nabla_\mu^{{\scriptscriptstyle\Sp}}$ in terms of the generator $\tilde{\cl}$. Doing so will become meaningful in what follows, when showing how the $\tilde\Gamma^b_{\,\, \mu a}$'s relate to the Christoffel symbols deduced from $\gr$.

The action of the covariant derivative for $\g$ on partial derivatives is given by $\nabla_\mu\partial_\nu=\Gamma^\lambda_{\,\, \mu \nu}\partial_\lambda$, where the $\Gamma^\lambda_{\,\, \mu \nu}$ are the (torsion-free) Christoffel symbols. The metric compatibility condition $\nabla_\nu\g_{\mu\kappa}=0$ implies that
\begin{equation}
	\label{EqMetFonda}
	\partial_\nu\g_{\mu\kappa}-\g_{\lambda\kappa}\Gamma^\lambda_{\,\, \mu \nu}-\g_{\mu\lambda}\Gamma^\lambda_{\,\, \kappa \nu}=0
\end{equation}
or equivalently
\begin{equation}
	\Gamma^\lambda_{\,\, \mu \nu}=\frac{1}{2}\g^{\lambda\kappa}(\partial_\mu\g_{\nu\kappa}+\partial_\nu\g_{\mu\kappa}-\partial_\kappa\g_{\mu\nu})
\end{equation}
For $\gr$ we have $\nabla_\mu^{{\scriptscriptstyle R}}\partial_\nu=\Gamma^\lambda_{{{\scriptscriptstyle R}}\, \mu \nu}\partial_\lambda$ together with the same connection between $\gr$ and $\Gamma^\lambda_{{{\scriptscriptstyle R}}\, \mu \nu}$ deduced from the metric compatibility condition $\nabla^{\scriptscriptstyle R}_\nu\g^{\scriptscriptstyle R}_{\mu\kappa}=0$.

 \begin{definition}[Reflected basis.]
	The "reflected" basis for $T^*\Man$ is defined by
	\begin{align}
		\label{EqReflect}
		dx^{r\mu}\defeq r\delta^\mu_\nu dx^\nu.
	\end{align}
\end{definition}
Since $r$ is an involution, its determinant satisfies $\det(r)=\pm 1$, implying that the orientation of the reflected basis is reversed when $\det(r)=-1$.

The fact that $\det(r^\mu_{\,\,\nu})\neq 0$ and that $r$ is both $\g$ and $\gr$ isometric ensure that the map $dx^{\mu}\to dx^{r\mu}$ is an isometric linear basis change, i.e. $\g^{r\mu r\nu}=\g(rdx^\mu, rdx^\nu)=\g^{\mu \nu}$ and $\gr^{r\mu r\nu}=\gr^{\mu \nu}$. Note that there exist coordinate functions $x^{r\mu}$ satisfying $dx^{r\mu}=d(x^{r\mu})$ only if the 1-form $dx^{r\mu}$ is integrable. In this case, the fact that $r^\mu_{\,\,\nu}$ is a fonction of $x\in\Man$ induces a nontrivial relation between $x^{r\mu}$ and $x^{\nu}$ because of the presence of the differential of $r$. If such a coordinate $x^{r\mu}$ exist, then equation \eqref{EqReflect} can be rewritten as a coordinate change 
\begin{align}
	dx^{r\mu}=\frac{\partial x^{r\mu}}{\partial x^\nu}dx^\nu=r\delta^\mu_\nu dx^\nu.
\end{align}
The dual basis is given by $\partial_{r\mu}=r\delta_\mu^\nu \partial_\nu$ as $\g(dx^{r\mu}, \partial_{r\nu})=\g(dx^{\mu}, \partial_{\nu})=\delta^\mu_\nu$ for all indices $\mu, \nu$. The corresponding Christoffel symbols are defined by
\begin{align}
	\nabla_\mu\partial_{r\nu}=\Gamma^{r\lambda}_{\,\, \mu r\nu}\partial_{r\lambda}.
\end{align}
The following proposition provides an important relation to connect the Christoffel symbols associated with the two metrics.
\begin{proposition}
	The Christoffel symbols $\Gamma^\lambda_{\,\, \mu \nu}$ and $\Gamma^\lambda_{{{\scriptscriptstyle R}}\, \mu \nu}$ for $\g$ and $\gr$ are related by
	\begin{align}
		\label{RelatChristos}
		\Gamma^{r\lambda}_{\,\, \mu r\nu}=\Gamma^\lambda_{{{\scriptscriptstyle R}}\, \mu \nu}+\frac{1}{2}\gr^{\lambda\kappa}(\partial_{r\nu}\g_{\mu\kappa}-\partial_{\nu}\g^{\scriptscriptstyle R}_{\mu\kappa}).
	\end{align}
\end{proposition}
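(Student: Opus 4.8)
The plan is to read off $\Gamma^{r\lambda}_{\,\, \mu r\nu}$ from the metric compatibility of the Levi-Civita connection $\nabla$ of $\g$ expressed in the reflected frame $\{\partial_{r\mu}\}$, and then to reorganise the outcome into the $\gr$-Christoffel symbol plus the stated correction. First I would collect the elementary consequences of $r$ being a $\g$- and $\gr$-isometric involution. Since $r^2=\bbbone$ and $r$ is $\g$-selfadjoint, the reflected basis $\partial_{r\mu}=r\delta^\sigma_\mu\partial_\sigma$ satisfies $\g(\partial_{r\mu},\partial_{r\nu})=\g_{\mu\nu}$, $\gr(\partial_{r\lambda},\partial_{r\kappa})=\g^{\scriptscriptstyle R}_{\lambda\kappa}$ and the mixed identity $\g(\partial_{r\nu},\partial_\kappa)=\g^{\scriptscriptstyle R}_{\nu\kappa}$ (this last one because $\gr(\cdot,\cdot)=\g(\cdot,r\cdot)$). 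Together with $\g^{\scriptscriptstyle R}_{\mu\nu}=\g_{\mu\alpha}r^\alpha_{\,\,\nu}$ and $\gr^{\lambda\kappa}=r^\lambda_{\,\,\beta}\g^{\beta\kappa}$, these furnish the dictionary that converts any $\gr$-pairing of the reflected frame into a $\g$-pairing, which is the bridge between the two connections.

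Next I would isolate the coefficient by contracting with $\gr$: since $\gr(\partial_{r\lambda},\partial_{r\kappa})=\g^{\scriptscriptstyle R}_{\lambda\kappa}$ and $r\partial_{r\kappa}=\partial_\kappa$, one gets $\Gamma^{r\lambda}_{\,\, \mu r\nu}=\gr^{\lambda\kappa}\,\gr(\nabla_\mu\partial_{r\nu},\partial_{r\kappa})=\gr^{\lambda\kappa}\,\g(\nabla_\mu\partial_{r\nu},\partial_\kappa)$. Applying the Koszul formula for the $\g$-Levi-Civita connection to $\g(\nabla_{\partial_\mu}\partial_{r\nu},\partial_\kappa)$ and feeding in the three pairings above, the metric-derivative part collapses to exactly $\tfrac12\bigl(\partial_\mu\g^{\scriptscriptstyle R}_{\nu\kappa}+\partial_{r\nu}\g_{\mu\kappa}-\partial_\kappa\g^{\scriptscriptstyle R}_{\mu\nu}\bigr)$; note that the first and third Koszul terms carry $\g^{\scriptscriptstyle R}$ (because $\partial_{r\nu},\partial_\kappa$ pair through $\g$ to give $\g^{\scriptscriptstyle R}$), while the middle term carries the \emph{unreflected} $\g$ (because $\partial_\mu,\partial_\kappa$ pair to $\g$), which is precisely the asymmetry recorded in the statement. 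Contracting with $\tfrac12\gr^{\lambda\kappa}$ and then adding and subtracting $\tfrac12\gr^{\lambda\kappa}\partial_\nu\g^{\scriptscriptstyle R}_{\mu\kappa}$ regroups the three terms as $\Gamma^\lambda_{{\scriptscriptstyle R}\,\mu\nu}+\tfrac12\gr^{\lambda\kappa}\bigl(\partial_{r\nu}\g_{\mu\kappa}-\partial_\nu\g^{\scriptscriptstyle R}_{\mu\kappa}\bigr)$, which is the claimed identity.

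The delicate point — and the step I expect to be the main obstacle — is that the reflected frame $\{\partial_{r\mu}\}$ is anholonomic, so the Koszul formula also produces the commutator contributions $\g([\partial_\mu,\partial_{r\nu}],\partial_\kappa)-\g([\partial_{r\nu},\partial_\kappa],\partial_\mu)$, that is, terms proportional to $\partial_\mu r^\sigma_{\,\,\nu}$ and $\partial_\kappa r^\sigma_{\,\,\nu}$ (the bracket $[\partial_\kappa,\partial_\mu]$ drops out, but these do not, a priori). The real content of the proof is therefore to show that, after contraction with $\gr^{\lambda\kappa}$, these reflection-derivative terms cancel and do not spoil the clean three-term structure. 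I would attempt this by differentiating the two defining constraints on $r$ — the involution relation $r^\sigma_{\,\,\alpha}r^\alpha_{\,\,\tau}=\delta^\sigma_\tau$ and the symmetry relation $\g_{\mu\alpha}r^\alpha_{\,\,\nu}=\g_{\nu\alpha}r^\alpha_{\,\,\mu}$ that makes $\g^{\scriptscriptstyle R}$ symmetric — and by invoking the integrability of the reflected coframe $dx^{r\mu}$ (so that $\partial_\mu r^\sigma_{\,\,\nu}=\partial_\nu r^\sigma_{\,\,\mu}$), which is the structure singled out in the Reflected-basis definition. As an independent cross-check I would also run the direct Leibniz computation $\Gamma^{r\lambda}_{\,\, \mu r\nu}=r^\lambda_{\,\,\tau}r^\sigma_{\,\,\nu}\Gamma^\tau_{\,\,\mu\sigma}+r^\lambda_{\,\,\sigma}\partial_\mu r^\sigma_{\,\,\nu}$, substitute the $\g$-Christoffel formula, and rewrite using the same $r$-identities; it must land on the same residual $\partial r$-terms, so that controlling them once settles both derivations.
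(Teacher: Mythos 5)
Your computation is, at its core, the same as the paper's. The paper's entire proof is the formal substitution
\begin{equation*}
\Gamma^{r\lambda}_{\,\, \mu r\nu}=\tfrac{1}{2}\g^{{r\lambda}\kappa}\bigl(\partial_\mu\g_{r\nu\kappa}+\partial_{r\nu}\g_{\mu\kappa}-\partial_\kappa\g_{\mu r\nu}\bigr),
\end{equation*}
followed by the identifications $\g_{r\nu\kappa}=\g^{\scriptscriptstyle R}_{\nu\kappa}$, $\g_{\mu r\nu}=\g^{\scriptscriptstyle R}_{\mu\nu}$, $\g^{r\lambda\kappa}=\gr^{\lambda\kappa}$, and the add-and-subtract of $\tfrac12\gr^{\lambda\kappa}\partial_\nu\g^{\scriptscriptstyle R}_{\mu\kappa}$ --- precisely your ``dictionary'' plus your regrouping step. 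So the part of your argument that you actually carry out reproduces the published proof.

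The ``delicate point'' you flag is real, but you should be aware that the paper does not confront it: its proof consists only of what you call the metric-derivative part, with the anholonomy contributions silently absent. Concretely, the defining relation $\nabla_\mu\partial_{r\nu}=\Gamma^{r\lambda}_{\,\, \mu r\nu}\partial_{r\lambda}$ gives by Leibniz $\Gamma^{r\lambda}_{\,\, \mu r\nu}=r^\lambda_{\,\,\tau}r^\sigma_{\,\,\nu}\Gamma^\tau_{\,\,\mu\sigma}+r^\lambda_{\,\,\sigma}\partial_\mu r^\sigma_{\,\,\nu}$, whereas expanding the substituted formula yields $r^\lambda_{\,\,\tau}r^\sigma_{\,\,\nu}\Gamma^\tau_{\,\,\mu\sigma}+\tfrac12 r^\lambda_{\,\,\sigma}\partial_\mu r^\sigma_{\,\,\nu}-\tfrac12 r^\lambda_{\,\,\tau}\g^{\tau\kappa}\g_{\mu\sigma}\partial_\kappa r^\sigma_{\,\,\nu}$; these coincide only once the $\partial r$ terms are controlled by exactly the identities you propose to extract (differentiating $r^\sigma_{\,\,\alpha}r^\alpha_{\,\,\tau}=\delta^\sigma_\tau$, the $\g$-selfadjointness of $r$, and integrability of $dx^{r\mu}$), or trivially when $r$ has constant coefficients. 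Your proposal announces this reconciliation but does not execute it, so as a standalone proof it is incomplete at that one step --- but it is the same step the paper elides, and your cross-check via the direct Leibniz computation is the right way to close it. Relative to the paper's own standard of proof, your argument is complete and essentially identical; relative to a fully rigorous treatment, both share the same unproved lemma about the $\partial r$ terms.
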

\begin{proof}
	We have that 
	\begin{align*}
		\Gamma^{r\lambda}_{\,\, \mu r\nu}&=\frac{1}{2}\g^{{r\lambda}\kappa}(\partial_\mu\g_{r\nu\kappa}+\partial_{r\nu}\g_{\mu\kappa}-\partial_\kappa\g_{\mu r\nu})=\frac{1}{2}\gr^{\lambda\kappa}(\partial_\mu\g^{\scriptscriptstyle R}_{\nu\kappa}+\partial_{r\nu}\g_{\mu\kappa}-\partial_\kappa\g^{\scriptscriptstyle R}_{\mu \nu})\\
		&=\frac{1}{2}\gr^{\lambda\kappa}(\partial_\mu\g^{\scriptscriptstyle R}_{\nu\kappa}+\partial_{\nu}\g^{\scriptscriptstyle R}_{\mu\kappa}-\partial_\kappa\g^{\scriptscriptstyle R}_{\mu \nu})+\frac{1}{2}\gr^{\lambda\kappa}(\partial_{r\nu}\g_{\mu\kappa}-\partial_{\nu}\g^{\scriptscriptstyle R}_{\mu\kappa})\\
		&=\Gamma^\lambda_{{{\scriptscriptstyle R}}\, \mu \nu}+\frac{1}{2}\gr^{\lambda\kappa}(\partial_{r\nu}\g_{\mu\kappa}-\partial_{\nu}\g^{\scriptscriptstyle R}_{\mu\kappa})
	\end{align*}
	hence the result.
\end{proof}
This result shows how the $\tw$-morphism acts on the Christoffel symbols. Note that this permits to preserve the fact that each Christoffel symbol is a real number for any given triple of indices. This is not the case with Wick rotations where some Christoffel symbols become complex-valued.
\begin{remark}
	The relation \eqref{RelatChristos} permits to derive the metric condition for $\gr$ from the one of $\g$. Starting from equation \eqref{EqMetFonda} and turning the indices $\lambda$ and $\nu$ into $r\lambda$ and $r\nu$ (then reflecting the corresponding vectors), we obtain
	\begin{equation}
		\partial_{r\nu}\g_{\mu\kappa}-\g_{r\lambda\kappa}\Gamma^{r\lambda}_{\,\, \mu r\nu}-\g_{\mu r\lambda}\Gamma^{r\lambda}_{\,\, \kappa r\nu}=\partial_{r\nu}\g_{\mu\kappa}-\g^{\scriptscriptstyle R}_{\lambda\kappa}\Gamma^{r\lambda}_{\,\, \mu r\nu}-\g^{\scriptscriptstyle R}_{\mu \lambda}\Gamma^{r\lambda}_{\,\, \kappa r\nu}=0.
	\end{equation}
	Then using equation \eqref{RelatChristos}, this can be rewritten as
	\begin{equation*}
		\partial_{r\nu}\g_{\mu\kappa}-\g^{\scriptscriptstyle R}_{\lambda\kappa}\Gamma^\lambda_{{{\scriptscriptstyle R}}\, \mu \nu}-\g^{\scriptscriptstyle R}_{\mu \lambda}\Gamma^\lambda_{{{\scriptscriptstyle R}}\, \kappa \nu}-\frac{1}{2}\g^{\scriptscriptstyle R}_{\lambda\kappa}\gr^{\lambda\tau}(\partial_{r\nu}\g_{\mu\tau}-\partial_{\nu}\g^{\scriptscriptstyle R}_{\mu\tau})-\frac{1}{2}\g^{\scriptscriptstyle R}_{\mu\lambda}\gr^{\lambda\epsilon}(\partial_{r\nu}\g_{\kappa\epsilon}-\partial_{\nu}\g^{\scriptscriptstyle R}_{\kappa\epsilon})=0.
	\end{equation*}
	Using that $\g^{\scriptscriptstyle R}_{\mu\lambda}\gr^{\lambda\epsilon}=\delta^\epsilon_\mu$ this became
	\begin{equation*}
		\partial_{r\nu}\g_{\mu\kappa}-\g^{\scriptscriptstyle R}_{\lambda\kappa}\Gamma^\lambda_{{{\scriptscriptstyle R}}\, \mu \nu}-\g^{\scriptscriptstyle R}_{\mu \lambda}\Gamma^\lambda_{{{\scriptscriptstyle R}}\, \kappa \nu}-\frac{1}{2}(\partial_{r\nu}\g_{\mu\kappa}-\partial_{\nu}\g^{\scriptscriptstyle R}_{\mu\kappa}+\partial_{r\nu}\g_{\kappa\mu}-\partial_{\nu}\g^{\scriptscriptstyle R}_{\kappa\mu})=0
	\end{equation*}
	so that we retrieve the relation between $\gr$ and $\Gamma^\lambda_{{{\scriptscriptstyle R}}\, \mu \nu}$
	\begin{equation}
		\partial_{\nu}\g^{\scriptscriptstyle R}_{\mu\kappa}-\g^{\scriptscriptstyle R}_{\lambda\kappa}\Gamma^\lambda_{{{\scriptscriptstyle R}}\, \mu \nu}-\g^{\scriptscriptstyle R}_{\mu \lambda}\Gamma^\lambda_{{{\scriptscriptstyle R}}\, \kappa \nu}=0
	\end{equation}
	hence the metric compatibility condition $\nabla^{\scriptscriptstyle R}_\nu\g^{\scriptscriptstyle R}_{\mu\kappa}=0$.
\end{remark}
We are now interested in the expression of these relations in terms of the mixed indices basis $\mu, b, a$ appearing in the spin connection.

We can first note that $\g$-orthonormal bases are also $\gr$-orthonormal bases since $\g(dx^a, dx^b)=\g^{ab}=\delta^a_b\g^{ab}$ implies that
\begin{align*}
	\gr(dx^a, dx^b)=\g(rdx^a, dx^b)=\g(\delta^a_b\g^{ab}dx^a, dx^b)=\delta^a_b(\g^{ab})^2=\delta^a_b\gr^{ab}
\end{align*}
with $\gr^{aa}=(\g^{aa})^2\equiv (\g^{a})^2=1$ for any $a$. This implies that the vielbeins $e_a^\mu$ for $\g$ also correspond to vielbeins for $\gr$. The vielbeins $e_a^\mu$ can additionally be used to relate the reflected coordinate and orthonormal bases since
\begin{equation*}
dx^{ra}=re^{a}_{\mu}dx^{\mu}=e^{a}_{\mu}rdx^{\mu}=e^{a}_{\mu}dx^{r\mu}.
\end{equation*}

Note that from the fact that
\begin{equation}
	\rho(\tgm^a\tgm^b)=\rho(\tgm^a)\rho(\tgm^b)=\g^{a}\g^{b}\tgm^a \tgm^b,
\end{equation}
relation \eqref{EqRepTw2} can be rewritten as
\begin{equation}
	\label{EqCliffGeneralisé}
	2\gr^{ab}\mbb = \tgm^a\tgm^b+s_{ab}\tgm^b\tgm^a,
\end{equation}
where $s_{ab}=\g^{a}\g^{b}$ equals $1$ if $a=b$ and is $\pm 1$ if $a\neq b$. 

This can be seen as a direct generalization of the usual Clifford relations, where Clifford algebras correspond to the special case when $s_{ab}=1$ for all $a$ and $b$. 

In the $\g$-orthonormal basis, using that $dx^{ra}=rdx^a=\g^adx^a$ we get that
\begin{align}
	dx^{ra}=r\delta^a_b dx^b=\delta^a_b\g^b dx^b,
\end{align}
and consequently $\partial_{ra}=\delta^b_a\g_b\partial_b$.

As a consequence
\begin{align}
	\label{EqRewritTGamma}
\Gamma^{rb}_{\,\, \mu ra}=g^cg_d\delta^b_c\delta^d_a\Gamma^{c}_{\,\, \mu d}+g^c\delta^b_c\partial_\mu(g_c\delta^c_a)=g^bg_a\Gamma^{b}_{\,\, \mu a}+g^b\partial_\mu g_a=\tilde\Gamma^{b}_{\,\, \mu a}+g^b\partial_\mu g_a.
\end{align}
The relation between $\Gamma^b_{\,\, \mu a}$ and $\Gamma^\lambda_{\,\, \mu \nu}$ is given by $\Gamma^b_{\,\, \mu a}=e_a^\nu\Gamma^\lambda_{\,\, \mu \nu}e^b_\lambda-e_a^\nu\partial_\mu e^b_\nu$, this consequently extends to the mixed Christoffel symbols
\begin{align}
	\label{EqRelatGammVielb}
	\Gamma^{rb}_{\,\, \mu ra}=e_a^\nu\Gamma^{r\lambda}_{\,\, \mu r\nu}e^b_\lambda-e_a^\nu\partial_\mu e^b_\nu.
\end{align}
\begin{proposition}
The Dirac operator in the twisted spectral triple takes the form $\Dir=-i \tgm^\mu\tilde\nabla_\mu^{{\scriptscriptstyle R, \Sp}}$ where $\tilde\nabla_\mu^{{\scriptscriptstyle R, \Sp}}$ is the twisted version of the lift to the spinor bundle of the connection $\nabla_\mu^{{\scriptscriptstyle R}}$, which takes the form 
\begin{equation}
	\label{EqDefDir}
\tilde\nabla_\mu^{{\scriptscriptstyle R, \Sp}}=\partial_\mu+\frac{1}{4}\Gamma^b_{{{\scriptscriptstyle R}}\, \mu a}\tgm^a\tgm_b+\frac{1}{4}K^b_{\,\, \mu a}\tgm^a\tgm_b
\end{equation}
where $K^b_{\,\, \mu a}=g^b(\frac{1}{2}\g^{b\kappa}(\partial_{ra}\g_{\mu\kappa}-\partial_{a}\g^{\scriptscriptstyle R}_{\mu\kappa})-\partial_\mu g_a)$.
\end{proposition}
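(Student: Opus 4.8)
The decomposition $\Dir=-i\tgm^\mu\tilde\nabla_\mu^{{\scriptscriptstyle R,\Sp}}$ with $\tilde\nabla_\mu^{{\scriptscriptstyle R,\Sp}}=\partial_\mu+\frac14\tilde\Gamma^b_{\,\,\mu a}\tgm^a\tgm_b$ was already obtained in equation~\eqref{EqParticGamma}, together with the relation $\Gamma^b_{\,\,\mu a}=g^ag^b\tilde\Gamma^b_{\,\,\mu a}$ between the twisted coefficients and the spin-connection coefficients. The statement therefore reduces to the single identity $\tilde\Gamma^b_{\,\,\mu a}=\Gamma^b_{{\scriptscriptstyle R}\,\mu a}+K^b_{\,\,\mu a}$, which I would establish by chaining the relations already available, passing back and forth between the frame indices $a,b$ and the coordinate indices $\lambda,\nu$.

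First I would use equation~\eqref{EqRewritTGamma} to trade the twisted coefficient for the reflected mixed one, $\tilde\Gamma^b_{\,\,\mu a}=\Gamma^{rb}_{\,\,\mu ra}-g^b\partial_\mu g_a$. Then I would convert both the reflected symbols $\Gamma^{rb}_{\,\,\mu ra}$ and the Riemannian symbols $\Gamma^b_{{\scriptscriptstyle R}\,\mu a}$ to coordinate indices: the first via \eqref{EqRelatGammVielb}, the second via the analogous vielbein relation $\Gamma^b_{{\scriptscriptstyle R}\,\mu a}=e_a^\nu\Gamma^\lambda_{{\scriptscriptstyle R}\,\mu\nu}e^b_\lambda-e_a^\nu\partial_\mu e^b_\nu$, which holds because $\g$ and $\gr$ share the same vielbeins (as shown just above). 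Subtracting, the non-tensorial terms $-e_a^\nu\partial_\mu e^b_\nu$ cancel and one is left with the clean relation
\begin{equation*}
\Gamma^{rb}_{\,\,\mu ra}-\Gamma^b_{{\scriptscriptstyle R}\,\mu a}=e_a^\nu e^b_\lambda\bigl(\Gamma^{r\lambda}_{\,\,\mu r\nu}-\Gamma^\lambda_{{\scriptscriptstyle R}\,\mu\nu}\bigr).
\end{equation*}
This is the crucial simplification: the difference of two connections is a genuine tensor, so the vielbein contraction is unambiguous and the awkward derivative-of-vielbein contributions never survive.

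It remains to insert \eqref{RelatChristos} for the coordinate difference and to re-express the result in the $\g$-orthonormal frame. Using $r\,dx^b=g^b\,dx^b$ together with the symmetry of the metrics one has $\gr^{b\kappa}=\gr^{\kappa b}=\g(dx^\kappa,r\,dx^b)=g^b\g^{b\kappa}$, which is exactly where the global factor $g^b$ of $K^b_{\,\,\mu a}$ is born; meanwhile $e_a^\nu\partial_{r\nu}$ and $e_a^\nu\partial_\nu$ become $\partial_{ra}$ and $\partial_a$. Collecting everything,
\begin{equation*}
\tilde\Gamma^b_{\,\,\mu a}=\Gamma^b_{{\scriptscriptstyle R}\,\mu a}+\frac12 g^b\g^{b\kappa}\bigl(\partial_{ra}\g_{\mu\kappa}-\partial_a\g^{\scriptscriptstyle R}_{\mu\kappa}\bigr)-g^b\partial_\mu g_a=\Gamma^b_{{\scriptscriptstyle R}\,\mu a}+K^b_{\,\,\mu a},
\end{equation*}
which is the asserted expression for $K^b_{\,\,\mu a}$, and hence the claimed form of $\tilde\nabla_\mu^{{\scriptscriptstyle R,\Sp}}$.

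I expect the real obstacle to be the bookkeeping in this last step, namely tracking the signature factors $g_a=\g(e_a,e_a)$ and $g^a$ that are hidden in $\partial_{ra}=g_a\partial_a$ and in the passage $\gr^{b\kappa}=g^b\g^{b\kappa}$: it is precisely the product of these $\pm1$ factors that must assemble into the single prefactor $g^b$ multiplying the bracket in $K^b_{\,\,\mu a}$, and a sign slip anywhere would spoil the identification. The cancellation of the vielbein-derivative terms in the second step is what keeps the computation finite and serves as the structural guarantee that the manipulation is legitimate.
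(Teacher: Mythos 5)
Your proposal is correct and follows essentially the same route as the paper's proof: starting from \eqref{EqRewritTGamma}, passing to coordinate indices via \eqref{EqRelatGammVielb} and the analogous vielbein relation for $\gr$, inserting \eqref{RelatChristos}, and converting back to frame indices using $\gr^{b\kappa}=\g^b\g^{b\kappa}$ and $e_a^\nu\partial_{r\nu}=\partial_{ra}$. The only cosmetic difference is that you isolate the tensorial difference $\Gamma^{rb}_{\,\,\mu ra}-\Gamma^b_{{\scriptscriptstyle R}\,\mu a}$ to make the cancellation of the $-e_a^\nu\partial_\mu e^b_\nu$ terms explicit, whereas the paper substitutes sequentially; the content is identical.
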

\begin{proof}
 Starting from the particular element $\tilde\Gamma^{b}_{\,\, \mu a}$ appearing in the expression of $\Dir$ in equation \eqref{EqParticGamma}, thanks to equations \eqref{EqRewritTGamma} and \eqref{EqRelatGammVielb} we have
\begin{align*}
	\tilde\Gamma^{b}_{\,\, \mu a}=\Gamma^{rb}_{\,\, \mu ra}-g^b\partial_\mu g_a=e_a^\nu\Gamma^{r\lambda}_{\,\, \mu r\nu}e^b_\lambda-e_a^\nu\partial_\mu e^b_\nu-g^b\partial_\mu g_a
\end{align*}
using equation \eqref{RelatChristos} and using the relation $\gr^{b\kappa}=\g^b\g^{b\kappa}$ at the last line, this rewrite
\begin{align*}
	\tilde\Gamma^{b}_{\,\, \mu a}&=e_a^\nu\Gamma^\lambda_{{{\scriptscriptstyle R}}\, \mu \nu}e^b_\lambda+e_a^\nu\frac{1}{2}\gr^{\lambda\kappa}(\partial_{r\nu}\g_{\mu\kappa}-\partial_{\nu}\g^{\scriptscriptstyle R}_{\mu\kappa})e^b_\lambda-e_a^\nu\partial_\mu e^b_\nu-g^b\partial_\mu g_a\\
	&=\Gamma^b_{{{\scriptscriptstyle R}}\, \mu a}+\frac{1}{2}\gr^{b\kappa}(\partial_{ra}\g_{\mu\kappa}-\partial_{a}\g^{\scriptscriptstyle R}_{\mu\kappa})-g^b\partial_\mu g_a\\
	&=\Gamma^b_{{{\scriptscriptstyle R}}\, \mu a}+g^b(\frac{1}{2}\g^{b\kappa}(\partial_{ra}\g_{\mu\kappa}-\partial_{a}\g^{\scriptscriptstyle R}_{\mu\kappa})-\partial_\mu g_a)
\end{align*}
hence the result.
\end{proof}
This provides an intrinsic definition of $\Dir$ in terms of the Christoffel symbols associated with $\gr$, together with the term $K^b_{\,\, \mu a}$ also function of $\gr$, since $\gr$ is deducible from $r$ and $\g$. The author did not find any interpretation or symmetry law for the $K^b_{\,\, \mu a}$ terms.

			\section{$\tw$-morphism from an almost-commutative twisted spectral triple}
			\label{sec Almost Commutative manifold}

			The previous section concluded with the construction of the twisted spectral triple $(\CM, \calH, \Dir, J, \Gamma, \tw)$ derived from the application of a $\tw$-morphism on a pseudo-Riemannian spectral triple. This twisted spectral triple will be taken as the starting point from now, so that $\Dir=-i \tgm^\mu\tilde\nabla_\mu^{{\scriptscriptstyle R, \Sp}}$ with $\tilde\nabla_\mu^{{\scriptscriptstyle R, \Sp}}$ defined in equation \eqref{EqDefDir}. 
			
			We will focus on the case corresponding to a pseudo-Riemannian spectral triple of KO-dimension 6. This choice is motivated by physical considerations, as it encompasses the 4-dimensional Lorentzian spectral triple case, which will be derived from the corresponding algebraic constraints in subsection \ref{Subsec4dimACmfld}. 
			
			The algebraic constraints of $(\CM, \calK, \KDir, J, \Gamma)$ are then given by
			\begin{equation}
				\label{EqParamPseudo}
				\epsilon_0^\tw=1,\qquad\qquad \epsilon_1^\tw=1,\qquad\qquad \epsilon_2^\tw=-1,\qquad\qquad \epsilon_3^\tw=-1.
			\end{equation}
			The numbers $\epsilon, \epsilon^\prime$ which depend on $n$ are left unfixed for the moment.
			
			The twisted spectral triple $(\CM, \calH, \Dir, J, \Gamma, \tw)$ obtained by the action of the $\tw$-morphism $\twphi$ is then specified by the four numbers
			\begin{equation}
				\label{DefEpsiTw}
				\epsilon_0=\epsilon_0^\tw,\qquad\qquad \epsilon_1=\epsilon\epsilon_1^\tw=\epsilon,\qquad\qquad \epsilon_2=\epsilon_2^\tw,\qquad\qquad \epsilon_3=\epsilon^\prime\epsilon_3^\tw=-\epsilon^\prime.
			\end{equation}  
			
			We now turn our attention to the almost-commutative manifold constructed on $\CM\otimes\AF$, where $\AF$ is a finite-dimensional algebra associated with the spectral triple $(\AF, \FDir, \HF, \JF, \GF)$, also of KO-dimension 6, whose manifold part is given by $(\CM, \calH, \Dir, J, \Gamma, \tw)$. The motivations to do this are the following:

			\begin{itemize}
				\item Explicit appearance of $\Dir$ in the Lorentzian Dirac Lagrangian in equation \eqref{EqLagDir} as
				\begin{equation}
					\overline{\psi} \KDir \psi=\psi^\dagger \Dir \psi.
				\end{equation} 
				
				\item The Noncommutative standard model is based on the product $\CM\otimes\AF$ with KO-dimension 6 for the finite part. A justification for this choice can be found in \cite{barrett2007lorentzian}, ensuring the elimination of the fermion doubling problem i.e., permitting us to obtain the physically good number of degrees of freedom for the fermionic fields. Masses are given by $\FDir$, corresponding to the second term of the Dirac Lagrangian. In the Lorentzian case, this mass term appears as the Lorentz scalar $m\psi^\dagger \gamma^{(0)} \psi$.

				\item In quantum field theory and noncommutative geometry, modular flow arises from the Tomita-Takesaki theory of von Neumann algebras. Given a state and an associated von Neumann algebra, the modular operator and the anti-linear modular conjugation define a time evolution on the algebra through the modular group, parametrized by a real parameter $t$. If the algebra is commutative, the flow is trivial. This provides an intrinsic interpretation of time, linking the flow of observables to the algebra's internal symmetries. 
				
				On another side, twisted spectral triples have been introduced in the context of von Neumann algebras where the twist itself is connected with the modular flow \cite{connes2006type}. In our approach, the twist is connected with the Krein structure of the dual pseudo-Riemannian spectral triple. The intuition is then that there may be a connection between the twist and a finite noncommutative part, hence the proposition to study the presented product structure with a noncommutative finite space.
			\end{itemize}
			The previous sections can be considered as a bottom-up approach, constructing a twisted spectral triple from a pseudo-Riemannian one. This section proposes a top-down reversed approach, where the starting point is now the presented twisted spectral triple within a product spectral triple. The aim will be to show how this very product structure harmonizes with the twisted spectral triple ones, to show how this implies the emergence of the dual pseudo-Riemannian spectral triple and to characterize this last one.

			\subsection{Product of twisted spectral triples and spectral triple}
			\label{SubsecProdAC}
			
			We start from the twisted spectral triple $(\CM, \calH, \Dir, J, \Gamma, \tw)$ with algebraic constraints given by $\epsilon_i=1$ for $i\in\{0, 1, 2, 3\}$ defined in \eqref{DefEpsiTw} and the numbers $\epsilon,\epsilon^\prime$. The finite spectral triple $(\AF, \FDir, \HF, \JF, \GF)$ of KO-dimension 6 is specified by the numbers $\epsilon_i^F=\epsilon_i^K$ for $i\in\{0, 1, 2, 3\}$ given in \eqref{EqParamPseudo}. The inner products on $\calH$ and $\HF$ are given by $\inn$ and $\inn_F$.

			The product spectral triple is denoted by $(\calA_p, \calH_p, \Dir_p, J_p, \Gamma_p, \tw_p)$ where the algebra is given by $\calA_p\defeq\CM\otimes \AF$ with Hilbert space $\calH_p\defeq\calH\otimes \HF$ and adjoint is $(\, \cdot\,)^{\adp} =(\, \cdot\,)^{\dagger}\otimes (\, \cdot\,)^{\dagger_F}$ where $\dagger_F$ is the adjoint associated with $\HF$. The full real structures, grading, and twist are given by 
			\begin{equation*}
				J_p\defeq J\otimes \JF\qquad\qquad \Gamma_p\defeq \Gamma\otimes  \GF \qquad\qquad \rho_{p}\defeq\rho\otimes \fbb
			\end{equation*}
			and the Dirac operator by 
			\begin{equation*}
				\Dir_p\defeq \Dir \otimes \fbb + \calO\otimes \FDir
			\end{equation*} 
			where $\calO$ is an operator that will be specified by the algebraic constraints connecting $J_p, \Dir_p$, and $\Gamma_p$ in the upcoming proposition. The operator's algebraic constraints are given by
			
			\begin{equation}
				\label{EqAlgConstr}
				J_p^2=\epsilon_0^p\quad\quad J_p\Dir_p =\epsilon_1^p\Dir_p J_p\quad\quad J_p\Gamma_p =\epsilon_2^p\Gamma_p J_p \quad\quad \Gamma_p \Dir_p = \epsilon_3^p\Dir_p \Gamma_p.
			\end{equation}

			The following proposition is an important result showing how the operator $\tw$ may appear naturally in $\Dir_p$.

			\begin{proposition}
				\label{propCstO}
				The spectral triple algebraic constraints (see \eqref{EqAlgConstr}) imply
				\begin{equation}
					\calO=\calO^\dagger\qquad  \qquad  J\calO=\epsilon\calO J \qquad  \qquad   \Gamma\calO=\epsilon^\prime\calO \Gamma.
				\end{equation} 
			\end{proposition}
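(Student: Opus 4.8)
The plan is to substitute the decomposition $\Dir_p=\Dir\otimes\fbb+\calO\otimes\FDir$ into each relation of \eqref{EqAlgConstr}, together with the self-adjointness of $\Dir_p$ that is part of the spectral-triple data, to expand using the rule $(A\otimes B)(C\otimes D)=AC\otimes BD$, and then to split each resulting operator identity on $\calH\otimes\HF$ into two independent pieces according to the $\GF$-parity of the finite factor. Throughout I would use the manifold relations $\Dir^\dagger=\Dir$, $J\Dir=\epsilon_1\Dir J$ and $\Gamma\Dir=\epsilon_3\Dir\Gamma$ with $\epsilon_1=\epsilon$ and $\epsilon_3=-\epsilon^\prime$ from \eqref{DefEpsiTw}, and the finite relations $\FDir^{\dagger_F}=\FDir$, $\JF\FDir=\epsilon_1^F\FDir\JF=\FDir\JF$ and $\GF\FDir=\epsilon_3^F\FDir\GF=-\FDir\GF$, which follow from the KO-dimension-$6$ values $\epsilon_1^F=1$ and $\epsilon_3^F=-1$ recorded in \eqref{EqParamPseudo}.

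I would first extract $\calO=\calO^\dagger$ from self-adjointness. Since $(\,\cdot\,)^{\adp}=(\,\cdot\,)^\dagger\otimes(\,\cdot\,)^{\dagger_F}$ and $\Dir^\dagger=\Dir$, $\FDir^{\dagger_F}=\FDir$, one computes $\Dir_p^{\adp}=\Dir\otimes\fbb+\calO^\dagger\otimes\FDir$; equating this with $\Dir_p$ and using $\FDir\neq 0$ forces $\calO^\dagger=\calO$.

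Next, expanding $J_p\Dir_p=\epsilon_1^p\Dir_p J_p$ and commuting $\JF$ through $\FDir$ gives the identity $\epsilon_1\Dir J\otimes\JF+J\calO\otimes\FDir\JF=\epsilon_1^p\bigl(\Dir J\otimes\JF+\calO J\otimes\FDir\JF\bigr)$. The two finite operators $\JF$ and $\FDir\JF$ have opposite $\GF$-parity: $\JF$ is odd because $\epsilon_2^F=-1$, whereas $\FDir\JF$ is even. Hence they are linearly independent, and matching the two channels yields $\epsilon_1^p=\epsilon_1$ together with $J\calO=\epsilon_1\calO J=\epsilon\calO J$. Applying the same manoeuvre to $\Gamma_p\Dir_p=\epsilon_3^p\Dir_p\Gamma_p$ produces $\epsilon_3\Dir\Gamma\otimes\GF-\Gamma\calO\otimes\FDir\GF=\epsilon_3^p\bigl(\Dir\Gamma\otimes\GF+\calO\Gamma\otimes\FDir\GF\bigr)$, where now $\GF$ is even and $\FDir\GF$ is odd; separating channels gives $\epsilon_3^p=\epsilon_3$ and $\Gamma\calO=-\epsilon_3\calO\Gamma=\epsilon^\prime\calO\Gamma$, using $\epsilon_3=-\epsilon^\prime$. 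This establishes the three asserted relations.

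The only genuinely delicate point is the legitimacy of separating the two channels, that is, of equating the coefficients of the two finite operators. This is exactly where the $\GF$-grading of $\HF$ is indispensable: because $\FDir$ is $\GF$-odd it has no component along $\fbb$, so $\{\fbb,\FDir\}$, and hence each of the pairs $\{\JF,\FDir\JF\}$ and $\{\GF,\FDir\GF\}$, is linearly independent, whence an identity $A\otimes\fbb+B\otimes\FDir=0$ on $\calH\otimes\HF$ forces $A=B=0$. I would isolate this independence statement once (or realise it through the parity projection associated with $\ad_{\GF}$) and then invoke it in both the $J_p$ and the $\Gamma_p$ computations; the remainder is just careful bookkeeping of the signs $\epsilon_1=\epsilon$ and $\epsilon_3=-\epsilon^\prime$.
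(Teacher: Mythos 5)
Your proposal is correct and follows essentially the same route as the paper's proof: substitute the decomposition of $\Dir_p$ into each constraint of \eqref{EqAlgConstr}, commute the finite operators through $\FDir$ using $\epsilon_1^F=1$ and $\epsilon_3^F=-1$, and match the two tensor-factor terms. The only difference is that you explicitly justify the term-by-term matching via the $\GF$-parity/linear-independence argument, a step the paper leaves implicit when it invokes the ``existence'' of $\epsilon_1^p$ and $\epsilon_3^p$; this is a welcome clarification rather than a deviation.
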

			
			\begin{proof}
				The requirement $\Dir_p^\adp=\Dir_p$ implies directly $\calO=\calO^\dagger$.
				
				The "existence" of $\epsilon_1^p$ implies that
				\begin{equation*}
					J_p\Dir_p=J\Dir \otimes \JF+J\calO\otimes \JF\FDir =\epsilon_1\Dir  J\otimes \JF+\epsilon_1^FJ\calO\otimes \FDir  \JF
				\end{equation*}
				so that $\epsilon_1^p=\epsilon_1=\epsilon$ and then that $J\calO=\epsilon\epsilon_1^F\calO J=\epsilon\calO J$ since $\epsilon_1^F=1$.
				
				Then, the "existence" of $\epsilon_3^p$ implies that
				\begin{equation*}
					\Gamma_p\Dir_p= \Gamma\Dir \otimes \GF+ \Gamma\calO\otimes  \GF\FDir =\epsilon_3  \Gamma\Dir\otimes \GF+\epsilon_3^F \Gamma\calO\otimes \FDir  \GF
				\end{equation*}
				so that $\epsilon_3^p=\epsilon_3=-\epsilon^\prime$ and $ \Gamma\calO=-\epsilon^\prime\epsilon_3^F \calO \Gamma=\epsilon^\prime\calO \Gamma$ since $\epsilon_3^F=-1$.
			\end{proof}
			
			In addition, we have $\epsilon_0^p=\epsilon_0\epsilon_0^F=1$ and $\epsilon_2^p=\epsilon_2\epsilon_2^F=1$.
			
			In Proposition \ref{propCstO}, we see that $\calO$ satisfies all the properties required for the unitary operator $\tw$ (for compatibility with the real structure and the grading), while additionally imposing $\calO=\calO^\dagger$, i.e., exhibiting the property of a fundamental symmetry.
			
			\begin{proposition}
				\label{PropTwistInAC}
				The choice of operator $\calO=\tw^\prime$ such that $\rho(\, \cdot\,) =\tw^\prime(\, \cdot\,)\tw^\prime$ preserves the differential structure at the level of $(\AF, \FDir, \HF, \JF, \GF)$.
			\end{proposition}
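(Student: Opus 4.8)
The plan is to show that, once $\calO$ is chosen to implement the twist, the twisted differential calculus of the product restricts on the finite factor to the ordinary differential calculus of $(\AF, \FDir, \HF, \JF, \GF)$. First I would record that the relations established in Proposition~\ref{propCstO}, namely $\calO = \calO^\dagger$, $J\calO = \epsilon\calO J$ and $\Gamma\calO = \epsilon^\prime\calO\Gamma$, are exactly those satisfied by a fundamental symmetry $\tw^\prime$ with $\tw^\prime = \tw^{\prime\dagger}$ and $\tw^{\prime 2} = \bbbone$ implementing $\rho$ through $\rho(\,\cdot\,) = \tw^\prime(\,\cdot\,)\tw^\prime$. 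Since any two operators implementing the same inner automorphism differ by a central element, and since self-adjointness together with unitarity constrain this element to $\pm\bbbone$, one may take $\tw^\prime = \tw$, so the choice $\calO = \tw^\prime$ is fully consistent with the constraints already imposed on the product spectral triple.

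The core computation is the twisted commutator of $\Dir_p = \Dir\otimes\fbb + \tw^\prime\otimes\FDir$ with a purely finite element $1\otimes a_F$. Since $\rho_p = \rho\otimes\fbb$ acts as the identity on such elements (as $\rho(1) = 1$), the two manifold contributions $\Dir\otimes a_F$ cancel, leaving only the cross term:
\begin{equation*}
  [\Dir_p, 1\otimes a_F]_{\rho_p} = \tw^\prime\otimes[\FDir, a_F].
\end{equation*}
For a general element $a\otimes a_F$ one obtains instead $[\Dir, a]_\rho\otimes a_F + \tw^\prime a\otimes[\FDir, a_F]$, using the identity $\rho(a)\tw^\prime = \tw^\prime a$ which follows from $\tw^{\prime 2} = \bbbone$. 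Because $\tw^\prime$ is unitary and lives on the manifold tensor factor, the assignment $[\FDir, a_F]\mapsto \tw^\prime\otimes[\FDir, a_F]$ is a faithful embedding of the finite differential calculus generated by the $[\FDir, a_F]$ into the twisted one-forms of the product, preserving boundedness (automatic in finite dimension) and the bimodule structure.

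It then remains to verify the twisted first-order condition on the finite factor. Writing the finite opposite element as $1\otimes b_F^\circ$ with $b_F^\circ = \JF b_F^{\dagger_F}\JF^{-1}$, the opposite twist $\rho_p^\circ$ again acts trivially on it, so the outer twisted commutator is an ordinary one; since $\tw^\prime$ and $b_F^\circ$ act on different tensor factors it factors as
\begin{equation*}
  \big[[\Dir_p, 1\otimes a_F]_{\rho_p},\, 1\otimes b_F^\circ\big]_{\rho_p^\circ} = \tw^\prime\otimes\big[[\FDir, a_F], b_F^\circ\big],
\end{equation*}
which vanishes precisely by the first-order condition of $(\AF, \FDir, \HF, \JF, \GF)$. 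Together with the compatibilities with $J_p$ and $\Gamma_p$ already secured by Proposition~\ref{propCstO}, this shows that every piece of differential-structure data of the finite triple is reproduced faithfully inside the product, i.e.\ the choice $\calO = \tw^\prime$ preserves the finite differential structure.

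I expect the main obstacle to be the clean decoupling underlying these computations: one must be sure that $\rho_p$ is genuinely trivial on the finite algebra (so that twisted commutators there reduce to ordinary ones) and that the unitary $\tw^\prime$, sitting on the manifold factor, commutes past finite-algebra elements without perturbing the finite commutators. The algebraic identity $\rho(a)\tw^\prime = \tw^\prime a$, equivalently $\tw^{\prime 2} = \bbbone$, is what makes the manifold-twisted sector and the finite-untwisted sector decouple, and confirming that $\tw^\prime$ may indeed be identified with the manifold symmetry $\tw$ is the delicate algebraic point to settle first.
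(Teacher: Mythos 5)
Your core computation is exactly the paper's proof: expanding $[\Dir_p, a_1\otimes a_2]_{\rho_p}$ and observing that the cross terms assemble into $\tw^\prime a_1\otimes[\FDir,a_2]$ precisely when $\tw^\prime a_1=\rho(a_1)\tw^\prime$, i.e.\ when $\tw^\prime$ implements $\rho$. The extra material you add (the twisted first-order condition on the finite factor, the identification $\tw^\prime=\tw$ up to a central sign) is correct but goes beyond what the paper establishes in this proposition; the only slip is twice attributing the identity $\rho(a)\tw^\prime=\tw^\prime a$ to $\tw^{\prime 2}=\bbbone$ alone, when it in fact requires the stated hypothesis $\rho(\,\cdot\,)=\tw^\prime(\,\cdot\,)\tw^\prime$ together with $\tw^{\prime 2}=\bbbone$.
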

			
			\begin{proof}
				Let us suppose that $\calO=\tw^\prime$ is a fundamental symmetry. Then, taking any $a=a_1\otimes a_2\in\AF$, the full differential is obtained by computing the composed twisted commutator
				\begin{align*}
					[\Dir_p, a]_{\rho_p}&=[\Dir , a_1]_{\rho}\otimes a_2+ \tw^\prime a_1\otimes \FDir  a_2 - \rho(a_1)\tw^\prime \otimes a_2\FDir
				\end{align*}
				which implies the requirement that $\tw^\prime a_1=\rho(a_1)\tw^\prime$ in order to obtain the term $[\FDir , a_2]$ in the full derivation
				\begin{equation}
					\label{EqFullDeriv}
					[\Dir_p, a]_{\rho_p}=[\Dir , a_1]_{\rho}\otimes a_2+ \tw^\prime a_1\otimes [\FDir , a_2].
				\end{equation}
				This requirement is equivalent to $\rho(a_1)=\tw^\prime a_1 \tw^\prime$.
			\end{proof}
			
			We therefore propose to replace $\calO$ by $\tw=\tw^\dagger$ so that we now consider the Dirac
			\begin{equation}
				\label{EqFullDir}
				\Dir_p=\Dir\otimes \fbb + \tw\otimes \FDir.
			\end{equation}
			
			This may appear surprising since it is now the algebraic structure of the almost-commutative manifold itself that invites us to introduce $\tw$, at the level of the finite Dirac operator. The conditions connecting $\tw$ to $J$ and $\Gamma$ through $\epsilon$ and $\epsilon^\prime$ are then induced by the spectral triple structure of the almost-commutative manifold. Conversely, starting from the unitary operator $\tw$ deduced in subsections \ref{subsecFromTwToIndef} and \ref{SubsecCliff}, the condition $\tw=\tw^\dagger$ ensuring the hermiticity of the product is now deduced from the requirement $\Dir_p^\adp=\Dir_p$. 
			
			This makes an interesting link between the product spectral triple structure and the twisted spectral triples, where the twist naturally emerges as a mechanism to decouple the differential structures. In addition, the term $\tw\otimes \FDir$ is exactly the expected term; having the form of a mass term.

			\begin{remark}
				Although the twist $\rho$ is trivial on $\CM$ (as any element in $\CM$ is proportional to identity), we keep the notation $\rho(a_1)$ to highlight the role played by the twist in full generality, i.e., for noncommutative generalizations of the twisted spectral triple $(\CM, \calH, \Dir, J, \Gamma, \tw)$ for which the twist is not necessarily trivial. In the case $\epsilon^\prime=-1$, as mentioned in Remark \ref{RQZZ}, the twist is the twist by the grading. Note that in this case, $\tw$-unitaries in $\CM$ are also unitary operators.
				
			\end{remark}
			
			We recover the twisted anticommutator relation for the product spectral triple
			\begin{equation}
				\{\Dir_p, \Gamma_p\}_{\rho_p}=0
			\end{equation}
			so that $\epsilon_3^p=-\epsilon^\prime$.
			
			\begin{remark}
				In the case $\epsilon=-1$, we have that $\epsilon_i$ corresponds to the inverse sign of the usual KO-dimension table in KO-dimension $4$, and that $\epsilon_i^p$ corresponds to the inverse sign of the KO-dimension $2$ for usual spectral triples. This may indicate that we can extend the KO-dimension table by permitting a more general relation between the grading and the Dirac with $\epsilon_3^p$, where the signs $\epsilon_i$ are reversed, as illustrated in the following table:
				
				\begin{table}[H]
					\centering
					\small
					\resizebox{0.5\textwidth}{!}{%
						\begin{tabular}{|c|*{4}{c|}|c|c|}
							\hline
							2m	& 0 & 2 & 4 & 6 & (2) & (4) \\
							\hline
							$\epsilon_0$ & 1 & -1 & -1 & 1 & 1 & 1 \\
							\hline
							$\epsilon_1$ & 1 & 1 & 1 & 1 & -1 & -1 \\
							\hline
							$\epsilon_2$ & 1 & -1 & 1 & -1 & 1 & -1 \\
							\hline
							$\epsilon_3$ & -1 & -1 & -1 & -1 & 1 & 1 \\
							\hline
						\end{tabular}
					}\normalsize
					\caption{\label{KoDimTable2} Parameters $\epsilon_0,\epsilon_1,\epsilon_2,\epsilon_3$ for KO-dimensions (mod 8).}
					\normalsize
				\end{table}
				
				The last two columns correspond to the two twisted spectral triples obtained in this section. The notation $(2)$ and $(4)$ is used to underline the fact that the KO-dimension interpretation is not clear in this twisted context. This point will be explored in future research. The apparent logic may then be the same as for the almost-commutative manifold of the noncommutative standard model of particle physics, i.e., we obtain a spectral triple of KO-dimension $(4+6)\bmod 8 =2$. 
			\end{remark}

			\subsection{$\tw$-morphism from the spectral triple product structure} 
			\label{SubsecEmergKisom}
			
			The following subsection aims at showing how the very presence of $\tw$ in equation \eqref{EqFullDir} induces the emergence of a Krein structure together with the pseudo-Riemannian spectral triple induced by the $\tw$-morphism.
			
			Thanks to this, by defining the $\tw$-selfadjoint Dirac operator $\Dir^\tw=\tw \Dir$, we can rewrite equation \eqref{EqFullDir} as
			\begin{equation}
				\label{EqformfactoDir}
				\Dir_p=\tw(\Dir^\tw\otimes \fbb + \mbb\otimes \FDir)
			\end{equation}
			and \eqref{EqFullDeriv} as
			\begin{align*}
				[\Dir_p, a]_{\rho_p}&=[\Dir , a_1]_{\rho}\otimes a_2+ \tw a_1\otimes [\FDir , a_2]\\
				&=\tw([\Dir^\tw, a_1]\otimes a_2+a_1\otimes [\FDir , a_2]).
			\end{align*}
			We recognize the first elements of transformation of a $\tw$-morphism for the Dirac and the corresponding derivation at the level of the first spectral triple.

			The adjoint $\Tadj$ is defined from $\dagger$ by the relation $(\, \cdot\,)^\Tadj=\rho(\, \cdot\,)^\dagger$. Let's now examine how $\tw$-unitaries $U_\tw\otimes \fbb$ arise as the natural operator to preserve the fundamental symmetry $\tw$ in $\Dir_p$, thus the term $\tw\otimes \FDir$ of the noncommutative part.

			General transformations that preserve the $\dagger$-selfadjoint property of $\Dir$ are given by the transformations $\Dir\,\to\, \tilde U\Dir \tilde U^\dagger$ where $\tilde U$ is not necessarily a unitary operator.

			Natural unitary transformations for the first twisted spectral triple must not affect the finite part term, namely $\tw$. The following proposition shows that they must therefore be implemented by $\tw$-unitary operators.

			\begin{proposition}
				The transformations of $\Dir_p$ by operators $\tilde{U}\otimes \fbb$ (at the level of the first spectral triple) given by
				\begin{equation}
					\Dir_p\,\to\,(\tilde{U}\otimes \fbb) \Dir_p (\tilde{U}^\dagger\otimes \fbb)\qquad\text{and}\qquad\Dir_p\,\to\,(\tilde{U}^\dagger\otimes\fbb) \Dir_p (\tilde{U}\otimes \fbb)
				\end{equation}
				that preserve $\tw\otimes \FDir$ (corresponding to the second spectral triple) are $\tw$-unitary operators.
			\end{proposition}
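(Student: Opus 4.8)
The plan is to expand both conjugations explicitly, isolate the operator multiplying $\FDir$, and recognise that its preservation is equivalent to the two defining relations of $\tw$-unitarity.

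First I would insert the form $\Dir_p=\Dir\otimes\fbb+\tw\otimes\FDir$ of equation \eqref{EqFullDir} into the first transformation. As $\tilde U\otimes\fbb$ acts only on the first tensor factor and $\fbb$ is the identity on $\HF$, one gets
\begin{equation*}
(\tilde U\otimes\fbb)\,\Dir_p\,(\tilde U^\dagger\otimes\fbb)=(\tilde U\Dir\tilde U^\dagger)\otimes\fbb+(\tilde U\tw\tilde U^\dagger)\otimes\FDir,
\end{equation*}
and, exchanging $\tilde U\leftrightarrow\tilde U^\dagger$, the second transformation produces $(\tilde U^\dagger\Dir\tilde U)\otimes\fbb+(\tilde U^\dagger\tw\tilde U)\otimes\FDir$. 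In each case the component attached to $\FDir$ is precisely the finite-part term $\tw\otimes\FDir$ whose preservation is required.

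Next I would extract the constraint. Writing the requirement for the first transformation as $(\tilde U\tw\tilde U^\dagger-\tw)\otimes\FDir=0$ and using $\FDir\neq0$, the cancellation of the tensor factor yields $\tilde U\tw\tilde U^\dagger=\tw$; the second transformation gives in the same way $\tilde U^\dagger\tw\tilde U=\tw$.

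Finally I would translate these equalities into $\tw$-unitarity. Since $\tw=\tw^\dagger$ and $\tw^2=\bbbone$, the $\tw$-adjoint reads $\tilde U^\Tadj=\rho(\tilde U)^\dagger=\tw\tilde U^\dagger\tw$. Multiplying $\tilde U\tw\tilde U^\dagger=\tw$ on the right by $\tw$ gives $\tilde U\,(\tw\tilde U^\dagger\tw)=\tilde U\tilde U^\Tadj=\bbbone$, while multiplying $\tilde U^\dagger\tw\tilde U=\tw$ on the left by $\tw$ gives $(\tw\tilde U^\dagger\tw)\,\tilde U=\tilde U^\Tadj\tilde U=\bbbone$. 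Hence $\tilde U\tilde U^\Tadj=\tilde U^\Tadj\tilde U=\bbbone$, i.e.\ $\tilde U\in\calU_\tw(\calB(\calH))$, and the converse is the immediate reversal of these steps. The computation is otherwise routine; the only delicate point is the tensor-factor cancellation, which needs $\FDir\neq0$, together with the careful bookkeeping of $\tw=\tw^\dagger$, $\tw^2=\bbbone$ when converting between the $\dagger$- and $\Tadj$-formulations — this is where I expect any friction to lie.
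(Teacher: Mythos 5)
Your proof is correct and follows essentially the same route as the paper: expand the conjugation, read off that the coefficient of $\FDir$ becomes $\tilde U\tw\tilde U^\dagger$ (resp.\ $\tilde U^\dagger\tw\tilde U$), impose equality with $\tw$, and convert to $\tilde U\tilde U^\Tadj=\tilde U^\Tadj\tilde U=\bbbone$ via $\tilde U^\Tadj=\tw\tilde U^\dagger\tw$. The only additions are explicit bookkeeping the paper leaves implicit (the $\FDir\neq 0$ cancellation and the $\tw^2=\bbbone$ manipulations), which are fine.
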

			
			\begin{proof}
				The transformation acts respectively on $\tw\otimes \FDir$ as
				\begin{equation}
					\tilde{U}\tw\tilde{U}^\dagger\otimes \FDir\qquad\text{and}\qquad\tilde{U}^\dagger\tw\tilde{U}\otimes \FDir.
				\end{equation}
				The term $\tw\otimes \FDir$ is then preserved if $\tilde{U}\tw\tilde{U}^\dagger=\tilde{U}^\dagger\tw\tilde{U}=\tw$ which implies $\tilde{U}\tilde{U}^\Tadj=\tilde{U}^\Tadj\tilde{U}=\mbb$, hence that $\tilde{U}$ is $\tw$-unitary.
			\end{proof}
			
			The $\tw$-unitary operators (for the first spectral triple) then arise naturally from the requirement that they must not affect the term $\tw\otimes \FDir$. This choice is reinforced by the following proposition, where this appears as a necessity to make the fluctuations of $\Dir$ and $\FDir$ independent.

			The following proposition shows how the fluctuations of the respective spectral triple are generated by generalized unitaries.
			\begin{proposition}
				\label{PropIndepFluctu}
				Taking the operators \( U_\tw\otimes U \) where \( U_\tw=u_\tw J u_\tw (J)^{-1} \) with $u_\tw\in \calU_\tw(\CM)$ and \( U=u \JF u (\JF)^{-1} \) with \( u \) a unitary operator in \( \AF \), the corresponding fluctuations of \( \Dir_p \) are given by
				\begin{equation}
					\label{EqFluctuDir}
					(U_\tw\otimes U) \Dir_p (U_\tw^\dagger \otimes U^{\dagger_F})= \tw(\Dir_{ A^\tw}^\tw \otimes \fbb + \mbb\otimes \Dir_{A_F})
				\end{equation}
				where 
				\begin{align*}
					&\Dir_{ A^\tw}^\tw\defeq V_\tw\Dir^\tw (V_\tw)^\Tadj=\KDir+A^\tw+\epsilon_1^\tw J A^{\tw} J^{-1}\\
					& \Dir_{A_F}\defeq U\FDir U^{\dagger_F}=\FDir+A_F+\epsilon_1^F \JF A_F \JF^{-1}
				\end{align*}
				with \( V_\tw=\rho(U_\tw) \) being also a \(\tw\)-unitary operator and the respective one forms
				\begin{align*}
					&A^{\tw}=v_\tw [\KDir, v_\tw^\Tadj]\\
					&A_F=u[\FDir, u^{\dagger_F}]
				\end{align*}
				with \( v_\tw=\rho(u_\tw) \).
			\end{proposition}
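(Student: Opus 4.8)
The plan is to conjugate $\Dir_p$ in the factorized form of equation \eqref{EqformfactoDir}, or equivalently $\Dir_p=\Dir\otimes\fbb+\tw\otimes\FDir$, and to distribute the adjoint action $(U_\tw\otimes U)(\,\cdot\,)(U_\tw^\dagger\otimes U^{\dagger_F})$ over the two summands. Because the conjugating operator factorizes, this reduces the whole computation to four single-factor pieces:
\[
(U_\tw\otimes U)\,\Dir_p\,(U_\tw^\dagger\otimes U^{\dagger_F})=U_\tw\Dir U_\tw^\dagger\otimes U U^{\dagger_F}+U_\tw\tw U_\tw^\dagger\otimes U\FDir U^{\dagger_F}.
\]

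On the finite side I would first observe that $U=u\JF u\JF^{-1}$ is an honest unitary on $\HF$, being the product of the commuting unitaries $u$ and $\JF u\JF^{-1}$; hence $U U^{\dagger_F}=\fbb$, which removes the $\FDir$-free contribution of the first summand. The surviving finite factor $U\FDir U^{\dagger_F}$ is then the standard (untwisted) fluctuation $\Dir_{A_F}=\FDir+A_F+\epsilon_1^F\JF A_F\JF^{-1}$ with $A_F=u[\FDir,u^{\dagger_F}]$, which I can invoke directly from the usual fluctuation computation for real spectral triples.

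The core of the argument is the manifold side, and rests on two identities that express the defining property of a $\tw$-unitary. Writing the $\tw$-adjoint as $U_\tw^\Tadj=\rho(U_\tw)^\dagger=\tw U_\tw^\dagger\tw$ and using $U_\tw U_\tw^\Tadj=\bbbone$ together with $\tw=\tw^\dagger$, $\tw^2=\bbbone$, I get $U_\tw\tw U_\tw^\dagger=\tw$; this is exactly the decoupling mechanism that leaves $\tw\otimes\FDir$ untouched and produces the clean factor $\tw\otimes\Dir_{A_F}$ in the second summand. For the first summand I would rewrite $U_\tw\Dir U_\tw^\dagger$ through $V_\tw=\rho(U_\tw)=\tw U_\tw\tw$: the relation $V_\tw^\Tadj=\tw V_\tw^\dagger\tw=U_\tw^\dagger$, together with $\Dir^\tw=\tw\Dir$ and a short string of $\tw^2=\bbbone$ cancellations, gives $U_\tw\Dir U_\tw^\dagger=\tw\,V_\tw\Dir^\tw V_\tw^\Tadj=\tw\,\Dir_{A^\tw}^\tw$. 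Factoring out the common $\tw$ from both summands then assembles the claimed expression $\tw(\Dir_{A^\tw}^\tw\otimes\fbb+\mbb\otimes\Dir_{A_F})$.

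To finish, I would identify $\Dir_{A^\tw}^\tw=V_\tw\Dir^\tw V_\tw^\Tadj$ with the fluctuated pseudo-Riemannian operator $\KDir+A^\tw+\epsilon_1^\tw J A^\tw J^{-1}$, $A^\tw=v_\tw[\KDir,v_\tw^\Tadj]$, which is precisely the twisted-fluctuation result already established in equation \eqref{Eqfluct} once one recalls $\Dir^\tw=\KDir$, $v_\tw=\rho(u_\tw)$, and the fact that $V_\tw$ is again $\tw$-unitary. I expect the only real difficulty to be bookkeeping rather than conceptual: keeping the $\tw$-adjoint $\Tadj$ carefully separated from the ordinary adjoint $\dagger$ on the manifold factor and from $\dagger_F$ on the finite factor, and tracking the single leftover $\tw$ so that both summands carry the same overall $\tw$ and can be factored out. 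The hypothesis that $\tw$ is a fundamental symmetry, $\tw=\tw^\dagger$ and $\tw^2=\bbbone$, is used exactly and only in verifying $V_\tw^\Tadj=U_\tw^\dagger$ and $U_\tw\tw U_\tw^\dagger=\tw$.
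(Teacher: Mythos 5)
Your proposal is correct and follows essentially the same route as the paper's proof: split $\Dir_p$ into its two tensor summands, use $U U^{\dagger_F}=\fbb$ and the $\tw$-unitarity identities $U_\tw \tw U_\tw^\dagger=\tw$ and $U_\tw\tw=\tw V_\tw$ with $V_\tw^\Tadj=U_\tw^\dagger$, and then invoke the previously established fluctuation formulas for $V_\tw\KDir V_\tw^\Tadj$ and $U\FDir U^{\dagger_F}$. The only difference is presentational (you start from the unfactorized form of $\Dir_p$ and spell out more of the bookkeeping), not mathematical.
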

			\begin{proof}
				For the first term of equation \eqref{EqformfactoDir}, we have that 
				\begin{align*}
					(U_\tw\otimes U) \tw(\Dir^\tw\otimes \fbb) (U_\tw^\dagger \otimes U^{\dagger_F})&= \tw(\rho(U_\tw)\KDir U_\tw^\dagger \otimes \fbb )\\
					&=\tw(V_\tw \KDir V_\tw^\Tadj \otimes \fbb)
				\end{align*}

				For the second term of equation \eqref{EqformfactoDir}, we have
				\begin{align*}
					(U_\tw\otimes U)(\tw\otimes\FDir)(U_\tw^\dagger \otimes U^{\dagger_F})= \tw\otimes U\FDir U^{\dagger_F}
				\end{align*} 
				where the equality with the fluctuated Dirac operator \( \Dir_{A_F} \) is a well known result. 
			\end{proof} 
			
			Taking any $\psi=\psi_1\otimes \psi_2$ and $\psi^\prime=\psi^\prime_1\otimes \psi^\prime_2$ in $\calH_p$, the inner product $\langle \, \cdot\, , \, \cdot\, \rangle_p$ on $\calH_p$ is defined by the relation
			\begin{equation}
				\langle  \psi , \psi^\prime \rangle_p=\langle \psi_1 ,\psi^\prime_1 \rangle \langle \psi_2 ,\psi^\prime_2 \rangle_F.
			\end{equation}
			Using the $\tw$-product $\langle \, \cdot\, , \, \cdot\, \rangle_{\tw}= \langle \, \cdot\, , \tw \, \cdot\, \rangle$, the induced $\tw\otimes\fbb$-product for $\calH_p$ is then
			\begin{equation}
				\langle \, \cdot\, , \, \cdot\, \rangle_{\tw\otimes \fbb}\defeq \langle \, \cdot\, , \tw\otimes \fbb\, \cdot\, \rangle_p =\langle \, \cdot\, , \, \cdot\, \rangle_{\tw}\langle \, \cdot\, , \, \cdot\, \rangle_{F}.
			\end{equation}
			
			Then, defining $\KDir_p\defeq \tw\Dir_p=\Dir^\tw\otimes \fbb + \mbb\otimes \FDir$, the evaluation of $\Dir_p$ gives
			\begin{equation}
				\label{EqEvalDir}
				\langle  \psi , \Dir_p\psi^\prime \rangle_p=\langle  \psi , \KDir_p\psi^\prime \rangle_{\tw\otimes \fbb}=\langle \psi_1 ,  \Dir^\tw\psi^\prime_1 \rangle_{\tw} \langle \psi_2 ,\psi^\prime_2 \rangle_F+\langle \psi_1 , \psi^\prime_1 \rangle_{ \tw} \langle \psi_2 ,\FDir \psi^\prime_2 \rangle_F
			\end{equation}
			as $\tw$ is a fixed structure for the twisted spectral triple  $(\CM, \calH, \Dir, J, \Gamma, \tw)$. This relation also holds true for the fluctuated Dirac operators.   
			\begin{remark}
			Note that the fixed nature of $\tw$ is now induced by the fact that this operator is associated with the finite spectral triple in this top-down approach.
			\end{remark}
			We recognize the Krein structure at the level of the first spectral triple. By considering only the transformation at the level of the first spectral triple, this induces that only the general $\tw$-unitary operators $\hat{U}_\tw \otimes \fbb$ (not necessarily of the form $u_\tw J u_\tw (J)^{-1}\otimes \fbb$) acting on the structures of the first spectral triple as $\psi_1\,\to\, \hat{U}_\tw \psi_1$, $\psi^\prime_1\,\to\, \hat{U}_\tw \psi^\prime_1$ and $\Dir^\tw\,\to\, \hat{U}_\tw \Dir^\tw\hat{U}_\tw^\Tadj$ will preserve the evaluation \eqref{EqEvalDir}. This evaluation is more generally preserved by the $\tw\otimes\fbb$-unitary operators' actions at the level of the full spectral triple, with the generation of fluctuations in equation \eqref{EqFluctuDir} as a special case.

			In this way, the very structure of the product spectral triple singularizes a "sub-spectral triple" $(\CM, \calK,\KDir, J, \Gamma)$ where $\calK$ is obtained from $\calH$ by replacing its inner product with $\langle \, \cdot\, , \tw\, \cdot\, \rangle$. This pseudo-Riemannian spectral triple is nothing else that the $\tw$-morphism $\twphi$ applied on $(\CM, \Dir, \calH, J, \Gamma, \tw)$. This therefore induces that this product spectral triple structure contains and brings to the forefront a pseudo-Riemannian spectral triple. Note that these conclusions also hold if the twisted spectral triple on $\CM$ is replaced with a noncommutative generalization.

			\subsection{Structure of the induced pseudo-Riemannian spectral triple}
			\label{Subsec4dimACmfld}
			
			This subsection aims to present the method for determining the properties (particularly the signature) of the emerged pseudo-Riemannian spectral triple, with particular attention to the four-dimensional case.
			
			We start from the previous twisted spectral triple $(\CM, \Dir, \calH, J, \Gamma, \tw)$ for the $2m$-dimensional manifold $\Man$. The Dirac operator $\Dir$ takes the form $\Dir=-i \tgm^\mu\tilde\nabla_\mu^{{\scriptscriptstyle R, \Sp}}$ as defined in relation \eqref{EqDefDir}, where the $\tgm^\mu$ matrices are self-adjoint elements in $M_{2^m}(\mathbb{C})$ such that $\forall \mu$, $[\Gamma, \tgm^\mu]=0$. We additionally ask them to be unitaries in the $\gr$-orthonormal basis. From equation \eqref{EqMetTrace}, a general formula for the metric $\gr$ is given by the following:
			\begin{equation}
				\label{EqFty}
				\tr(\tgm^\mu\tgm^\nu)=2^m\gr^{\mu\nu}
			\end{equation}
			For the moment, we keep this general relation with the metric. A twisted Clifford-like relation will be provided at a later stage.
			
			We can then define $\kgm^\mu\defeq \tw \tgm^\mu$ such that we have $\KDir=\tw\Dir$. The subscript $\tw$ in $\kgm^\mu$ emphasizes the dependence of $\kgm^\mu$ on $\tgm^\mu$ through the choice of $\tw$ in this top-down approach. The purpose will then be to understand how the operator $\tw$ is constrained by the algebraic structure of the almost-commutative spectral triple and to characterize the resulting effects on the $\tgm^\mu$'s matrices and the associated metric.
			
			\begin{proposition}
				The $\kgm^a$'s are unitary operators, in general coordinate basis we have that
				\begin{equation}
					\label{Eqrelatgkmadj}
					(\kgm^\mu)^\Tadj=\kgm^\mu.
				\end{equation} 
			\end{proposition}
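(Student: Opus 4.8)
The plan is to separate the statement into its two independent claims and prove each directly from the defining properties of the fundamental twist, namely $\tw=\tw^\dagger$ together with the unitarity of $\tw$, which combine to give $\tw^2=\mbb$ (equivalently $\tw^{-1}=\tw$). I would first record this identity, then recall from the construction that $\kgm^\mu\defeq\tw\tgm^\mu$, that the $\tgm^\mu$ are self-adjoint, i.e. $(\tgm^\mu)^\dagger=\tgm^\mu$, and that in addition they are unitary in the $\gr$-orthonormal basis.

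For the unitarity of the $\kgm^a$ I would simply observe that $\kgm^a=\tw\tgm^a$ is a product of two unitaries: since $\tw^\dagger\tw=\mbb$ and $(\tgm^a)^\dagger\tgm^a=\mbb$ in the $\gr$-orthonormal basis, one gets $(\kgm^a)^\dagger\kgm^a=(\tgm^a)^\dagger\tw^\dagger\tw\tgm^a=(\tgm^a)^\dagger\tgm^a=\mbb$, and symmetrically $\kgm^a(\kgm^a)^\dagger=\mbb$. For the relation \eqref{Eqrelatgkmadj}, I would unwind the definition $(\,\cdot\,)^\Tadj=\rho(\,\cdot\,)^\dagger$ with $\rho(\,\cdot\,)=\tw(\,\cdot\,)\tw$, compute $(\kgm^\mu)^\Tadj=\bigl(\tw\kgm^\mu\tw\bigr)^\dagger=\tw(\kgm^\mu)^\dagger\tw$, and then substitute $(\kgm^\mu)^\dagger=(\tw\tgm^\mu)^\dagger=(\tgm^\mu)^\dagger\tw^\dagger=\tgm^\mu\tw$, using $(\tgm^\mu)^\dagger=\tgm^\mu$ and $\tw^\dagger=\tw$. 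This yields $(\kgm^\mu)^\Tadj=\tw\,\tgm^\mu\,\tw^2=\tw\tgm^\mu=\kgm^\mu$, the only nontrivial inputs being $\tw^2=\mbb$ and the self-adjointness of the $\tgm^\mu$.

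There is no genuine computational obstacle here; the point worth emphasizing is rather the asymmetry between the two claims. Unitarity is a basis-dependent property, stated only in the $\gr$-orthonormal basis where the $\tgm^a$ are unitary, whereas $\Tadj$-self-adjointness is coordinate-free precisely because the indefinite adjoint $\Tadj$ reabsorbs the metric factors carried by the vielbein in $\kgm^\mu=e^\mu_a\kgm^a$ — mirroring the relation $(\gamma^\mu)^\Tadj=\gamma^\mu$ already established in subsection \ref{SubsecPseudoRiem}. One should also note that the hypothesis $(\tgm^\mu)^\dagger=\tgm^\mu$ is exactly the component-level counterpart of the $\dagger$-self-adjointness of $\Dir=-i\tgm^\mu\tilde\nabla_\mu^{{\scriptscriptstyle R, \Sp}}$, so that the identity $(\kgm^\mu)^\Tadj=\kgm^\mu$ is the natural reflection, under the $\tw$-morphism, of $\KDir$ being $\Tadj$-self-adjoint.
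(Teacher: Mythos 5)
Your proposal is correct and follows essentially the same route as the paper: both arguments rest only on $\tw=\tw^\dagger$, $\tw^2=\mbb$, the unitarity of the $\tgm^a$ in the $\gr$-orthonormal basis, and the self-adjointness of the $\tgm^\mu$, with unitarity of $\kgm^a=\tw\tgm^a$ obtained from the unitarity of the factors and the identity $(\kgm^\mu)^\Tadj=\tw(\kgm^\mu)^\dagger\tw=\tw\tgm^\mu\tw^2=\kgm^\mu$ obtained by the same substitution the paper performs. The surrounding remarks on basis dependence are accurate but not needed for the proof.
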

			\begin{proof}
				Since $\tgm^a$ is unitary we have $\tw\kgm^a (\kgm^a)^\dagger\tw=\mbb$ and then by multiplying by $\tw$ on the left and the right that $\kgm^a (\kgm^a)^\dagger=\mbb$. In the same way, $ (\kgm^a)^\dagger\tw\tw\kgm^a=\mbb$ implies directly $(\kgm^a)^\dagger\kgm^a=\mbb$ hence the result. Since $\tgm^\mu$ is self-adjoint, it follows that $\tw \kgm^\mu=(\kgm^\mu)^\dagger\tw$, leading to $\kgm^\mu=\tw(\kgm^\mu)^\dagger\tw\equiv(\kgm^\mu)^\Tadj$.
			\end{proof}
			
			In the same way, the operator $\KDir$ can be associated to a metric $\gk^{\mu\nu}$ defined by
			\begin{equation}
				\label{EqMetgk}
				\tr(\kgm^\mu\kgm^\nu)=2^m\gk^{\mu \nu}.
			\end{equation}
			Since $\tw^2=\mbb$, it follows that $\rho$ is an involution. Consequently, we obtain the splitting $M_{2^m}(\mathbb{C})=M_{2^m}(\mathbb{C})^+\oplus M_{2^m}(\mathbb{C})^-$, with $\rho(a^\pm)=\pm a^\pm$ for any $a^\pm\in M_{2^m}(\mathbb{C})^\pm$.

			The $\tgm^a$'s are related to the $\tgm^\mu$'s through the relation $\tgm^a=e_{\mu}^a\tgm^\mu$. Since the $\tgm^a$'s are mutually orthogonal, i.e.,$\tr(\tgm^a\tgm^b)=0$ for $a\neq b$, we choose the representation of the $\tgm^a$'s such that each $\tgm^a$ resides in either $M_{2^m}(\mathbb{C})^+$ or $M_{2^m}(\mathbb{C})^-$ 
			\begin{equation}
				\label{EqRHoGam}
				\rho(\tgm^a)\defeq s(a)\tgm^a\qquad\text{with}\qquad s(a)=\pm 1\quad \text{for}\quad \tgm^a\in M_{2^m}(\mathbb{C})^\pm.
			\end{equation}
			This will be a very convenient choice.

			The following proposition shows the connection of the signature of $\gr$ and $\gk$ in the local orthonormal basis.
			\begin{proposition}
				\label{PropChgMets}
				The metric $\gk^{ab}$ is $0$ if $a\neq b$ with signature $\gk^{a}=s(a)$ or equivalently
				\begin{equation}
					\label{EqRelMetri}
					\gk^{ab}=\gk^{a}\gr^{ab}.
				\end{equation}
			\end{proposition}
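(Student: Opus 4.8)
The plan is to evaluate the trace defining $\gk^{ab}$ directly from the definition $\kgm^a = \tw\tgm^a$ and reduce it to the trace defining $\gr^{ab}$ by letting the twist act on the $\tgm^a$'s. First I would use the structural facts established in this top-down setting, namely $\tw=\tw^\dagger$, $\tw^2=\mbb$, and $\rho(\, \cdot\,)=\tw(\, \cdot\,)\tw$, together with the adapted choice \eqref{EqRHoGam}, which says $\rho(\tgm^a)=\tw\tgm^a\tw=s(a)\tgm^a$ with $s(a)=\pm1$. Substituting $\kgm^a=\tw\tgm^a$ then gives
\begin{align*}
	\kgm^a\kgm^b=\tw\tgm^a\tw\tgm^b=\rho(\tgm^a)\tgm^b=s(a)\,\tgm^a\tgm^b.
\end{align*}

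Next I would take the trace and invoke the metric formulas \eqref{EqMetgk} and \eqref{EqFty}, obtaining
\begin{align*}
	2^m\gk^{ab}=\tr(\kgm^a\kgm^b)=s(a)\,\tr(\tgm^a\tgm^b)=s(a)\,2^m\gr^{ab},
\end{align*}
hence $\gk^{ab}=s(a)\gr^{ab}$. Since the local basis is $\gr$-orthonormal, i.e. $\gr^{ab}=0$ for $a\neq b$ and $\gr^{aa}=1$, this immediately yields $\gk^{ab}=0$ for $a\neq b$ and $\gk^{aa}=s(a)$, that is $\gk^a=s(a)$, which is exactly the claimed relation \eqref{EqRelMetri}.

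The computation itself is short; the point requiring care is the placement of the two factors of $\tw$ so that the middle combination $\tw\tgm^a\tw$ is recognized as $\rho(\tgm^a)$, with no leftover factor of $\tw$ surviving — this relies crucially on $\tw^2=\mbb$. I would also confirm consistency with the symmetry $\gk^{ab}=\gk^{ba}$ forced by cyclicity of the trace: the relation $\gk^{ab}=s(a)\gr^{ab}$ is symmetric only because $\gr^{ab}$ vanishes off the diagonal, so the apparent asymmetry between $s(a)$ and $s(b)$ never actually arises. The orthogonality $\tr(\tgm^a\tgm^b)=0$ for $a\neq b$ is therefore not merely convenient but needed for the statement to be well posed, and it is precisely the choice \eqref{EqRHoGam} placing each $\tgm^a$ wholly in $M_{2^m}(\mathbb{C})^{\pm}$ that makes $s(a)$ a well-defined sign on each basis element.
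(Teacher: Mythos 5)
Your proposal is correct and follows essentially the same route as the paper: the paper applies the twist to $\kgm^a$ (writing $\tr(\tgm^a\tgm^b)=\tr(\rho(\kgm^a)\kgm^b)=s(a)\tr(\kgm^a\kgm^b)$ to get $\gr^{ab}=s(a)\gk^{ab}$), whereas you apply it to $\tgm^a$ and get the mirrored identity $\gk^{ab}=s(a)\gr^{ab}$; since $s(a)^2=1$ these are the same statement, and both rely on exactly the same inputs, namely \eqref{EqRHoGam}, $\tw^2=\mbb$, the trace formulas \eqref{EqFty} and \eqref{EqMetgk}, and the $\gr$-orthonormality of the basis.
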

			\begin{proof}
				equation \eqref{EqRHoGam} is equivalent to
				\begin{equation}
					\label{EqRLM}
					\rho(\kgm^a)=s(a)\kgm^a.
				\end{equation}
				Then we have 
				\[
				2^m\gr^{ab}=\tr(\tgm^a \tgm^b )=\tr(\rho(\kgm^a)\kgm^b )=s(a) \tr(\kgm^a \kgm^b )=s(a)2^m\gk^{ab}
				\]
				and then $\gr^{ab}=s(a)\gk^{ab}$ which implies that $\gk^{ab}=0$ for $a\neq b$ and $\gr^{a}=s(a)\gk^{a}$ so that $s(a)=\gk^{a}$ and $\gr^{ab}=\gk^{a}\gk^{ab}$.
			\end{proof}
			We recover the result from relation \eqref{EqRelMetRho}, i.e., $\rho(\kgm^a)=\gk^{a}\kgm^a$. The number of "temporal" directions $n$ is then defined as $\card\{a\, \mid\, \gk^{a}=1\}$, and the grading $\Gamma$ is obtained from the $\kgm^a$'s through equation \eqref{DefOpGrad}.
			
%
%
		 
			We now require the $\tgm^a$'s to satisfy the twisted Clifford relation of equation \eqref{EqRepTw2} 
			\begin{equation}
				\label{EqMetgtw}
				2\gr^{ab}\mbb = \tgm^a\tgm^b+\rho(\tgm^b\tgm^a).
			\end{equation} 
			This can be interpreted as a compatibility condition between $\tw$ and the $\tgm^a$ matrices, which becomes necessary to obtain the following Clifford algebraic relation
			\begin{equation}
				\label{EqMetgkcli}
				2\gk^{ab}\mbb = \kgm^a\kgm^b+\kgm^b\kgm^a,
			\end{equation}
			to connect the dual $\kgm^a$'s with the metric $\gk$.
			
			The relation between $\gr$ and $\gk$ at the level of the Clifford algebraic structures in equations \eqref{EqMetgtw} and \eqref{EqMetgkcli} is given by observing that equation \eqref{EqMetgtw} is equivalent to $\rho(\kgm^a)\kgm^b+\kgm^b\rho(\kgm^a)$ so that 
			\begin{align*}
				2\gr^{ab}\mbb &= \rho(\kgm^a)\kgm^b+\kgm^b\rho(\kgm^a)=\gk^{a} (\kgm^a\kgm^b+\kgm^b\kgm^a)=2\gk^{a}	\gk^{ab}\mbb,
			\end{align*}
			using equation \eqref{EqRelMetri}.

			We have in KO-dimension 6 that $2(n-m)=8N+6$ where $N\in\mathbb{Z}$, implying that $n=4N+m+3$. This shows that $n$ is odd if $m$ is even, which further implies that $\epsilon^\prime=-1$ (from proposition \ref{PropRelKJG}), and that $n$ is even if $m$ is odd, leading to $\epsilon^\prime=1$. 
			
			We now focus on the four-dimensional case. Since $m$ is even, it follows that $\epsilon^\prime=-1$ and $n$ is odd. The following proposition presents a significant result of this article, demonstrating how the algebraic constraints of the spectral triple structure of the almost-commutative Manifold are sufficiently strong to explicitly determine $\tw$ and the signature of the induced metric. 
			
			\begin{proposition}
				In dimension 4, the fundamental symmetry $\tw$ induces the transformation from the metric $\gr$ with signature $(+, +, +, +)$ to the metric $\gk$ with signature $(+, -, -, -)$ when $\epsilon=-1$, and to $(+, +, +, -)$ when $\epsilon=1$.
			\end{proposition}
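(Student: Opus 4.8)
The plan is to turn the statement into a short case analysis on $\epsilon$, exploiting the fact that $\epsilon$ is tied to the number $n$ of positive directions of $\gk$ through Proposition~\ref{PropRelKJG}. First I would record how the signature of $\gk$ is read off from the signs $s(a)$. By Proposition~\ref{PropChgMets}, equation~\eqref{EqRelMetri} gives $\gk^{ab}=\gk^{a}\gr^{ab}$; since $\gr$ is the positive-definite metric with $\gr^{aa}=1$ in the orthonormal basis, this collapses to $\gk^{aa}=\gk^{a}=s(a)=\pm1$ and $\gk^{ab}=0$ for $a\neq b$. Thus $\gk$ is diagonal with exactly $n=\card\{a\mid\gk^{a}=1\}$ entries $+1$ and $2m-n=4-n$ entries $-1$, i.e.\ signature $(n,4-n)$, whereas $\gr$ has signature $(4,0)=(+,+,+,+)$.

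Next I would pin down the admissible values of $n$. In dimension $2m=4$ we have $m=2$, and the KO-dimension $6$ constraint already forced $n$ to be odd; as $0\le n\le 4$, only $n\in\{1,3\}$ survive, giving the two candidate signatures $(1,3)=(+,-,-,-)$ and $(3,1)=(+,+,+,-)$. The whole proposition therefore reduces to matching each of these two values of $n$ to the correct sign of $\epsilon$.

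The core step is then the relation $\tw J=\epsilon J\tw$ supplied, for odd $n$, by Proposition~\ref{PropRelKJG} in the explicit form $\epsilon=(-1)^{n(3n-1)/2}$. I would evaluate this sign by writing $n=2k+1$: then $n(3n-1)/2=(2k+1)(3k+1)\equiv k+1 \pmod{2}$, so $\epsilon=(-1)^{k+1}$. Hence $n=1$ (i.e.\ $k=0$) yields $\epsilon=-1$, and $n=3$ (i.e.\ $k=1$) yields $\epsilon=+1$. Reading this backwards, $\epsilon=-1$ selects $n=1$ and the signature $(+,-,-,-)$, while $\epsilon=+1$ selects $n=3$ and the signature $(+,+,+,-)$, which is exactly the asserted correspondence; the transformation itself is the $\tw$-morphism action $\gr\mapsto\gk$ encoded in equations~\eqref{EqRHoGam}--\eqref{EqRelMetri}.

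The main obstacle, and the only genuinely delicate point, is the correct evaluation of the parity $(-1)^{n(3n-1)/2}$ in Proposition~\ref{PropRelKJG} and its consistent reading against the signature convention. One must track how the phase $i^{-n(n-1)/2}$ in the explicit expression~\eqref{ExplicitK} for $\tw$ (now built from the dual unitaries $\kgm^{(i)}$) interacts with the antilinearity of $J$ through $Ji=-iJ$, and verify that a direction with $\gk^{a}=+1$ is indeed the one counted as temporal. Once this sign bookkeeping is fixed, the two cases fall out immediately from the parity of $n$.
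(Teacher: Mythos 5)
Your route is genuinely different from the paper's, and your final sign computation $\epsilon=(-1)^{n(3n-1)/2}=(-1)^{k+1}$ for $n=2k+1$ is correct, but the argument has a gap at its central step. The paper's proof assumes no explicit formula for $\tw$: it expands the abstract fundamental symmetry in the basis $\{\bbbone_4,\Gamma^a,\Gamma^{ab},\Gamma^{abc},\Gamma\}$ of $M_4(\mathbb{C})$, uses $\tw\Gamma=-\Gamma\tw$ to kill the even-degree terms, uses the intertwining relations $\tw\kgm^e=\gk^{e}\kgm^e\tw$ to reduce $\tw$ to a single monomial $\kgm^{(a_1)}$ or $i\,\kgm^{(a_1)}\kgm^{(a_2)}\kgm^{(a_3)}$, and only then reads off $\epsilon$ from $J\kgm^{(a)}=-\kgm^{(a)}J$ and the signature from which $\kgm^a$ are fixed by $\rho$. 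You instead invert the map $n\mapsto\epsilon$ supplied by Proposition \ref{PropRelKJG}. But that proposition was proved for the explicit operator \eqref{ExplicitK} of the bottom-up construction, whereas in the top-down setting of Section \ref{sec Almost Commutative manifold} the operator $\tw$ is constrained only by $\tw=\tw^\dagger$, $\tw^2=\bbbone$, $\tw\Gamma=-\Gamma\tw$, $\tw J=\epsilon J\tw$ and $\rho(\kgm^a)=\gk^{a}\kgm^a$. Invoking Proposition \ref{PropRelKJG} therefore presupposes precisely what the paper's proof establishes, namely that these constraints force $\tw$ to coincide with the explicit product of the $\kgm^{(i)}$'s. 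You flag the delicate point as ``sign bookkeeping'' of the phase $i^{-n(n-1)/2}$, but the missing step is the identification of the operator itself, not of its phase.

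The gap is fillable: since the $\kgm^a$ satisfy the Clifford relation \eqref{EqMetgkcli} for the non-degenerate diagonal metric $\gk$, they generate $M_4(\mathbb{C})$, so $\rho$ is determined by its values on them, and any two unitary self-adjoint implementers of $\rho$ differ by a factor $\pm 1$, which is invisible in the (anti)commutation with the antilinear $J$; with this uniqueness remark Proposition \ref{PropRelKJG} does transfer and your inversion closes. Two further observations. First, your appeal to the KO-dimension constraint to obtain $n$ odd is not what the paper uses (and taken literally, $2(n-m)\equiv 6\bmod 8$ with $m=2$ pins $n\equiv 1\bmod 4$, which would leave only $n=1$); the paper derives $n\in\{1,3\}$ purely from $\tw\Gamma=-\Gamma\tw$, which forces $\tw$ to have odd Clifford degree. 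Second, your inversion only gives uniqueness of the signature for a given $\epsilon$, while the paper's case analysis also exhibits a consistent $\tw$ in each case, i.e.\ existence.
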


			\begin{proof}
				We aim to demonstrate that the conditions
				\begin{equation}
					\tw^\dagger=\tw,\qquad \qquad \tw \Gamma=-\Gamma\tw, \qquad \qquad \tw J=\epsilon J\tw 
				\end{equation}
				which arise from the structure of the spectral triple of the almost-commutative manifold, together with the relation $\rho(\kgm^a)=\gk^{a}\kgm^a$, implies that only one index $a$ for which $\gk^{a}=1$ when $\epsilon=-1$, and exactly three such indices when $\epsilon=1$.

				We have that the matrices $\Gamma^a\defeq\alpha_a\kgm^{(a)}$ where $\alpha_a$ is a complex number such that $(\Gamma^a)^\dagger=\Gamma^a$ and $(\Gamma^a)^2=\bbbone_4$ constitute a free family due to the metric relation, see equation \eqref{EqMetgkcli}. This free family can be extended by incorporating the elements $\bbbone_4$ and the grading operator $\Gamma$ (see equation \eqref{DefOpGrad}), together with the elements 
				\begin{align*}
					&\Gamma^{ab}\defeq \alpha_{ab}\kgm^{(a)}\kgm^{(b)}\qquad\qquad\,\,\,\,\text{with}\qquad a<b\\
					&\Gamma^{abc}\defeq \alpha_{abc}\kgm^{(a)}\kgm^{(b)}\kgm^{(c)}\qquad\quad\text{with}\qquad a<b<c
				\end{align*}
				
				with $\alpha_{ab}$ and $\alpha_{abc}$ being complex numbers such that $(\Gamma^{ab})^\dagger=\Gamma^{ab}$ and $(\Gamma^{ab})^2=\bbbone_4$, and similarly $(\Gamma^{abc})^\dagger=\Gamma^{abc}$ and $(\Gamma^{abc})^2=\bbbone_4$.  
				
				There are $6$ elements $\Gamma^{ab}$ and $4$ elements $\Gamma^{abc}$, such that the set $\calI\defeq\{\bbbone_4, \{\Gamma^a\}_a, \{\Gamma^{ab}\}_{a<b}, \{\Gamma^{abc}\}_{a<b<c}, \Gamma\}$ forms a free family of $16$ orthogonal elements\footnote{The metric properties ensure the freeness of the family, implying that the trace of any product of two distinct elements in $\calI$ vanishes, thus guaranteeing orthogonality.}. 
				
				Since $\calI$ is a free family of 16 elements, it is therefore also a basis of $ M_4(\mathbb{C})$ so that we can write $\tw\in  M_4(\mathbb{C})$ according to this basis
				\begin{equation}
					\tw=\lambda\bbbone_4 + e_a\Gamma^a+ e_{bc}\Gamma^{bc}+ e_{def}\Gamma^{def}+ \epsilon \Gamma
				\end{equation}
				where $\lambda, e_a, e_{bc}, e_{def}, \epsilon$ are complex numbers.
				
				The condition $\tw \Gamma=-\Gamma\tw$ then implies that $\lambda, e_{ab}$ and $\epsilon$ must be zero because $\Gamma\kgm^{(a)}=-\kgm^{(a)}\Gamma$ for any $a$. Consequently, $\tw$ can be expressed as follows
				\begin{equation}
					\tw=e_a\Gamma^a+e_{bcd}\Gamma^{bcd}.
				\end{equation}

				Now using the fact that $\tw\kgm^e=\gk^{e}\kgm^e\tw$, we have that
				\begin{equation*}
					(e_a\Gamma^a+e_{bcd}\Gamma^{bcd})\kgm^e = \gk^{e}\kgm^e (e_a\Gamma^a+e_{bcd}\Gamma^{bcd}).
				\end{equation*}
				We are left with two possibilities
				\begin{itemize}
					\item If $\gk^{e}=1$ for a given (fixed) index $e$, we then have that $e_a=0$ if $a\neq e$ and $e_{bcd}=0$ if $e\notin \{b, c, d\}$. As $e$ is a fixed index, there remain three distinct fixed indices $a_1,a_2,a_3 $ so that $a_i\neq e$ for $i=1, 2, 3$. In this way, $\tw$ takes the form
					\begin{equation}
						\tw=e_e\Gamma^e+e_{(ea_1a_2)}\Gamma^{(ea_1a_2)}+e_{(ea_1a_3)}\Gamma^{(ea_1a_3)}+e_{(ea_2a_3)}\Gamma^{(ea_2a_3)}
					\end{equation}
					where the notation $(abc)$ in $e_{(abc)}\Gamma^{(abc)}$ indicates that the indices $a, b, c$ are ordered in increasing magnitude, e.g., if $b<c<a$ then $e_{(abc)}\Gamma^{(abc)}=e_{bca}\Gamma^{bca}$.
					
					Now if we consider the relation $\tw\kgm^{a_1} =\gk^{a_1} \kgm^{a_1} \tw$, this very relation implies that either $e_{(ea_2a_3)}=0$ or $e_{(ea_1a_2)}=e_{(ea_1a_3)}=0$ according to the sign of $\gk^{a_1}$. In the case $e_{(ea_1a_2)}\neq 0$ and $e_{(ea_1a_3)}\neq 0$ we can now consider the relation $\tw\kgm^{a_2} =\gk^{a_2} \kgm^{a_2} \tw$, this implies that either $e_{(ea_1a_2)}=0$ or $e_{(ea_1a_3)}= 0$ so that in any case, only one of the $3$ coefficients $e_{(ea_1a_2)}, e_{(ea_1a_3)}, e_{(ea_2a_3)}$ will be non-zero. Thus, $\tw$ ultimately takes the form 
					\begin{align}
						\label{EqFormKa}
						\tw=e_e\Gamma^e+e_{(ea_1a_2)}\Gamma^{(ea_1a_2)}
					\end{align}
					
					\item If $\gk^{e}=-1$ for a given (fixed) $e$, we then have that $e_a=0$ if $a= e$ and $e_{bcd}=0$ if $e\in \{b, c, d\}$. We then have 
					\begin{equation}
						\tw=e_{a_1}\Gamma^{a_1}+e_{a_2}\Gamma^{a_2}+e_{a_3}\Gamma^{a_3}+e_{(a_1a_2a_3)}\Gamma^{(a_1a_2a_3)}
					\end{equation}
					where $a_i\neq e$.

					Now if we consider the relation $\tw\kgm^{a_1} =\gk^{a_1}\kgm^{a_1} \tw$, then either $e_{a_1}=0$ or $e_{a_2}=e_{a_3}=0$. In the case $e_{a_2}\neq 0$ and $e_{a_3}\neq 0$, considering the relation $\tw\kgm^{a_2} =\gk^{a_2} \kgm^{a_2} \tw$ leads to the fact that either $e_{a_2}=0$ or $e_{a_3}=0$ so that in any case, only one of the indices $e_{a_i}$ is non-zero. Then $\tw$ takes the form
					\begin{equation}
						\label{EqFormKb}
						\tw=e_{a_1}\Gamma^{a_1}+e_{(a_1a_2a_3)}\Gamma^{(a_1a_2a_3)}
					\end{equation}
					which is equivalent to \eqref{EqFormKa}.
				\end{itemize}

				Then, in any case, $\tw$ is now given by \eqref{EqFormKb}. Now let's consider again $\tw\kgm^{a_2} =\gk^{a_2} \kgm^{a_2} \tw$ on this last expression, this clearly implies that either $e_{a_1}=0$ or $e_{(a_1a_2a_3)}=0$. In all cases, the condition $\tw^\dagger=\tw$ leads to either $e_{a_1}=1$ or $e_{(a_1a_2a_3)}=1$.

				\begin{itemize}
					\item In the case $\tw=\Gamma^{a_1}$, we have that $\rho(\kgm^{a_1})=\kgm^{a_1}$. Then knowing that $\rho(\kgm^a)=(\kgm^a)^\dagger$, we get that $(\kgm^{a_1})^\dagger=\kgm^{a_1}$ inducing that $\tw=\Gamma^{a_1}=\kgm^{(a_1)}$. This is only consistent with the case $\epsilon=-1$ because $J\kgm^{(a)}=-\epsilon_1^\tw\kgm^{(a)}J=-\kgm^{(a)}J$. Consequently, $\rho(\kgm^a)=\gk^{a}\kgm^a=\kgm^a$ holds exclusively for $a=a_1$, leading to the signature $(+, -, -, -)$ for $\gk$.

					\item Now if $\tw=\Gamma^{a_1a_2a_3}$ with $a_1<a_2<a_3$, for the same reason we have that $\kgm^{a_1}, \kgm^{a_2}, \kgm^{a_3}$ are selfadjoint. Then the requirement $\tw=\tw^\dagger$ becomes
					\begin{equation*}
						\alpha_{a_1 a_2 a_3}\kgm^{(a_1)}\kgm^{(a_2)}\kgm^{(a_3)}=\alpha^*_{a_1 a_2 a_3}\kgm^{(a_3)}\kgm^{(a_2)}\kgm^{(a_1)}=-\alpha^*_{a_1 a_2 a_3}\kgm^{(a_1)}\kgm^{(a_2)}\kgm^{(a_3)}
					\end{equation*}
					inducing that $\alpha_{a_1 a_2 a_3}=i$. Then the condition $J\kgm^{(a)}=-\kgm^{(a)}J$ implies that 
					\begin{equation*}
						J \tw=J (i\kgm^{(a_1)}\kgm^{(a_2)}\kgm^{(a_3)})=-i(-1)^3 \kgm^{(a_1)}\kgm^{(a_2)}\kgm^{(a_3)} J =\tw J.
					\end{equation*}
					This is consistent only with the case $\epsilon=1$. Consequently, $\rho(\kgm^a)=\gk^{a}\kgm^a=\kgm^a$ holds exclusively for $a\in\{a_1,a_2,a_3\}$, leading to the signature $(+, +, +, -)$ for $\gk$.
				\end{itemize}
			\end{proof}
			
			This structure consequently leads to the uniqueness of the time dimension in four dimensions, as the two cases $\epsilon=\pm 1$ are merely different conventions when time is associated with the negative metric component in the case $\epsilon=1$. This is a very specific and welcoming result, revealing the connection between the signature and the algebraic constraints of the almost-commutative spectral triple.

			In the case $\epsilon=-1$, which corresponds to the standard Lorentzian case described at the end of subsection \ref{SubsecPseudoRiem}, renaming $a_1$ as $0$ leads to $\tgm^{(0)}=\bbbone_4$ and $\tw=\kgm^{(0)}$, as we obtained in subsection \ref{SubsecPseudoRiem}. In this case, the evaluation of $\Dir_p$ in equation \eqref{EqEvalDir} is nothing else than the Dirac Lagrangian. This is a scalar (invariant) under Lorentz transformations, being $\tw$-unitary operators. We retrieve all the necessary components for a Lorentzian version of the Standard Model (apart from the spectral action), notably the correct form for the mass term given by $\tw\otimes \FDir$. Note that the twisted spectral triple used to deduce the Lorentzian one has also been used in the context of the noncommutative standard model of particle physics in \cite{bochniak2020spectral}, recalling that the twist acts trivially on $\CM$.

			\newpage
			\section{Conclusion}
			The role played by $\tw$ is fundamental in most of the presented structures. Twisted spectral triples and pseudo-Riemannian spectral triples appear as generalizations of the usual spectral triple, recovered in the special case $\tw=\bbbone$. This is precisely at this level that the $\tw$-morphism allows the desired signature change. The remaining question was why this may occur. This very question motivates the study of the so-called top-down approach presented in section \ref{sec Almost Commutative manifold}, which places the twisted spectral triple within a product spectral triple, following the spirit of the noncommutative standard model of particle physics. 
			
			The main assertion of this article has been the proposition that pseudo-Riemannian spectral triples may be seen as emergences from twisted spectral triples within an almost-commutative spectral triple structure. The KO-dimension of the emerged pseudo-Riemannian spectral triple and that of the finite noncommutative one are the same. In the end, the Almost-commutative spectral triple can be seen as a first given object, where the twisted structure arises as a consequence of the "entanglement" between the two spectral triples of the product through the operator $\tw$. The special twisted Clifford algebraic structure then appears as a modified way to algebraically represent the tangent space $\tm$ and retrieve the metric within the algebraic relations.

			In the four-dimensional case, we recover $\twx=\gamma^{(0)}(x)\defeq \gamma^a|_{a=0}(x)$, the fundamental symmetry which appears in Lorentzian quantum field theory, specific to the signature $(+, -, -, -)$. This very operator is linked to the finite noncommutative geometry of the Almost-commutative spectral triple and acts as a local fixed point in the theory, providing the mass term and enabling the definition of the desired coordinate-independent Krein structure, unsensitive to Lorentz transformations.
			
			This article focuses on special cases of physical interest. We leave the generalization of these results for future study. This will, in particular, concern the study of what happens in the odd-dimensional manifold case, the exploration of more general KO-dimensions (not only the KO-dimension 6), and the characterization of the emerged signature in these more general contexts. An exploration of the potential connection with Tomita-Takesaki theory is also envisioned.

			We hope the presented structure will offer concrete avenues for the study of General Relativity within the framework of spectral triples. 	The specific algebraic constraints arising from the noncommutative geometrical formulation of Yang-Mills-Higgs theory are algebraically linked to the causal structure of spacetime through the fundamental symmetry operator $\tw$. This conceptual connection may have significant physical implications that I intend to investigate in future works.

			\section*{Acknowledgements}
			
			I would like to express my sincere gratitude to T. Masson, S. Lazzarini, and A. Sitarz for the insightful discussions and valuable exchanges I had with them throughout the development of this work.

			%
			%
			%

			\newpage
			\bibliography{bibliography}

\begin{thebibliography}{29}
\providecommand{\natexlab}[1]{#1}
\providecommand{\url}[1]{\texttt{#1}}
\expandafter\ifx\csname urlstyle\endcsname\relax
  \providecommand{\doi}[1]{doi: #1}\else
  \providecommand{\doi}{doi: \begingroup \urlstyle{rm}\Url}\fi

\bibitem[Barrett(2007)]{barrett2007lorentzian}
John~W Barrett.
\newblock Lorentzian version of the noncommutative geometry of the standard
  model of particle physics.
\newblock \emph{Journal of mathematical physics}, 48\penalty0 (1), 2007.

\bibitem[Bizi et~al.(2018)Bizi, Brouder, and Besnard]{bizi2018space}
Nadir Bizi, Christian Brouder, and Fabien Besnard.
\newblock Space and time dimensions of algebras with application to lorentzian
  noncommutative geometry and quantum electrodynamics.
\newblock \emph{Journal of Mathematical Physics}, 59\penalty0 (6), 2018.

\bibitem[Bochniak and Sitarz(2018)]{bochniak2018finite}
Arkadiusz Bochniak and Andrzej Sitarz.
\newblock Finite pseudo-riemannian spectral triples and the standard model.
\newblock \emph{Physical Review D}, 97\penalty0 (11):\penalty0 115029, 2018.

\bibitem[Bochniak and Sitarz(2020)]{bochniak2020spectral}
Arkadiusz Bochniak and Andrzej Sitarz.
\newblock Spectral geometry for the standard model without fermion doubling.
\newblock \emph{Physical Review D}, 101\penalty0 (7):\penalty0 075038, 2020.

\bibitem[Chamseddine and Connes(1997)]{chamseddine1997spectral}
Ali~H Chamseddine and Alain Connes.
\newblock The spectral action principle.
\newblock \emph{Communications in Mathematical Physics}, 186\penalty0
  (3):\penalty0 731--750, 1997.

\bibitem[Chamseddine et~al.(2007)Chamseddine, Connes, and
  Marcolli]{chamseddine2007gravity}
Ali~H Chamseddine, Alain Connes, and Matilde Marcolli.
\newblock Gravity and the standard model with neutrino mixing.
\newblock \emph{Advances in Theoretical and Mathematical Physics}, 11\penalty0
  (6):\penalty0 991--1089, 2007.

\bibitem[Connes and Lott(1990)]{ConnLott90a}
A.~Connes and J.~Lott.
\newblock Particle models and noncommutative geometry.
\newblock \emph{Nuclear Phys. B Proc. Suppl.}, 18B:\penalty0 29--47, 1990.

\bibitem[Connes(1996)]{connes1996gravity}
Alain Connes.
\newblock Gravity coupled with matter and the foundation of non-commutative
  geometry.
\newblock \emph{Communications in Mathematical Physics}, 182:\penalty0
  155--176, 1996.

\bibitem[Connes(2013)]{connes2013spectral}
Alain Connes.
\newblock On the spectral characterization of manifolds.
\newblock \emph{Journal of Noncommutative Geometry}, 7\penalty0 (1):\penalty0
  1--82, 2013.

\bibitem[Connes and Lott(1989)]{connes1989particle}
Alain Connes and John Lott.
\newblock Particle models and noncommutative geometry.
\newblock \emph{Nucl. Phys. B}, 18:\penalty0 29--47, 1989.

\bibitem[Connes and Marcolli(2006)]{connes2006quantum}
Alain Connes and Matilde Marcolli.
\newblock \emph{Quantum fields and motives}, volume~56.
\newblock Elsevier, 2006.

\bibitem[Connes and Moscovici(2006)]{connes2006type}
Alain Connes and Henri Moscovici.
\newblock Type iii and spectral triples.
\newblock \emph{arXiv preprint math/0609703}, 2006.

\bibitem[Devastato and Martinetti(2017)]{TwistSpontBreakDevastaMartine2017}
Agostino Devastato and Pierre Martinetti.
\newblock Twisted spectral triple for the standard model and spontaneous
  breaking of the grand symmetry.
\newblock \emph{Mathematical Physics, Analysis and Geometry}, 20:\penalty0
  1--43, 2017.

\bibitem[Devastato et~al.(2018)Devastato, Farnsworth, Lizzi, and
  Martinetti]{devastato2018lorentz}
Agostino Devastato, Shane Farnsworth, Fedele Lizzi, and Pierre Martinetti.
\newblock Lorentz signature and twisted spectral triples.
\newblock \emph{Journal of High Energy Physics}, 2018\penalty0 (3):\penalty0
  1--21, 2018.

\bibitem[Dubois-Violette et~al.(1990)Dubois-Violette, Kerner, and
  Madore]{DuboKernMado90b}
M.~Dubois-Violette, R.~Kerner, and J.~Madore.
\newblock Noncommutative differential geometry of matrix algebras.
\newblock \emph{J. Math. Phys.}, 31:\penalty0 316, 1990.

\bibitem[D’Andrea et~al.(2016)D’Andrea, Kurkov, and Lizzi]{d2016wick}
Francesco D’Andrea, Maxim~A Kurkov, and Fedele Lizzi.
\newblock Wick rotation and fermion doubling in noncommutative geometry.
\newblock \emph{Physical Review D}, 94\penalty0 (2):\penalty0 025030, 2016.

\bibitem[Franco(2014)]{franco2014temporal}
Nicolas Franco.
\newblock Temporal lorentzian spectral triples.
\newblock \emph{Reviews in Mathematical Physics}, 26\penalty0 (08):\penalty0
  1430007, 2014.

\bibitem[Landi and Martinetti(2016)]{TwistLandiMarti2016}
Giovanni Landi and Pierre Martinetti.
\newblock On twisting real spectral triples by algebra automorphisms.
\newblock \emph{Letters in Mathematical Physics}, 106\penalty0 (11):\penalty0
  1499--1530, 2016.

\bibitem[Landi and Martinetti(2018)]{TwistGaugeLandiMarti2018}
Giovanni Landi and Pierre Martinetti.
\newblock Gauge transformations for twisted spectral triples.
\newblock \emph{Letters in Mathematical Physics}, 108:\penalty0 2589--2626,
  2018.

\bibitem[Lawson and Michelsohn(2016)]{lawson2016spin}
H~Blaine Lawson and Marie-Louise Michelsohn.
\newblock \emph{Spin Geometry (PMS-38), Volume 38}, volume~20.
\newblock Princeton university press, 2016.

\bibitem[Martinetti et~al.(2024)Martinetti, Nieuviarts, and
  Zeitoun]{martinetti2024torsion}
Pierre Martinetti, Gaston Nieuviarts, and Ruben Zeitoun.
\newblock Torsion and lorentz symmetry from twisted spectral triples.
\newblock \emph{preprint arXiv:2401.07848}, 2024.

\bibitem[Nieuviarts()]{nieuvsignchange}
G~Nieuviarts.
\newblock Signature change by an isomorphism of spectral triples.
\newblock \emph{arXiv preprint arXiv:2402.05839}.

\bibitem[Paschke and Sitarz(2006)]{paschke2006equivariant}
Mario Paschke and Andrzej Sitarz.
\newblock Equivariant lorentzian spectral triples.
\newblock \emph{arXiv preprint math-ph/0611029}, 2006.

\bibitem[Ponge and Wang(2016)]{ponge2016index}
Rapha{\"e}l Ponge and Hang Wang.
\newblock Index map, $\sigma$-connections, and connes--chern character in the
  setting of twisted spectral triples.
\newblock 2016.

\bibitem[Ponge and Wang(2018)]{ponge2018noncommutative}
Rapha{\"e}l Ponge and Hang Wang.
\newblock Noncommutative geometry and conformal geometry. i. local index
  formula and conformal invariants.
\newblock \emph{Journal of Noncommutative Geometry}, 12\penalty0 (4):\penalty0
  1573--1639, 2018.

\bibitem[Rivasseau and Connes(2007)]{rivasseau2007non}
Vincent Rivasseau and Alain Connes.
\newblock Non-commutative geometry and the spectral model of space-time.
\newblock In \emph{Quantum Spaces: Poincar{\'e} Seminar 2007}, pages 203--227.
  Springer, 2007.

\bibitem[Saharian(2020)]{saharian2020quantum}
AA~Saharian.
\newblock Quantum field theory in curved spacetime, 2020.

\bibitem[Strohmaier(2006)]{strohmaier2006noncommutative}
Alexander Strohmaier.
\newblock On noncommutative and pseudo-riemannian geometry.
\newblock \emph{Journal of Geometry and Physics}, 56\penalty0 (2):\penalty0
  175--195, 2006.

\bibitem[Van Den~Dungen(2016)]{van2016krein}
Koen Van Den~Dungen.
\newblock Krein spectral triples and the fermionic action.
\newblock \emph{Mathematical Physics, Analysis and Geometry}, 19:\penalty0
  1--22, 2016.

\end{thebibliography}

		\end{document}